\pgfplotsset{compat=1.10}
\theoremstyle{plain}
\newtheorem{thm}{Theorem}[section]
\newtheorem*{thm*}{\bf Theorem }
\newtheorem{lem}[thm]{Lemma}
\newtheorem{prop}[thm]{Proposition}
\newtheorem*{prop*}{\bf Proposition}
\newtheorem{cor}[thm]{Corollary}
\theoremstyle{definition}
\newtheorem{defn}[thm]{Definition}
\newtheorem{example}[thm]{Example}
\theoremstyle{remark}
\numberwithin{equation}{section}
\title[On the Homology of Unions of Certain Non-Degenerate Quadrics in General Position]{On the Homology of Unions of Certain Non-Degenerate Quadrics in General Position}
\author{Maximilian M{\"u}hlbauer}
\begin{document}
\tikzset{
    photon/.style={decorate, decoration={snake}, draw=black},
    electron/.style={draw=black, postaction={decorate},
        decoration={markings,mark=at position .55 with {\arrow[draw=black]{>}}}},
    gluon/.style={decorate, draw=black,
        decoration={coil,amplitude=4pt, segment length=5pt}}
}

\begin{abstract}
    Motivated by problems arising in the complex analysis of perturbative quantum field theory, we investigate the homology of finite unions of certain non-degenerate quadratic affine hypersurfaces of complex dimension $n$ in general position. The homology of such unions and the homology of $\mathbb{C}^{n+1}$ relative to such unions is decomposed into a direct sum of homology groups of the various possible intersections of these hypersurfaces. This allows us to compute the homology groups up to isomorphism. Furthermore, we compute an explicit set of generators for a specific arrangement of these hypersurfaces by constructing a sufficiently large CW-subcomplex of the union in question. The intersection indices of all generators with the Borel-Moore homology class of $(i\cdot\mathbb{R})^{n+1}$ in the complement of the union of surfaces are computed.
\end{abstract}

\maketitle
\tableofcontents

\section{Motivation and Introduction}\label{sec:introduction}
The motivation for this article is a physical one. In \cite{cutkosky1}, a conjecture regarding the analytical structure of holomorphic functions given by certain parameter-dependent integrals, known as Cutkosky's Theorem \cite{cutkosky}, is proven in a special case up to the computation of the involved intersection index. These functions arise in perturbative quantum field theory as so called Feynman integrals and their analytical structure is highly relevant in contemporary research (see for example \cite{analytic1}, \cite{analytic2}, \cite{analytic3}, \cite{analytic4} to name just a few). The integrals discussed in \cite{cutkosky1} are essentially of the form
\begin{equation}\label{eq:feynman_integral}
    \int_\Gamma\frac{u^{2\lambda-n-1}\cdot\Omega_{n+1}}{\prod_{i=1}^N((z+u\cdot a_i)^2-u^2\cdot r_i^2)^{\lambda_i}}.
\end{equation}
This is to be understood as follows: The natural number $n$ is assumed to be even. We denote the homogeneous coordinates of the complex projective space $\mathbb{C}\mathbb{P}^{n+1}$ by $[u:z_1:\cdots:z_{n+1}]=[u:z]$.\footnote{The reason for this notation is that the integral \eqref{eq:feynman_integral} arises from a compactification of an integral over $(i\cdot\mathbb{R})^{n+1}$ embedded in $\mathbb{C}\mathbb{P}^{n+1}$ via $z\mapsto[1:z]$. In this picture, the coordinate $u$ plays a special role and we think of the hyperplane with $u=0$ as the hyperplane at \enquote{infinity}.} The $\lambda_i\in\mathbb{C}$ are complex numbers with $\text{Re}\,\lambda_i>0$ for all $i\in\{1,\ldots,N\}$ and we set $\lambda:=\sum_{i=1}^N\lambda_i$. Furthermore, we denote
\begin{equation}
    \Omega_{n+1}:=u\cdot dz_1\wedge\cdots\wedge dz_{n+1}+\sum_{i=1}^{n+1}(-1)^i\cdot z_i\cdot du\wedge dz_1\wedge\cdots\wedge\widehat{z}_i\wedge\cdots\wedge dz_{n+1}.
\end{equation}
and the integration cycle $\Gamma$ is the closure of $(i\cdot\mathbb{R})^{n+1}$ embedded in $\mathbb{C}\mathbb{P}^{n+1}$ via the inclusion
\begin{equation}
    i:\mathbb{C}^{n+1}\hookrightarrow\mathbb{C}\mathbb{P}^{n+1}\qquad z\mapsto[1:z].
\end{equation}
The integral \eqref{eq:feynman_integral} also depends on complex parameters $a=(a_1,\ldots,a_N)\in(\mathbb{C}^{n+1})^N$ and $r=(r_1,\ldots,r_N)\in(\mathbb{C}^\times)^N$ and we seek to understand the integral as a holomorphic function in these parameters. Generally, this function is not single-valued and its analytic structure can be understood by isotopy techniques for singular integrals (see \cite{app-iso} or \cite{pham}). The holomorphic $(n+1)$-form which is integrated here is (for $\text{Re}\,\lambda$ sufficiently large) defined everywhere on $\mathbb{C}\mathbb{P}^{n+1}$, except on the set
\begin{equation}
    S(a,r):=\{[u:z]\in\mathbb{C}\mathbb{P}^{n+1} \;|\; \exists i\in\{1,\ldots,N\}:(z+u\cdot a_i)^2=u^2\cdot r_i^2\}.
\end{equation}
This set is the union of $N$ non-degenerate complex projective quadrics. Let us write the parameter space as $T:=(\mathbb{C}^{n+1})^N\times(\mathbb{C}^\times)^N$ and let $(a_\text{in},r_\text{in})\in T$ such that $S(a_\text{in},r_\text{in})\cap\mathbb{R}\mathbb{P}^{n+1}=\emptyset$. Then the integral \eqref{eq:feynman_integral} defines a holomorphic function in a neighborhood of $(a_\text{in},r_\text{in})$. It is well-known (see \cite{cutkosky1} for a proof with slightly different terminology) that the integral \eqref{eq:feynman_integral} can be analytically continued from this neighborhood to $(a,r)\in T$ as long as $\det((a_i-a_1)(a_j-a_1))_{2\leq i,j\leq N}\neq0$ and the \enquote{finite} parts
\begin{equation}\label{eq:quadrics}
    S_i(a_i,r_i):=\{z\in\mathbb{C}^{n+1} \;|\; (z-a_i)^2=r_i^2\}, \qquad i=1,\ldots,N
\end{equation}
of the quadrics are in general position as submanifolds of $\mathbb{C}^{n+1}$. The subset $L$ of $T$ at which this is not the case is called the \textit{Landau surface} of the integral \eqref{eq:feynman_integral}. To understand this analytic continuation, we can investigate the action of the fundamental group of $T\backslash L$ on the homology groups $H_k(\mathbb{C}\mathbb{P}^{n+1},S(a,r))$ and $H_k(\mathbb{C}\mathbb{P}^{n+1}-S(a,r))$: A (homotopy class of) a loop $\gamma$ in $T\backslash L$ based at $(a,r)$ induces an ambient isotopy of $S(a,r)$ in $\mathbb{C}\mathbb{P}^{n+1}$. Applying this isotopy to a homology class $\Gamma\in H_k(\mathbb{C}\mathbb{P}^{n+1}-S(a,r))$ yields a new class $\gamma_\ast\Gamma\in H_k(\mathbb{C}\mathbb{P}^{n+1}-S(a,r))$ and the result of analytically continuing along $\gamma$ can be written as the integral of the same form as in \eqref{eq:feynman_integral} over $\gamma_\ast\Gamma$. Since $\pi_1(T\backslash L)$ is generally non-trivial, this can give rise to a rich analytical structure. If the geometric situation at a point $(a_p,r_p)\in L$ is relatively tame (as is the case for unions of non-degenerate quadrics), i.e. if $S(a_p,r_p)$ only exhibits simple pinches (which appear at isolated points), the action of a small loop around $(a_p,r_p)$ can be described as follows: There exists $(a',r')\in T\backslash L$ near $(a_p,r_p)$ and and an open set $U$ near the pinch point such that $U\cap S(a',r')$ looks like a (complex) $(n+1-m)$-sphere (with $m$ the number of hypersurfaces pinching at $(a_p,r_p)$) which vanishes as $(a',r')\to(a_p,r_p)$. In this situation, the homology group $H_{n+1}(U-S(a',r'))$ (resp. $H_{n+1}(U,S(a',r'))$) is generated by a single element $[\tilde{e}]$ (resp. $[\mathbf{e}]$). It can be shown that the problem can be localized in the sense that a small loop around $(a_p,r_p)$ takes $\Gamma\in H_{n+1}(\mathbb{C}\mathbb{P}^{n+1}-S(a,r))$ to
\begin{equation}
    \Gamma+x\cdot\tilde{e},
\end{equation}
where $x\in\mathbb{Z}$ is an integer which can be computed by the Picard-Lefschetz formula \cite{app-iso}
\begin{equation}
    x=(-1)^\frac{(n-m)(n-m+1)}{2}\cdot\langle\Gamma|\mathbf{e}\rangle.
\end{equation}
While the local geometry of $S$ in $U$ might generally not give us much information about the global behaviour, we shall see that in our case we can actually understand the homology class of $\mathbf{e}$ as a class in the relative homology $H_{n+1}(\mathbb{C}^{n+1},\bigcup_{i=1}^{N}S_i(a_i',r_i'))$ of the \enquote{finite} parts of the hypersurfaces. The goal of this article is to establish a sufficient description of this homology group to compute the intersection index and thus the result of analytically continuing \eqref{eq:feynman_integral} along the loop $\gamma$.\\
Denote $S_\text{fin}(a,r):=\bigcup_{i=1}^NS_i(a_r,r_i)\subset\mathbb{C}^{n+1}$. The two main results of this paper are the following. First, we show that if the $N$ affine quadratic hypersurfaces $S_1(a,r),\ldots,S_N(a,r)$ from \eqref{eq:quadrics} are in general position, the homology decomposes as
\begin{equation}\label{eq:homology_decomposition_intro}
    H_k(\mathbb{C}^{n+1},S_\text{fin}(a,r))\simeq\widetilde{H}_k(S_\text{fin}(a,r))\simeq\bigoplus_{\substack{I\subset\{1,\ldots,N\} \\ I\neq\emptyset}}\widetilde{H}_{k-|I|+1}(\bigcap_{i\in I}S_i(a_i,r_i)).
\end{equation}
We also give an explicit isomorphism realizing this decomposition.\footnote{It should be remarked that the isomorphism is non-canonical and requires some choices.} Second, we determine explicit generators $[\mathbf{e}_I]$ (resp. $[e_I]$) for $H_k(\mathbb{C}^{n+1},S_\text{fin}(a,r))$ (resp. $\widetilde{H}_k(S_\text{fin}(a,r))$) adapted to this decomposition for a special choice of $a$ and $r$. We do this by constructing a sufficiently large CW-complex to encompass all generators. Our choice of generators makes it possible to easily compute the intersection index of them with $(i\cdot\mathbb{R})^{n+1}$ (viewed as a Borel-Moore homology class in the complement of $S_\text{fin}(a,r)$). In fact, we shall prove that our choice of generators satisfies
\begin{equation}
    \langle(i\cdot\mathbb{R})^{n+1}|\mathbf{e}_I\rangle=\begin{cases} 1 & \text{if }|I|=1 \\ 0 & \text{if }|I|\geq2. \end{cases}
\end{equation}
This article has the following structure: First, we recap some basics, in particular with regards to the notion of the intersection index, and fix some notation in Section \ref{sec:preliminaries}. In the next Section \ref{sec:complex_spheres}, we discuss some basic topological properties of the affine quadratic hypersurfaces investigated here. Subsequently, in Subsection \ref{subsec:homology}, we establish the decomposition \eqref{eq:homology_decomposition_intro} and give an explicit description of the involved isomorphisms. Finally, in Section \ref{sec:generators}, we construct a CW-subcomplex of $\mathbb{C}^{n+1}$ which is big enough to contain all generators. The sought generators are then computed explicitly and the relevant intersection indices evaluated. In the Conclusion \ref{sec:conclusion}, we give some outlook on the analytic implications of this article and what remains to be done.

\section{Preliminaries}\label{sec:preliminaries}
We assume here that the reader is sufficiently familiar with the basic concepts of algebraic topology. Most of the results employed here can be found in one of the standard textbooks on the subject, like \cite{dold} or \cite{hatcher}. Some of the background that is more specific to the application to singular integrals is covered in \cite{pham}. For more information about the motivation from complex analysis in perturbative quantum field theory, the reader is referred to \cite{cutkosky1}. In this section, we take some time to recap some facts from algebraic topology and review the notion of the intersection index in some detail.
\subsection{Some Reminders on Algebraic Topology}
As mentioned above, the reader is assumed to be sufficiently familiar with the basics of algebraic topology. Some notions employed in this article however, are not part of the standard repertoire and therefore we recall them briefly. We also take some time to fix some notation to avoid confusion.\\
When we talk about the $k$th homology group of a topological space $X$, we generally mean its singular homology group (with coefficients in $\mathbb{Z}$), denoted by $H_k(X)$. The $k$th reduced homology group is denoted by $\widetilde{H}_k(X)$. If $c$ is a chain in $X$, we denote its homology class by $[c]$. To compute the generators in Section \ref{sec:generators} however, we use a CW-structure of the given space and we identify singular and cellular homology in the standard fashion.\\
We assume the reader to know the Mayer-Vietoris sequence. In the literature, the existence of this long exact sequence is often times only stated for a decomposition of $X$ into two subspaces $A,B\subset X$ such that the interior of $A$ and $B$ cover $X$. In practice, this condition is often not met and it is necessary to find open neighborhoods $U,V\subset X$ of $A$ and $B$ such that $U$ deformation retracts to $A$, $V$ deformation retracts to $B$ and $U\cap V$ deformation retracts to $A\cap B$. This is often tricky to do. If $X$ is a CW-complex such that $A$ and $B$ are subcomplexes however, it is always possible to find such $U$ and $V$.
\begin{thm}[Mayer-Vietoris Sequence for Subcomplexes, \cite{hatcher}]\label{thm:mayer-vietoris}
    Let $X$ be a CW-complex and let $A,B\subset X$ be two subcomplexes such that $X=A\cup B$. Then there exists a long exact sequence
    \begin{equation}
        \cdots\to H_k(A\cap B)\to H_k(A)\oplus H_k(B)\to H_k(X)\overset{\delta}\to H_{k-1}(A\cap B)\to\cdots,
    \end{equation}
    where the \textit{Mayer-Vietoris homomorphism} $\delta$ acts as follows:\footnote{The standard notation for this homomorphism in the literature is $\partial$. To distinguish this map from the boundary, we choose the symbol $\delta$ instead.} Any chain $x\in C_k(X)$ can be written as $x=u+v$ with $u\in C_k(A)$ and $v\in C_k(B)$ such that $\partial u=-\partial v$. Then $\delta(x):=\partial u$. Furthermore, there is an analogous sequence for the reduced homology groups.
\end{thm}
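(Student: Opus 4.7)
The plan is to pass to cellular homology, where the subcomplex hypothesis makes the required chain-level decomposition transparent. Since $A$, $B$, and $A\cap B$ are subcomplexes of the CW-complex $X$, all four spaces are CW-complexes and their singular and cellular homologies agree naturally. In the cellular picture, $C_k^{CW}(X)$ is freely generated by the $k$-cells of $X$; each such cell lies in $A$ or $B$ (and in $A\cap B$ precisely when it lies in both), giving the identifications $C_k^{CW}(X)=C_k^{CW}(A)+C_k^{CW}(B)$ and $C_k^{CW}(A)\cap C_k^{CW}(B)=C_k^{CW}(A\cap B)$ as subgroups of $C_k^{CW}(X)$. The cellular boundary on $X$ restricts to the cellular boundaries on the subcomplexes, so these identifications assemble into a short exact sequence of chain complexes
\begin{equation*}
    0\to C_*^{CW}(A\cap B)\xrightarrow{c\mapsto(c,c)} C_*^{CW}(A)\oplus C_*^{CW}(B)\xrightarrow{(a,b)\mapsto a-b} C_*^{CW}(X)\to 0.
\end{equation*}

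The long exact sequence associated to this short exact sequence is, by construction, the Mayer--Vietoris sequence. To read off the stated form of the connecting homomorphism I would unwind the standard zig-zag: a cycle $x\in C_k^{CW}(X)$ lifts through the surjection on the right to some $(u,-v)\in C_k^{CW}(A)\oplus C_k^{CW}(B)$ with $u+v=x$; the condition $\partial x=0$ then forces $\partial u=-\partial v$, so $\partial u$ lies in $C_{k-1}^{CW}(A)\cap C_{k-1}^{CW}(B)=C_{k-1}^{CW}(A\cap B)$, and by definition $\delta[x]=[\partial u]$. This is precisely the prescription in the statement, and transporting it back to singular chains via the natural isomorphism between cellular and singular homology is routine.

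There is no substantive obstacle: the essential input is the literal decomposition $C_k^{CW}(X)=C_k^{CW}(A)+C_k^{CW}(B)$ at the chain level, which is immediate from the subcomplex hypothesis and bypasses the subdivision argument that would be needed for an arbitrary open cover (compare Hatcher's Proposition 2.21, where one must show that $C_*(A)+C_*(B)\hookrightarrow C_*(X)$ is a chain homotopy equivalence). The only mild subtlety is the reduced version: I would augment all four chain complexes by the usual map $\epsilon\colon C_0\to\mathbb{Z}$, noting that the augmentations are compatible with the maps of the sequence; the resulting augmented short exact sequence yields the Mayer--Vietoris sequence in reduced homology with the same prescription for $\delta$.
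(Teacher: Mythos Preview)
The paper does not supply its own proof of this theorem; it is stated as a known result and attributed to Hatcher. So there is nothing to compare against on the paper's side.

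Your cellular-chain argument is correct and is essentially the standard textbook route for the CW/subcomplex version of Mayer--Vietoris. The decomposition $C_k^{CW}(X)=C_k^{CW}(A)+C_k^{CW}(B)$ with intersection $C_k^{CW}(A\cap B)$ is exactly what the subcomplex hypothesis buys, and your zig-zag description of $\delta$ is the usual one. One small remark: the theorem as stated in the paper phrases the decomposition $x=u+v$ at the level of \emph{singular} chains, where it is not literally true without subdivision; your cellular argument actually makes this description honest (every cellular chain really does split), and then the naturality of the cellular-to-singular isomorphism carries the homology-level description of $\delta$ across. For the reduced version you should note, if only in passing, that the augmented short exact sequence in degree $-1$ reads $0\to\mathbb{Z}\to\mathbb{Z}\oplus\mathbb{Z}\to\mathbb{Z}\to 0$, which is exact provided $A\cap B\neq\emptyset$; this is the usual hypothesis for reduced Mayer--Vietoris and is implicit in the paper's applications.
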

We also recap the notion of (co)homology with support in a family of supports. First, recall the following
\begin{defn}[Family of Supports, \cite{pham}]\label{defn:family_of_supports}
    Let $X$ be a topological space. A \textit{family of supports} in $X$ is a collection $\Phi$ of closed subsets of $X$ with the following three properties:
    \begin{enumerate}
        \item For all $A,B\in\Phi$, we have $A\cup B\in\Phi$.
        \item For all $A\in\Phi$ and $B\subset A$ closed, we have $B\in\Phi$.
        \item For all $A\in\Phi$, there exists a neighborhood of $A$ in $\Phi$.
    \end{enumerate}
\end{defn}
Of particular importance to us are the families of supports $F$ of all closed subsets and $c$ of all compact subsets.\footnote{The latter is not necessarily a family of supports due to condition (3). But for a sufficiently nice space, for example a locally compact space, this is the case. All spaces relevant in this text are sufficiently nice.}\\
Let $X$ be a topological space and let us consider formal linear combinations of singular simplices in $X$ which are locally finite, but not necessarily finite. I.e. we consider $n$-chains
\begin{equation}\label{eq:locally_finite_chain}
    c=\sum_{\sigma}n_\sigma\cdot\sigma,
\end{equation}
with $n_\sigma\in\mathbb{Z}$ and the sum is over all continuous maps $\sigma:\Delta^n\to X$. Locally finite means that for all compact sets $K\subset X$, the number of $\sigma$ in $c$ with $n_\sigma\neq0$ and $\sigma(\Delta^n)\cap K\neq\emptyset$ is finite. The support of $c$ is
\begin{equation}
    |c|:=\bigcup_{\substack{\sigma \\ n_\sigma\neq0}}\sigma(\Delta^n).
\end{equation}
Let $\Phi$ be a family of supports in $X$. It can be shown that chains as in \eqref{eq:locally_finite_chain} with support in $\Phi$ define abelian groups of chains $C_n^\Phi(X)$ and corresponding homology groups $H_n^\Phi(X)$. For $\Phi=c$, these are the regular singular homology groups and for $\Phi=F$, we obtain Borel-Moore homology $H_n^\text{BM}(X)$. A similar construction can be done for cohomology. For this article, it suffices to cover the case where $X$ is a smooth manifold. By the De Rham's Theorem, we can identify the cohomology groups $H^n(X)$ of $X$ with the De Rham cohomology groups $H^n_\mathrm{dR}(X)$ modulo torsion. We shall do this without explicit mention. As before, we can define cochain groups $C_\Phi^n(X)$ of differential $n$-forms with support (in the usual sense of support of a differential form) in $\Phi$ and therefore cohomology groups $H^n_\Phi(X)$ with support in $\Phi$. For cohomology, $\Phi=F$ recovers the regular De Rham cohomology groups, while $\Phi=c$ is cohomology of forms with compact support. Poincaré-duality generalizes to homology and cohomology with support in an arbitrary family of supports.
\begin{thm}[Generalized Poincaré Duality, \cite{pham}]
    Let $X$ be a smooth $n$-manifold and $\Phi$ a family of supports in $X$. For all $m\in\mathbb{N}$, there is an isomorphism
    \begin{equation}
        H_m^\Phi(X)\overset{\sim}\to H_\Phi^{n-m}(X).
    \end{equation}
\end{thm}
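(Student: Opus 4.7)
The plan is to follow the classical proof of Poincaré duality via cap product with a fundamental class, generalised to an arbitrary family of supports $\Phi$. I assume $X$ is oriented (this hypothesis may be tacit in Pham's conventions; otherwise one uses the orientation local system in the usual way). The two extreme cases $\Phi = c$ and $\Phi = F$ should recover the familiar isomorphisms $H_m(X) \cong H^{n-m}_c(X)$ and $H_m^\mathrm{BM}(X) \cong H^{n-m}(X)$, which serves as a sanity check on the statement.

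First, I would use the orientation to produce the Borel-Moore fundamental class $[X] \in H_n^F(X)$ and define the duality map
$$D_\Phi\colon H^k_\Phi(X) \longrightarrow H_{n-k}^\Phi(X), \qquad [\omega] \longmapsto [X] \frown [\omega].$$
This is well-defined on the level of (co)chains because the support of a cap product is contained in the intersection of the supports of its factors: pairing a locally finite chain (support in $F$) with a form of support in $\Phi$ yields a chain with support in $\Phi$, and the assignment descends to (co)homology.

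Next I would execute the standard bootstrap. (i) Compute both sides for $X = \mathbb{R}^n$ directly and verify that $D_\Phi$ is an isomorphism there; this reduces to the Poincaré lemma plus contractibility. (ii) Show that $D_\Phi$ intertwines the Mayer-Vietoris sequences in homology and cohomology with supports on both sides (up to the usual sign), so that the five-lemma promotes the isomorphism from $U$, $V$, and $U\cap V$ to $U\cup V$. (iii) By induction on the size of a finite good cover, deduce the result for any open $U\subset X$ admitting such a cover. (iv) Pass to a general smooth manifold by a direct limit argument over the filtered poset of open submanifolds admitting finite good covers, using that both the homology and cohomology groups with supports in $\Phi$ and the map $D_\Phi$ commute with the appropriate filtered colimits.

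The main obstacle is step (ii) in the presence of an arbitrary $\Phi$. One must specify how $\Phi$ restricts to each open piece $U \subset X$ (typically $\Phi|_U := \{A \in \Phi : A \subset U\}$ together with a compatibility condition ensuring condition (3) of Definition \ref{defn:family_of_supports} is preserved), check that the restrictions are again families of supports, and verify that the Mayer-Vietoris sequence is still exact both for locally finite chains with support in $\Phi|_U$ and for forms with support in $\Phi|_U$. A related technical point is that, for the five-lemma step to work across the full range of $\Phi$, the family is typically required to be paracompactifying; since the quadric complements considered later in the paper are all sufficiently nice (smooth, second countable, locally compact), this assumption is harmless in the intended applications. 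Once this naturality is secured, the remainder of the argument is purely formal.
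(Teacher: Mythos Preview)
The paper does not prove this theorem at all: it is stated in the Preliminaries section as a background result and attributed to \cite{pham}, with no argument given. There is therefore nothing in the paper to compare your proposal against.

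That said, your sketch is a reasonable outline of the standard argument, and your identification of the main technical issue (restricting $\Phi$ to opens and retaining a family of supports, plus the paracompactifying hypothesis needed for Mayer--Vietoris with supports) is on point. One caution: step (iv), the direct-limit passage, is not automatic for an arbitrary $\Phi$. Homology with compact supports commutes with filtered colimits of open subsets, and Borel--Moore homology / ordinary cohomology are handled by inverse-limit arguments, but for a general family $\Phi$ neither $H_\bullet^\Phi$ nor $H^\bullet_\Phi$ need commute with the relevant limits without further hypotheses on $\Phi$ (again, paracompactifying is the usual one). If you actually carry this out you will want to consult a sheaf-theoretic treatment (e.g.\ Iversen or Bredon) rather than reinvent those compatibility checks.
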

Furthermore, we have the following duality \cite{iversen}: If $X$ is an oriented $m$-manifold and $S\subset X$ a subset such that $X-S$ is locally compact with a closed embedding into $X$,\footnote{This is the case for example if $S$ is a finite union of closed submanifolds of $X$.} then there is an isomorphism
\begin{equation}\label{eq:borel_moore_duality}
    \varphi:H_m^\mathrm{BM}(X-S)\overset{\sim}\to H^{n-m}(X,S).
\end{equation}

\subsection{The Intersection Index}
The primary goal of this paper is the computation of intersection indices. Therefore we take some time to recap the notion of an intersection index (sufficiently general for our purposes) and state some of the results we employ in the course of this work for the convenience of the reader. In the literature, one can find various versions of this concept with varying degrees of generality. Our presentation is adapted to the needs of the problem we seek to solve.\\
Let $X$ be a topological space and $A,B\subset X$ two subspaces. Recall that $(X;A,B)$ is called an \textit{excisive triad} if the inclusion $i:A\hookrightarrow A\cup B$ induces an isomorphism $H_\bullet(A,A\cap B)\simeq H_\bullet(A\cup B,B)$ \cite{dold}.\footnote{In the reference, five other equivalent definitions are given.}
\begin{prop}[\cite{dold}]\label{prop:excisive_couple}
    If $A$ and $B$ are open subsets or if $X$ is a cell complex such that $A$ and $B$ are subcomplexes, then $(X;A,B)$ is an excisive triad.
\end{prop}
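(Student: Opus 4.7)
The plan is to handle the two cases separately, reducing the CW case to the open case via standard CW thickenings.

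For the open case, I would recognize the claim as a direct application of the excision theorem. Setting $Z := B \setminus A$, one has $(A \cup B) \setminus Z = A$ and $B \setminus Z = A \cap B$, so excision applied to $Z \subset B \subset A \cup B$ yields exactly the asserted isomorphism once its hypothesis $\overline{Z} \subset \mathrm{int}(B)$ is verified (closure and interior taken in $A \cup B$). Since $A$ is open in $X$, the set $Z = (A \cup B) \setminus A$ is closed in $A \cup B$, so $\overline{Z} = Z \subset B$; and since $B$ is open in $X$, one has $\mathrm{int}(B) = B$, which gives the required containment.

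For the CW case, I would invoke the classical fact (already mentioned in the paragraph preceding Theorem \ref{thm:mayer-vietoris}) that for CW subcomplexes $A,B \subset X$ with $A \cup B = X$ one can find open neighborhoods $U \supset A$ and $V \supset B$ with $U \cup V = X$ such that $U$, $V$, and $U \cap V$ deformation retract onto $A$, $B$, and $A \cap B$ respectively. Homotopy invariance together with the five-lemma applied to the long exact sequences of the pairs appearing in the inclusions $(A, A \cap B) \hookrightarrow (U, U \cap V)$ and $(A \cup B, B) \hookrightarrow (U \cup V, V)$ yields $H_\bullet(A, A \cap B) \simeq H_\bullet(U, U \cap V)$ and $H_\bullet(A \cup B, B) \simeq H_\bullet(U \cup V, V)$. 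The already-established open case, applied to the pair $U, V$, gives $H_\bullet(U, U \cap V) \simeq H_\bullet(U \cup V, V)$. Composing these three isomorphisms and checking naturality against the inclusion maps shows the composite agrees with the map induced by $(A, A \cap B) \hookrightarrow (A \cup B, B)$, which is what was to be shown.

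The main technical hurdle is genuinely the construction of the compatible thickenings $U$ and $V$ in the CW case: one builds them inductively skeleton-by-skeleton, simultaneously arranging that $U \cap V$ retract onto $A \cap B$ and that $U \cup V$ cover $X$. This is a standard but slightly fiddly argument which I would simply cite from \cite{hatcher}; once it is granted, both cases of the proposition reduce to routine applications of excision and homotopy invariance.
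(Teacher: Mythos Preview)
The paper does not supply its own proof of this proposition; it is simply quoted from \cite{dold} as a background fact. So there is nothing to compare against, and the question is only whether your argument stands on its own.

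It does. The open case is a clean application of excision exactly as you describe. For the CW case your reduction via compatible open thickenings $U\supset A$, $V\supset B$ is the standard route; the one point worth making explicit is that you may assume $X=A\cup B$ without loss of generality (the definition of excisive triad only involves $A$, $B$, $A\cap B$, $A\cup B$, and $A\cup B$ is itself a CW complex with $A$, $B$ as subcomplexes), which you need in order to have $U\cup V=X$. Your naturality remark is the right thing to check: the commutative square of inclusions
\[
\begin{array}{ccc}
(A,A\cap B) & \hookrightarrow & (A\cup B,B) \\
\downarrow & & \downarrow \\
(U,U\cap V) & \hookrightarrow & (U\cup V,V)
\end{array}
\]
has three edges inducing isomorphisms in homology, hence so does the fourth.

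As an aside, for the CW case there is an even shorter argument via cellular chains: the map $C_\bullet^{\mathrm{CW}}(A)/C_\bullet^{\mathrm{CW}}(A\cap B)\to C_\bullet^{\mathrm{CW}}(A\cup B)/C_\bullet^{\mathrm{CW}}(B)$ is already an isomorphism of chain complexes, since the cells of $A$ not in $A\cap B$ are precisely the cells of $A\cup B$ not in $B$. This avoids the thickening construction entirely, at the cost of invoking the equivalence of cellular and singular homology.
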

For an excisive triad $(X;A,B)$, we have a relative cap product (see \cite{dold}, Definition 12.1)
\begin{equation}
    \cap:H^m(X,A)\times H_n(X,A\cup B)\to H_{n-m}(X,B), \qquad m,n\in\mathbb{N}.
\end{equation}
In particular, if $X$ is connected and $B=\emptyset$, the case $m=n$ yields a cap product
\begin{equation}
    \cap:H^m(X,A)\times H_m(X,A)\to H_0(X)\simeq\mathbb{Z}, \qquad m\in\mathbb{N}.
\end{equation}
We always identify $H_0(X)$ with $\mathbb{Z}$ without explicit mention. By the duality \eqref{eq:borel_moore_duality}, we thus obtain a pairing
\begin{equation}\label{eq:intersection_pairing}
    \langle|\rangle:H_{n-m}^\mathrm{BM}(X-A)\times H_m(X,A)\to\mathbb{Z}, \qquad (x,y)\mapsto\langle x|y\rangle:=\varphi(x)\cap y.
\end{equation}
The map \eqref{eq:intersection_pairing} is called the \textit{intersection pairing} and for $x\in H_{n-m}^\mathrm{BM}(X-A)$ and $y\in H_m(X,A)$, the integer $\langle x|y\rangle$ is called the \textit{intersection index} of $x$ and $y$. The intersection index is generally hard to compute. We cover the following case in which concrete calculations can often be done.\\
Let $X$ be an oriented $n$-manifold and $S_1,S_2\subset X$ two oriented submanifolds of dimension $i$ and $n-i$ respectively. Let $x\in S_1\cap S_2$ and let $v_1,\ldots,v_i$ be a positively oriented basis of the tangent space $T_xS_1$ at $x$ and $w_1,\ldots,w_{n-i}$ a positively oriented basis of the tangent space $T_xS_2$ at $x$. If $S_1$ and $S_2$ intersect in general position at $x$ then $v_1,\ldots,v_i,w_1,\ldots,w_{n-i}$ can be viewed as a basis of $T_xX$. We say that the orientation of $S_1$ and $S_2$ \textit{match} if $v_1,\ldots,v_i,w_1,\ldots,w_{n-i}$ is positively oriented with respect to the orientation of $X$. Due to the orientation, we can view $S_1$ and $S_2$ as (Borel-Moore) homology classes.
\begin{prop}\cite{pham}\label{prop:intersection_index_disjoint_support}
    Let $X$ be an oriented manifold and suppose that $S_1,S_2\subset X$ are submanifolds (possibly with boundary) of complementary dimensions in general position. If $S_1$ is compact,\footnote{In \cite{pham}, the statement is given for $S_1$ \textit{or} $S_2$ compact. We slightly weakened this statement here in accordance with our definition of the intersection index.} then $S_1\cap S_2$ consists of a finite number of points and
    \begin{equation}
        \langle S_2|S_1\rangle=N_+-N_-,
    \end{equation}
    where $N_+$ (resp. $N_-$) is the number of points in $S_1\cap S_2$ where the orientation matches (resp. does not match). In particular, the intersection index vanishes if $S_1$ and $S_2$ are disjoint.\footnote{The last part of this statement is valid in much more generality. But for our purposes, this is not necessary.}
\end{prop}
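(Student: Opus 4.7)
The plan is to first establish that $S_1 \cap S_2$ is a finite set, then reduce the computation of $\varphi(S_2) \cap S_1$ to a sum of local contributions at each intersection point via the naturality of Poincaré duality, and finally evaluate each local contribution explicitly in a linear model.

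For finiteness, I would note that the general position hypothesis forces $\partial S_1 \cap S_2 = \emptyset$ and that $S_1$ and $S_2$ meet transversally, so by dimension count $S_1 \cap S_2$ is a $0$-dimensional submanifold of $X$. Since $S_1$ is compact and $S_2$ is closed as a subset of $X$, this intersection is a compact $0$-manifold, hence consists of finitely many points $p_1, \ldots, p_k$.

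For the localization, I would choose pairwise disjoint open neighborhoods $U_1, \ldots, U_k \subset X - \partial S_1$ with $p_j \in U_j$ and each $U_j$ diffeomorphic to $\mathbb{R}^n$. The Borel-Moore class $S_2 \in H_{n-i}^{\mathrm{BM}}(X - \partial S_1)$ corresponds under $\varphi$ to a cohomology class in $H^i(X, \partial S_1)$ representable by a cocycle supported in an arbitrarily small tubular neighborhood of $S_2$. By shrinking this neighborhood one may assume its trace on $S_1$ lies entirely in $\bigcup_j U_j$, and the naturality of $\varphi$ with respect to the open inclusions $U_j \hookrightarrow X - \partial S_1$, together with the excision behavior of the relative cap product (Proposition~\ref{prop:excisive_couple}), yields the additive decomposition
\begin{equation*}
    \langle S_2 \,|\, S_1 \rangle \;=\; \sum_{j=1}^k \langle S_2 \cap U_j \,|\, S_1 \cap U_j \rangle_{U_j}.
\end{equation*}
For each local contribution, I would pick orientation-preserving coordinates identifying $U_j$ with $\mathbb{R}^n$ so that $S_1 \cap U_j$ becomes $\mathbb{R}^i \times \{0\}$, and then, using transversality at $p_j$, further straighten $S_2 \cap U_j$ to $\{0\} \times \mathbb{R}^{n-i}$ by a diffeomorphism whose Jacobian at the origin is positive precisely when the orientations of $S_1$ and $S_2$ match at $p_j$. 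A direct computation in this linear model, using the Künneth decomposition $\mathbb{R}^n \cong \mathbb{R}^i \times \mathbb{R}^{n-i}$ and the standard Thom class of $\{0\} \times \mathbb{R}^{n-i}$, shows that the local intersection index is $+1$ in the matching case and $-1$ otherwise. Summing over $j$ then gives $N_+ - N_-$.

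The main obstacle is the localization step: while geometrically intuitive, rigorously extracting the additive decomposition from the abstract definition $\langle S_2 \,|\, S_1 \rangle = \varphi(S_2) \cap S_1$ requires some care with the naturality of Poincaré-Lefschetz duality in the Borel-Moore setting under open inclusions, and with the fact that the relative cap product respects the resulting support decomposition of $\varphi(S_2)$ into pieces concentrated near the various $p_j$. Once this is in place, the remaining local calculation is essentially linear algebra together with the standard identification of the Thom class with a bump $(n-i)$-form transverse to $\{0\} \times \mathbb{R}^{n-i}$.
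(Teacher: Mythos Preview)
The paper does not supply its own proof of this proposition: it is stated with a citation to \cite{pham} and left without argument, as it is a standard fact from the theory of intersection indices recalled in the preliminaries. Consequently there is no in-paper proof to compare against.

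That said, your outline is the standard one and is essentially what one finds in \cite{pham} or any differential-topology treatment of intersection numbers: finiteness from transversality plus compactness, localization to coordinate charts around each intersection point via excision/naturality of the duality map, and a local computation in a linear model using the Thom class. Your identification of the localization step as the place requiring the most care is accurate; once the additive decomposition is justified, the rest is routine. There is no gap in the strategy.
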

There is a useful identity with respect to the intersection index and the boundary operator in the case of differentiable manifolds. We briefly recall the notion of the \textit{Leray coboundary} (see \cite{pham} or the original work \cite{leray} for details): Let $X$ be a differentiable manifold and $S\subset X$ a closed submanifold of codimension $r$. Then choosing a closed tubular neighborhood $\nu:E\to S$ of $S$, we obtain a fiber bundle with fibers $(r-1)$-spheres. The coboundary $\delta$ fibers $(q-r)$-cycles in $S$ with these spheres. On the level of homology, this induces a homomorphism
\begin{equation}
    \delta^\ast:H_k(S)\to H_{k+r-1}(X-S).
\end{equation}
For our purposes, it is convenient to define $\delta_\ast:=(-1)^{k-1}\cdot\delta^\ast$, which is the map that we call the Leray coboundary here. This convention is adopted from \cite{pham} and simplifies the residue formulas for which the Leray coboundary is meant. As such, it agrees with the convention in the original work \cite{leray} where this notion was introduced, but differs by a factor $(-1)^{k-1}$ from the convention in \cite{app-iso}.
\begin{prop}\cite{pham}
    For all $\alpha\in H_k(S)$ and $\beta\in H_{2n-k-1}(X,S)$, we have
    \begin{equation}\label{eq:intersection_dualty_1}
        \langle\alpha|\partial_\ast\beta\rangle=\langle\delta^\ast\alpha|\beta\rangle=(-1)^{k+1}\cdot\langle\delta_\ast\alpha|\beta\rangle.
    \end{equation}
\end{prop}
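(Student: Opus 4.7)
The plan is to prove the first equality $\langle\alpha|\partial_\ast\beta\rangle=\langle\delta^\ast\alpha|\beta\rangle$ geometrically, by applying Proposition \ref{prop:intersection_index_disjoint_support} on both sides and exhibiting a sign-preserving bijection between the two resulting sets of transverse intersection points. A dimension count forces $S$ to have real codimension $r=2$ in $X$, consistent with the complex-hypersurface setup of the paper. The second equality is then immediate from the defining convention $\delta_\ast:=(-1)^{k-1}\delta^\ast$ together with the bilinearity of the intersection pairing.

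Concretely, fix a closed tubular neighborhood $\nu\colon E\to S$ and let $\partial E\to S$ denote the associated $(r-1)$-sphere bundle. After a transversality perturbation, represent $\alpha$ by a closed oriented submanifold $a\subset S$ and $\beta$ by a chain $b$ in $X$ such that $\partial b\subset S$ is transverse to $a$ in $S$ and $b$ is transverse in $X\setminus S$ to the restricted sphere bundle $T:=\nu^{-1}(a)\cap\partial E$. By the construction of the Leray coboundary, $T$ (with the orientation induced from $a$ and the fiber orientation) represents $\delta^\ast\alpha$, while $\partial b$ represents $\partial_\ast\beta$. Shrinking $E$ if necessary, every point $p\in a\cap\partial b$ produces exactly one transverse crossing of $b$ with $T$ inside the disk fiber $\nu^{-1}(p)$, and conversely every point of $b\cap T$ arises in this way; this gives the desired bijection of intersection sets.

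The remaining task, and the main obstacle, is the sign comparison at each matched pair $(p,q)$. I would work in a local product model $T_pX=T_pa\oplus N\oplus F$, where $N\subset T_pS$ is a complement of $T_pa$ aligned along $T_p\partial b$ and $F$ is the fiber direction. The sign at $p$ prescribed by Proposition \ref{prop:intersection_index_disjoint_support} on $S$ compares the orientation of $T_pa\oplus T_p\partial b$ against that of $S$, while the sign at $q$ compares $T_qT\oplus T_qb$ against that of $X$. Using the co-orientation of $S$ in $X$ to orient $F$ (and hence the sphere fiber contributing to $T$) together with the outward-normal-first boundary convention on $\partial b$, these two signs coincide at every pair, and summing over intersection points yields the equality. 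The delicacy lies entirely in reconciling the three orientation conventions simultaneously in play: the boundary orientation on $\partial b$, the co-orientation of $S$ entering the definition of $\delta^\ast$, and the sign rules encoded in the intersection pairing through the cap product and the duality \eqref{eq:borel_moore_duality}. Once a consistent set of conventions is fixed at the outset, the identity reduces to a single local calculation repeated at every intersection point.
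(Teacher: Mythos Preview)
The paper does not give a proof of this proposition; it is quoted from \cite{pham} as background in the preliminaries and used without argument. There is therefore no proof in the paper to compare your proposal against.

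As for the proposal itself: the geometric strategy---represent both classes by transverse submanifolds, realize $\delta^\ast\alpha$ as the circle bundle $T$ over $a$, and set up a bijection between $a\cap\partial b$ and $T\cap b$---is the standard route and is sound in outline. Your local picture is correct: near a transverse point $p\in a\cap\partial b$, the chain $b$ leaves $S$ along a ray in the normal disk and so meets the fiber circle of $T$ exactly once. The one genuine gap is that you do not carry out the sign comparison. You correctly flag it as ``the main obstacle,'' but then only describe the shape of the local calculation without performing it. Since the entire content of the identity is the sign (some adjointness between $\partial_\ast$ and $\delta^\ast$ is nearly formal), a complete argument must actually fix the boundary convention on $b$, the fiber orientation entering $\delta^\ast$, and the convention hidden in the duality \eqref{eq:borel_moore_duality}, and then check that the local orientation frames match. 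As written this is a correct and workable plan, but not yet a proof.
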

It is sometimes (and in particular in this article) necessary to consider an iterated construction to generalize to the case where $S$ is a finite union of closed submanifolds. This requires us to distinguish boundaries and coboundaries for different manifolds. Therefore, we drop the $\ast$ from our notation and write $\delta:=\delta_\ast$. Let $S=\bigcup_{i=1}^NS_i$ with each $S_i$ being a closed submanifold of codimension 2 of $X$. If the $S_i$ are in general position, the intersections $\bigcap_{i\in I}S_i$ are closed manifolds of codimension $2|I|$ for any non-empty $I\subset\{1,\ldots,N\}$. Thus, we obtain a sequence of maps
\begin{equation}\label{eq:iterated_boundary}
    H_{k-N}(\bigcap_{i=1}^NS_i)\overset{\delta_N}\to H_{k-N+1}(\bigcap_{i=1}^{N-1}S_i-S_N)\overset{\delta_{N-1}}\to\cdots\overset{\delta_2}\to H_{k-1}(S_1-\bigcup_{i=2}^NS_i)\overset{\delta_1}\to H_k(X-\bigcup_{i=1}^NS_i).
\end{equation}
Similarly, we can consider the sequence of boundary maps
\begin{equation}\label{eq:iterated_coboundary}
    H_{k+N}(X,\bigcup_{i=1}^NS_i)\overset{\partial_1}\to H_{k+N-1}(S_1,\bigcup_{i=2}^NS_i)\overset{\partial_2}\to\cdots\overset{\partial_{N-1}}\to H_{k+1}(\bigcap_{i=1}^{N-1}S_i,S_N)\overset{\partial_N}\to H_k(\bigcap_{i=1}^NS_i),
\end{equation}
where each $\partial_i$ takes the part of the boundary that lies in $S_i$. Let
\begin{equation}
    \alpha\in H_k(\bigcap_{i=1}^NS_i) \qquad\text{and}\qquad \beta\in H_{2n-k-N}(X,\bigcup_{i=1}^NS_i).
\end{equation}
From an obvious inductive argument, we obtain the formula
\begin{equation}\label{eq:intersection_dualty_2}
    \langle\alpha|(\partial_N\circ\cdots\circ\partial_1)(\beta)\rangle=(-1)^{N(k+\frac{(N+1)}{2})}\cdot\langle(\delta_1\circ\cdots\circ\delta_N)(\alpha)|\beta\rangle
\end{equation}
by using equation \eqref{eq:intersection_dualty_1} $N$ times.

\section{Arrangements of Complex Spheres in General Position}\label{sec:complex_spheres}
As explained in the Introduction \ref{sec:introduction}, the goal of this article is to better understand the relative homology groups $H_{n+1}(\mathbb{C}^{n+1},S(a,r))$, where $S(a,r)\subset\mathbb{C}^{n+1}$ is the subspace
\begin{equation}
    S(a,r):=\{z\in\mathbb{C}^{n+1} \;|\; \prod_{i=1}^N((z-a_i)^2-r_i^2)=0\}
\end{equation}
for some given $N\in\mathbb{N}^\ast$ and $(a,r)=(a_1,\ldots,a_N,r_1,\ldots,r_N)\in(\mathbb{C}^{n+1})^N\times(\mathbb{C}^\times)^N$ (with $a_i\in\mathbb{C}^{n+1}$ and $r_i\in\mathbb{C}^\times$ for all $1\leq i\leq N$). It is the union of $N$ non-degenerate affine quadratic hypersurfaces
\begin{equation}
    S_i(a_i,r_i):=\{z\in\mathbb{C}^{n+1} \;|\; (z-a_i)^2=r_i^2\}, \qquad\qquad i=1,\ldots,N
\end{equation}
of dimension $n$. In this section, we investigate some of the elementary topological properties of such unions. In analogy with the equation for a real sphere, we adapt the terminology from \cite{pham} and call these hypersurfaces \textit{complex spheres}. More precisely, we denote the \textit{complex $n$-sphere} around $a\in\mathbb{C}^{n+1}$ with radius $r\in\mathbb{C}^\times$ by
\begin{equation}
    S_\mathbb{C}^n(a,r):=\{z\in\mathbb{C}^{n+1} \;|\; (z-a)^2=r^2\}.
\end{equation}
Since
\begin{equation}
    \partial_z((z-a)^2-r^2)=2\cdot(z-a)\neq0 \qquad \forall z\in S_\mathbb{C}^n(a,r),
\end{equation}
the complex sphere $S_\mathbb{C}^n(a,r)$ also carries the structure of a complex analytic $n$-manifold (by the holomorphic version of the Implicit Function Theorem, see for example \cite{grauert}). In particular, $S_\mathbb{C}^n(a,r)$ is a (non-finite) CW-complex (see for example \cite{munkres}). The \textit{complex unit $n$-sphere} is denoted by $S_\mathbb{C}^n:=S_\mathbb{C}^n(0,1)$. Note hat $S_\mathbb{C}^n(a,r)$ is path-connected (unless $n=0$, in which case $S_\mathbb{C}^n(a,r)$ consists of two points). Clearly, any two complex $n$-spheres $S_\mathbb{C}^n(a_1,r_1)$ and $S_\mathbb{C}^n(a_2,r_2)$ are homeomorphic via the homeomorphism
\begin{equation}\label{eq:unit_homeo}
    S_\mathbb{C}^n(a_1,r_1)\overset{\sim}\to S_\mathbb{C}^n(a_2,r_2), \qquad z\mapsto \frac{r_2}{r_1}(z-a_1)+a_2.
\end{equation}
In particular, any complex $n$-sphere $S_\mathbb{C}^n(a,r)$ is homeomorphic to the unit sphere $S_\mathbb{C}^n$. Furthermore, it is well-known that the restriction of $S_\mathbb{C}^n(a,r)$ to the hyperplane $\text{Im}\,z=\text{Im}\,a$ is a real $n$-sphere to which $S_\mathbb{C}^n(a,r)$ deformation retracts. While this is common knowledge, the proof is not often spelled out explicitly, so we repeat it here.
\begin{prop}\label{prop:deformation_retract}
    Let $S_\mathbb{C}^n(a,r)$ be a complex $n$-sphere and write $r=e^{i\phi}\cdot|r|$. Then $S_\mathbb{C}^n(a,r)$ (strongly) deformation retracts to
    \begin{equation}\label{eq:deformation_retract}
        \{z\in\mathbb{C}^{n+1} \;|\; \mathrm{Im}(e^{-i\phi}\cdot z)=\mathrm{Im}(e^{-i\phi}\cdot a),\; (\mathrm{Re}(e^{-i\phi}\cdot z)-\mathrm{Re}(e^{-i\phi}\cdot a))^2=|r|^2\}\simeq S^n.
    \end{equation}
\end{prop}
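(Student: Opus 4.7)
The plan is to reduce the general case to the unit sphere and then exhibit an explicit deformation retraction in real-and-imaginary coordinates. First, I would apply the affine change of variables $w := e^{-i\phi}(z-a)$, which transforms the defining equation $(z-a)^2 = r^2 = e^{2i\phi}|r|^2$ into $w^2 = |r|^2$ and simultaneously transports the claimed retract in \eqref{eq:deformation_retract} to the set $\{w\in\mathbb{C}^{n+1} : \mathrm{Im}\,w = 0,\; (\mathrm{Re}\,w)^2 = |r|^2\}$, which is the standard real $n$-sphere of radius $|r|$. It therefore suffices to construct a strong deformation retract from $\{w\in\mathbb{C}^{n+1}:w^2 = |r|^2\}$ to this real sphere and then to transport it back via $z = a + e^{i\phi}w$.

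Writing $w = x + iy$ with $x,y\in\mathbb{R}^{n+1}$, the equation $w^2 = |r|^2$ is equivalent to the pair of real equations $x^2 - y^2 = |r|^2$ and $x\cdot y = 0$, where $\cdot$ denotes the standard real inner product. In particular $x^2 = |r|^2 + y^2 \geq |r|^2 > 0$, so $x$ never vanishes and we can safely rescale it. This suggests shrinking $y$ linearly to zero while simultaneously renormalising $x$ so that both real equations remain satisfied at every time.

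Concretely, I would define, for $t\in[0,1]$,
\begin{equation}
    H(w,t) := \sqrt{\frac{|r|^2+(1-t)^2\, y^2}{|r|^2+y^2}}\cdot x \;+\; i(1-t)\cdot y,
\end{equation}
and then verify: (a) $H(w,0)=w$; (b) $H(w,1)$ lies in $\{x+i\cdot 0 : x^2 = |r|^2\}$; (c) $H(w,t)$ satisfies $w^2 = |r|^2$ for every $t$, by substituting the new real and imaginary parts into $x^2-y^2=|r|^2$ and $x\cdot y=0$; (d) $H$ is continuous, since the denominator under the square root is bounded below by $|r|^2>0$; and (e) if $\mathrm{Im}\,w=0$, then the rescaling factor reduces to $1$ and $H(w,t)=w$ for all $t$, so the retract is strong. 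Transporting $H$ back through $w\mapsto a+e^{i\phi}w$ yields a strong deformation retract of $S_\mathbb{C}^n(a,r)$ onto the set \eqref{eq:deformation_retract}, and the homeomorphism of this set with $S^n$ is immediate from the parameterisation by $x$ alone subject to $x^2 = |r|^2$.

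There is no serious obstacle in this argument: the only mildly delicate point is confirming that the rescaling factor is everywhere well-defined, positive and smooth, which follows at once from $|r|^2+y^2\geq|r|^2>0$. All remaining verifications are routine algebraic substitutions, and the reduction step at the beginning is what makes the choice of the rescaling factor transparent.
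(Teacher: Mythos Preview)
Your argument is correct and follows essentially the same route as the paper: reduce by an affine change of coordinates to the (real-radius) unit sphere, split into real and imaginary parts, and shrink the imaginary part linearly while rescaling the real part to stay on the quadric. The only cosmetic difference is that you write the denominator of the rescaling factor as $|r|^2+y^2$ rather than $x^2$ (these agree on the sphere), which if anything makes well-definedness more transparent.
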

\begin{proof}
    By applying the homeomorphism \eqref{eq:unit_homeo}, we may assume $r=1$ and $a=0$, i.e. $S_\mathbb{C}^n(a,r)=S_\mathbb{C}^n$, without loss of generality. This homeomorphism sends the set \eqref{eq:deformation_retract} to the real unit $n$-sphere $S^n$ embedded in $\mathbb{C}^{n+1}$ via the inclusion $\mathbb{R}^{n+1}\hookrightarrow\mathbb{C}^{n+1},\;x\mapsto x+i\cdot0$. By a slight abuse of notation, we identify $i(S^n)$ with $S^n$.\\
    We decompose the defining equation of $S_\mathbb{C}^n$ into its real and imaginary part:
    \begin{equation}
        z^2=1 \qquad\Leftrightarrow\qquad (\text{Re}\,z)^2-(\text{Im}\,z)^2=1 \quad\land\quad \text{Re}\,z\cdot\text{Im}\,z=0.
    \end{equation}
    Now let us define the continuous map
    \begin{equation}
        f:S_\mathbb{C}^n\times[0,1]\to\mathbb{R},\qquad (z,t)\mapsto\sqrt{\frac{1+(1-t)^2\cdot(\text{Im}\,z)^2}{(\text{Re}\,z)^2}}.
    \end{equation}
    Note that for all $z\in S_\mathbb{C}^n$, we have $(\text{Re}\,z)^2,1+(1-t)^2\cdot(\text{Im}\,z)^2\geq1$, in particular $(\text{Re}\,z)^2>0$ and $1+(1-t)^2\cdot(\text{Im}\,z)^2\geq0$, so that $f$ is well-defined. Also note that for all $z\in S_\mathbb{C}^n$, we have 
    \begin{equation}
        f(z,0)=\sqrt{\frac{1+(\text{Im}\,z)^2}{(\text{Re}\,z)^2}}=\sqrt{\frac{(\text{Re}\,z)^2}{(\text{Re}\,z)^2}}=1.
    \end{equation}
    This allows us to define the continuous map
    \begin{equation}
        g:S_\mathbb{C}^n\times[0,1]\to S_\mathbb{C}^n, \qquad (z,t)\mapsto f(z,t)\cdot\text{Re}\,z+i\cdot(1-t)\cdot\text{Im}\,z.
    \end{equation}
    First, we need to verify that the image of $g$ is indeed contained in $S_\mathbb{C}^n$. For $(z,t)\in S_\mathbb{C}^n\times[0,1]$, we compute
    \begin{equation}
        \text{Re}\,g(z,t)\cdot\text{Im}\,g(z,t)=(1-t)\cdot f(z,t)\cdot\text{Re}\,z\cdot\text{Im}\,z\overset{\text{Re}\,z\cdot\text{Im}\,z=0}=0
    \end{equation}
    and
    \begin{equation}
        \begin{split}
            (\text{Re}\,g(z,t))^2-(\text{Im}\,g(z,t))^2&=(f(z,t))^2\cdot(\text{Re}\,z)^2-(1-t)^2\cdot(\text{Im}\,z)^2\\
            &=\frac{1+(1-t)^2\cdot(\text{Im}\,z)^2}{(\text{Re}\,z)^2}\cdot(\text{Re}\,z)^2-(1-t)^2\cdot(\text{Im}\,z)^2=1.
        \end{split}
    \end{equation}
    Now we show that $g$ is indeed the sought (strong) deformation retraction of $S_\mathbb{C}^n$ to $S^n$. For any $z\in S_\mathbb{C}^n$, we have
    \begin{equation}
        g(z,0)=\underbrace{f(z,0)}_{=1}\cdot\,\text{Re}\,z+i\cdot(1-0)\cdot\text{Im}\,z=\text{Re}\,z+i\cdot\text{Im}\,z=z
    \end{equation}
    and
    \begin{equation}
        g(z,1)=f(z,1)\cdot\text{Re}\,z+(1-1)\cdot\text{Im}\,z=f(z,1)\cdot\text{Re}\,z \quad\in\quad S^n.
    \end{equation}
    Here, we used $(f(z,1))^2=\frac{1}{(\text{Re}\,z)^2}$ in the last step. Furthermore, for any $(z,t)\in S^n\times[0,1]$, we have $f(z,t)=\frac{1}{|\text{Re}\,z|}$ and thus
    \begin{equation}
        g(z,t)=f(z,t)\cdot\text{Re}\,z \quad\in\quad S^n.
    \end{equation}
    This completes the proof.
\end{proof}
Using the previous Proposition \ref{prop:deformation_retract}, it is easy to compute the homology groups of any complex sphere. It is convenient for our purposes to express this result in terms of the reduced homology groups.
\begin{cor}\label{cor:homology_complex_sphere}
    For any $n\in\mathbb{N}$ and $k\in\mathbb{N}$, we have
    \begin{equation}
        \widetilde{H}_k(S_\mathbb{C}^n)\simeq\widetilde{H}_k(S^n)\simeq\begin{cases} \mathbb{Z} & \mathrm{if}\text{ }k=n \\ 0 & \mathrm{otherwise}. \end{cases}
    \end{equation}
    Furthermore, $\widetilde{H}_n(S_\mathbb{C}^n)$ is generated by a homology class which can be represented by the real $n$-sphere contained within $S_\mathbb{C}^n$.
\end{cor}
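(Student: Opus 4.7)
The plan is to deduce this corollary almost immediately from Proposition \ref{prop:deformation_retract}, so the proof is essentially a matter of chaining together standard facts once the deformation retract has been established.

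First, I would invoke Proposition \ref{prop:deformation_retract} in the special case $a = 0$, $r = 1$ (so $\phi = 0$), which gives a strong deformation retraction of $S_\mathbb{C}^n$ onto the subspace
\begin{equation}
    \{z \in \mathbb{C}^{n+1} \;|\; \mathrm{Im}\,z = 0,\; (\mathrm{Re}\,z)^2 = 1\},
\end{equation}
which is precisely the real unit $n$-sphere $S^n$ sitting inside $\mathbb{C}^{n+1}$ via the standard real embedding. Since a (strong) deformation retraction is in particular a homotopy equivalence, this immediately yields the isomorphism $\widetilde{H}_k(S_\mathbb{C}^n) \simeq \widetilde{H}_k(S^n)$ induced by the inclusion $S^n \hookrightarrow S_\mathbb{C}^n$.

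Second, the identification $\widetilde{H}_k(S^n) \simeq \mathbb{Z}$ for $k = n$ and $0$ otherwise is a standard computation for the reduced homology of the real $n$-sphere which I would simply cite from \cite{hatcher} or \cite{dold}. Combining these two isomorphisms gives the asserted formula. The general case $S_\mathbb{C}^n(a,r)$ then follows from the explicit homeomorphism \eqref{eq:unit_homeo} used at the start of the proof of Proposition \ref{prop:deformation_retract}, which transports everything back to the unit sphere case.

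For the final sentence about the generator, I would argue as follows. Because the deformation retract is strong, the inclusion $\iota : S^n \hookrightarrow S_\mathbb{C}^n$ has a left homotopy inverse (the retraction $g(\cdot,1)$) and is in fact a homotopy equivalence, so the induced map $\iota_\ast : \widetilde{H}_n(S^n) \to \widetilde{H}_n(S_\mathbb{C}^n)$ is an isomorphism. Hence the image under $\iota_\ast$ of the fundamental class of $S^n$ generates $\widetilde{H}_n(S_\mathbb{C}^n)$, and this image is represented literally by the real $n$-sphere sitting inside $S_\mathbb{C}^n$ as a cycle. Transporting via \eqref{eq:unit_homeo} shows that for general $S_\mathbb{C}^n(a,r)$, the real $n$-sphere cut out by $\mathrm{Im}(e^{-i\phi}\cdot z) = \mathrm{Im}(e^{-i\phi}\cdot a)$ represents the generator. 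There is no real obstacle here; the only mildly subtle point is to be explicit that it is the strongness of the deformation retract that lets us identify the generator with the embedded real sphere rather than merely with some homologous cycle.
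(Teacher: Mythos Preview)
Your proposal is correct and takes essentially the same approach as the paper: invoke Proposition~\ref{prop:deformation_retract} for the first isomorphism and cite the standard computation of $\widetilde{H}_k(S^n)$ for the second. Your treatment is in fact more thorough than the paper's, which dispatches the corollary in two sentences and leaves the claim about the generator implicit; one small remark is that strongness of the retraction is not actually needed for the generator statement, since any deformation retraction already makes the inclusion a homotopy equivalence.
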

\begin{proof}
    The existence of the first isomorphism follows immediately from Proposition \ref{prop:deformation_retract}, the second is just the famous computation of the homology of the (real) sphere.
\end{proof}
For the calculation of intersection indices, we also recall the following well-known result due to É. Cartan on the self-intersection of spheres:
\begin{prop}[\cite{pham}]\label{prop:intersection_index_sphere}
    For all $k\in\mathbb{N}$, the intersection index of the real $k$-sphere $S^k$ in the complex $k$-sphere $S_\mathbb{C}^k$ with itself is
    \begin{equation}
        \langle S^k|S^k\rangle=\begin{cases} 2\cdot(-1)^\frac{k}{2} & \text{if }k\text{ even} \\ 0 & \text{if }k\text{ odd}.\end{cases}
    \end{equation}
\end{prop}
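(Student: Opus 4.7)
The plan is to identify the self-intersection number with the Euler number of the normal bundle of $S^k$ in $S_\mathbb{C}^k$, and then to exploit the fact that $S^k$ is totally real in $S_\mathbb{C}^k$ to reduce the computation to $\chi(S^k)$ corrected by an explicit orientation sign.

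First, by the homeomorphism \eqref{eq:unit_homeo} it suffices to treat the case $S_\mathbb{C}^k = S_\mathbb{C}^k(0,1)$ with the real unit sphere $S^k \subset S_\mathbb{C}^k$. Writing $z = u + i\cdot v$ with $u,v\in\mathbb{R}^{k+1}$, the defining equations of $S_\mathbb{C}^k$ unfold to $u^2 - v^2 = 1$ and $u\cdot v = 0$, so at a point $p = (x,0)\in S^k$ one finds $T_p S_\mathbb{C}^k = \{x\}^\perp \oplus \{x\}^\perp \subset \mathbb{R}^{k+1}\oplus\mathbb{R}^{k+1}$. Since $T_p S^k$ is the first summand and $J := i$ swaps the two summands, one obtains the splitting $T_p S_\mathbb{C}^k = T_p S^k \oplus J\cdot T_p S^k$; in other words, $S^k$ is totally real and the normal bundle $\nu$ of $S^k$ in $S_\mathbb{C}^k$ is isomorphic as an unoriented real bundle to $TS^k$ via $J$.

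Second, because $S^k$ is a closed half-dimensional submanifold of the oriented manifold $S_\mathbb{C}^k$, perturbing $S^k$ by a transverse section of $\nu$ and applying Proposition \ref{prop:intersection_index_disjoint_support} to the perturbation identifies $\langle S^k | S^k\rangle$ with $e(\nu)[S^k]$, where $\nu$ carries the orientation determined by the rule that $T_p S^k \oplus \nu_p$ matches the orientation of $S_\mathbb{C}^k$. One must now compare this orientation with the one transported from $TS^k$ via $J$: if $v_1,\ldots,v_k$ is a positively oriented real basis of $T_p S^k$, then the complex structure of $S_\mathbb{C}^k$ yields the positive basis $v_1, Jv_1, v_2, Jv_2, \ldots, v_k, Jv_k$, whereas the splitting convention demands $v_1,\ldots,v_k,Jv_1,\ldots,Jv_k$. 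These two orderings are related by the shuffle permutation whose sign is $(-1)^{k(k-1)/2}$, so $e(\nu)[S^k] = (-1)^{k(k-1)/2}\cdot e(TS^k)[S^k] = (-1)^{k(k-1)/2}\cdot\chi(S^k)$.

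Inserting $\chi(S^k) = 1 + (-1)^k$ gives $\langle S^k|S^k\rangle = (-1)^{k(k-1)/2}\cdot(1+(-1)^k)$; this vanishes when $k$ is odd, and for $k = 2m$ even one has $(-1)^{k(k-1)/2} = (-1)^{m(2m-1)} = (-1)^m = (-1)^{k/2}$, yielding the claimed value $2\cdot(-1)^{k/2}$. The main obstacle is the orientation bookkeeping, which is the usual source of sign errors in this kind of argument. A more hands-on alternative would be to fix an explicit vector field on $S^k$ (e.g.\ the gradient of the height function, which for even $k$ has precisely two non-degenerate critical points and none for odd $k$ in the vanishing case below), push $S^k$ inside $S_\mathbb{C}^k$ along its $J$-image, and compute the intersection signs at each of the resulting transverse intersection points; this only displaces the same shuffle-sign computation to the local analysis at each intersection, however, so the Euler-number route above is the most economical.
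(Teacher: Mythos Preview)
The paper does not actually prove this proposition; it is stated with a citation to \cite{pham} and attributed to \'E.~Cartan. Your argument is therefore not competing against a proof in the paper but supplying one, and it is correct: the identification of $S^k$ as a totally real submanifold, the resulting isomorphism $\nu\cong TS^k$ via $J$, the identification of the self-intersection with $e(\nu)[S^k]$, and the shuffle-sign $(-1)^{k(k-1)/2}$ relating the complex orientation $(v_1,Jv_1,\ldots,v_k,Jv_k)$ to the split orientation $(v_1,\ldots,v_k,Jv_1,\ldots,Jv_k)$ are all sound and combine to give the stated formula.

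This Euler-class route is in fact the standard way the result is derived in the literature (and is essentially how Pham presents it), so your approach is not a genuinely different one but rather a clean write-up of the classical argument. Your closing remark about the hands-on alternative via an explicit vector field is also apt: that computation is what one finds spelled out in some treatments, and as you note it amounts to the same sign bookkeeping localized at the zeros.
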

Now, we focus on finite unions of complex spheres. The next proposition shows that, as long as the involved complex spheres are in general position, all such unions of a fixed number of complex spheres are topologically essentially the same (in a strong sense).
\begin{prop}\label{prop:ambient_isotopy_of_complex_spheres}
    Let $(a,r),(a',r')\in(\mathbb{C}^{n+1})^N\times(\mathbb{C}^\times)^N$ such that
    \begin{equation}
        S_\mathbb{C}^n(a_1,r_1),\ldots,S_\mathbb{C}^n(a_N,r_N) \qquad\text{and}\qquad S_\mathbb{C}^n(a_1',r_1'),\ldots,S_\mathbb{C}^n(a_N',r_N')
    \end{equation}
    are two collections of complex $n$-spheres in general position and such that
    \begin{equation}
        \det((a_i-a_1)(a_j-a_1))_{2\leq i,j\leq N},\;\det((a_i'-a_1')(a_j'-a_1'))_{2\leq i,j\leq N}\neq0.
    \end{equation}
    Then $\bigcup_{i=1}^NS_\mathbb{C}^n(a_i,r_i)$ and $\bigcup_{i=1}^NS_\mathbb{C}^n(a_i',r_i')$ are ambient isotopic.
\end{prop}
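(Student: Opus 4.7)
My plan is to exhibit a smooth path in the parameter space $T = (\mathbb{C}^{n+1})^N \times (\mathbb{C}^\times)^N$ from $(a,r)$ to $(a',r')$ that avoids the Landau locus, and then to lift that path to an ambient isotopy of $\mathbb{C}^{n+1}$ by integrating a horizontal vector field.

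For the path, let $L \subset T$ denote the closed subset at which either general position fails or the stated determinant vanishes. Both defining conditions are algebraic: general position of the $S_\mathbb{C}^n(a_i, r_i)$ amounts to linear independence of the gradients $2(z - a_i)$ on each multiple intersection (a Jacobian condition) together with the emptiness of all $(n+2)$-fold intersections, all of which cut out Zariski-closed subsets of $T$; the determinant is a polynomial in the coordinates of $(a,r)$. Hence $L$ is a proper complex-analytic subset of the connected complex manifold $T$, and $T \setminus L$ is path-connected (its complement has real codimension at least two). I may therefore choose a smooth path $\gamma:[0,1] \to T \setminus L$ with $\gamma(0) = (a,r)$ and $\gamma(1) = (a',r')$; write $\gamma(t) = (a(t), r(t))$.

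For the lift, set $F_i(z,t) := (z - a_i(t))^2 - r_i(t)^2$ and seek a smooth time-dependent vector field $V_t$ on $\mathbb{C}^{n+1}$ satisfying $\partial_t F_i + dF_i(V_t) = 0$ on $S_\mathbb{C}^n(a_i(t), r_i(t))$ for every $i$. Near any point of a stratum $\bigcap_{i \in I} S_\mathbb{C}^n(a_i(t), r_i(t))$ the gradients $dF_i$, $i \in I$, are linearly independent by general position, so a local solution $V_t$ can be obtained by inverting the corresponding Jacobian, and a partition of unity subordinate to tubular neighborhoods of the strata patches these solutions into a global $V_t$. The flow of $\partial_t + V_t$ on $[0,1] \times \mathbb{C}^{n+1}$ then gives an ambient isotopy $\Phi_t$ with $\Phi_t\bigl(\bigcup_i S_\mathbb{C}^n(a_i(0), r_i(0))\bigr) = \bigcup_i S_\mathbb{C}^n(a_i(t), r_i(t))$.

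The main obstacle is the non-compactness of $\mathbb{C}^{n+1}$ and of each $S_\mathbb{C}^n(a_i, r_i)$, which a priori threatens completeness of the flow. I would resolve this by passing to the projective compactification $\mathbb{CP}^{n+1}$: denoting by $\overline{S_i(t)}$ the closure of $S_\mathbb{C}^n(a_i(t), r_i(t))$, each $\overline{S_i(t)}$ is a smooth projective quadric meeting the hyperplane at infinity $H_\infty$ in the fixed isotropic cone $Q_\infty = \{[0:z] \mid z^2 = 0\}$, independent of $(a,r)$. The determinant hypothesis is precisely what ensures that the family of arrangements $\bigcup_i \overline{S_i(t)} \cup H_\infty$ admits a Whitney stratification of combinatorial type independent of $t$, and Thom's first isotopy lemma applied to the proper projection of this stratified total space onto $[0,1]$ then produces a stratified ambient isotopy of $\mathbb{CP}^{n+1}$ preserving $H_\infty$. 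Restricting to $\mathbb{C}^{n+1}$ yields the required isotopy of affine configurations. I expect the verification of Whitney regularity along $Q_\infty$ — where the projective quadrics are tangent to one another — to be the most delicate technical point.
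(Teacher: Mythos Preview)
The paper does not prove this proposition in place; it simply cites \cite{cutkosky1}. Your proposal is a reasonable sketch of what that reference presumably does, and the overall strategy---connect $(a,r)$ to $(a',r')$ by a path in $T\setminus L$ and lift via an isotopy theorem after compactifying in $\mathbb{CP}^{n+1}$---is the natural one.

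Two remarks. First, your claim that the failure of general position ``cuts out Zariski-closed subsets of $T$'' is slightly loose: these loci are \emph{a priori} only constructible, being projections from $T\times\mathbb{C}^{n+1}$. For quadrics one can write the conditions out explicitly as resultants and check closedness, or simply pass to the Zariski closure, which is still proper and is all the connectedness argument needs; but this should be said.

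Second, and more importantly, the point you flag at the end is the entire content of the proposition. The projective closures $\overline{S_i}$ all meet $H_\infty$ in the \emph{same} quadric $Q_\infty=\{[0:z]\mid z^2=0\}$ and are pairwise tangent along it, so the arrangement $\bigcup_i\overline{S_i}\cup H_\infty$ is not in general position and Thom's first isotopy lemma does not apply off the shelf. The paper itself flags this elsewhere (see the remark following Theorem~\ref{thm:decomposition_homology}) and again defers to \cite{cutkosky1}. Your intuition that the determinant hypothesis governs the stratification at infinity is correct---the Gram matrix of the $a_i-a_1$ controls how the tangent hyperplanes of the $\overline{S_i}$ are arranged along $Q_\infty$---but turning that into an honest Whitney stratification of constant combinatorial type, or otherwise handling the tangency, is the actual work. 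Labelling it ``delicate'' is fair; as written, though, you have located the difficulty rather than resolved it.
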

\begin{proof}
    This is one of the main results from \cite{cutkosky1} (with slightly different terminology).
\end{proof}
While from a homological point of view, a single complex sphere is no different than a real sphere, there clearly are some topological differences. For example, the complex sphere $S_\mathbb{C}^n$ is not bounded (and in particular not compact). This allows for the following phenomenon, which make unions of complex spheres particularly nice to handle.
\begin{prop}\label{prop:intersection_of_complex_spheres}
    Let $S_\mathbb{C}^n(a_1,r_1),\ldots,S_\mathbb{C}^n(a_N,r_N)$ be complex $n$-spheres in general position such that $\det((a_i-a_1)(a_j-a_1))_{2\leq i,j\leq N}\neq0$. Then the intersection $\bigcap_{i=1}^NS_\mathbb{C}^n(a_i,r_i)$ is homeomorphic to a complex $(n-N+1)$-sphere.
\end{prop}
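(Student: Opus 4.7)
I would reduce the $N$-fold intersection to a single sphere cut by an affine subspace, and then apply the classification of non-degenerate complex symmetric bilinear forms to recognize the result as a standard complex sphere.

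The first step is algebraic. Subtracting the equation $(z-a_1)^2 = r_1^2$ from each of the remaining sphere equations produces $N-1$ affine-linear equations of the form $2\,z \cdot (a_1 - a_i) = r_i^2 - r_1^2 + a_1^2 - a_i^2$. Let $V$ denote the affine subspace of $\mathbb{C}^{n+1}$ cut out by these equations. A direct expansion shows that $\bigcap_{i=1}^N S_\mathbb{C}^n(a_i,r_i) = S_\mathbb{C}^n(a_1,r_1) \cap V$: whenever $z \in V$ satisfies the sphere equation for $i=1$, the remaining sphere equations follow. Set $U := \mathrm{span}(a_2-a_1,\ldots,a_N-a_1)$. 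The hypothesis $\det((a_i-a_1)(a_j-a_1))_{2 \leq i,j \leq N} \neq 0$ is precisely the statement that the complex bilinear form $z \cdot w = \sum_k z_k w_k$ restricts to a non-degenerate form on $U$. Consequently $\mathbb{C}^{n+1} = U \oplus W$ with $W := U^{\perp}$, both summands non-degenerate, $\dim W = n - N + 2$, and $V = p_0 + W$ for any $p_0 \in V$.

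The second step brings the sphere equation on $V$ into standard form. Decompose $a_1 - p_0 = b_W + b_U$ according to $W \oplus U$ and parametrize $V$ by $w \in W$ via $z = p_0 + w$. Orthogonality kills the cross term and gives
\begin{equation*}
    (z-a_1)^2 = (w - b_W)^2 + b_U^2,
\end{equation*}
so that $(z-a_1)^2 = r_1^2$ becomes $(w - b_W)^2 = s^2$ with $s^2 := r_1^2 - b_U^2$. Since $(W, \cdot|_W)$ is non-degenerate, a complex orthonormal basis yields a linear isometry $W \overset{\sim}\to \mathbb{C}^{n-N+2}$, under which the equation becomes that of a standard complex $(n-N+1)$-sphere of radius $s$; by \eqref{eq:unit_homeo} this is homeomorphic to $S_\mathbb{C}^{n-N+1}$, \emph{provided} $s \neq 0$.

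Ruling out $s = 0$ is the main obstacle, and is the only place where general position (as opposed to the Gram condition alone) is needed. If $s = 0$, the vertex $z_0 := p_0 + b_W$ lies on every $S_\mathbb{C}^n(a_i,r_i)$, and the normals there are the vectors $z_0 - a_i$. One computes $z_0 - a_1 = -b_U \in U$ and $z_0 - a_i = (z_0 - a_1) + (a_1 - a_i) \in U$ for $i \geq 2$, so all $N$ normals lie in the $(N-1)$-dimensional subspace $U$ and are linearly dependent. This contradicts transversality of the $N$ hypersurfaces at $z_0$, which is part of the general-position assumption. Hence $s \neq 0$, and the intersection is homeomorphic to $S_\mathbb{C}^{n-N+1}$.
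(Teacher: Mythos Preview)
Your proof is correct and takes a genuinely different route from the paper's. The paper invokes Proposition~\ref{prop:ambient_isotopy_of_complex_spheres} (the ambient-isotopy result, imported from \cite{cutkosky1}) to move the centers to the concrete positions $a_N=0$, $a_j=i\cdot e_j$ for $j<N$; the subtracted equations then literally fix the first $N-1$ coordinates, and projecting to the remaining $n-N+2$ coordinates exhibits the intersection as a complex sphere. You instead work intrinsically: the Gram hypothesis gives the orthogonal splitting $\mathbb{C}^{n+1}=U\oplus W$ for the complex bilinear form, and you recognize the restricted quadric on $V=p_0+W$ as a standard sphere via a complex orthonormal basis of $W$. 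What this buys you is self-containment---no appeal to the external isotopy theorem---and an explicit verification that the resulting radius $s$ is nonzero, which you deduce cleanly from transversality (all $N$ normals at the would-be vertex lie in the $(N-1)$-dimensional $U$); the paper's proof leaves this non-degeneracy point implicit. The paper's approach, in turn, is more hands-on and has the side benefit of previewing the special coordinates used later in Section~\ref{sec:generators}.
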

\begin{proof}
    The defining equations for $\bigcap_{i=1}^NS_\mathbb{C}^n(a_i,r_i)$ read
    \begin{equation}
        (z-a_1)^2=r_1^2\quad,\quad \ldots\quad ,\quad (z-a_N)^2=r_N^2.
    \end{equation}
    By Proposition \ref{prop:ambient_isotopy_of_complex_spheres}, we may assume $a_N=0$ and $a_j=i\cdot e_j$ (with $e_j$ the $j$th canonical basis vector of $\mathbb{C}^{n+1}$) for all $j\in\{1,\ldots,N-1\}$ by first applying an appropriate ambient isotopy.\footnote{In Section \ref{sec:generators}, we show that for this choice the complex spheres are indeed in general position.} Then the last equation reads $z^2=r_N^2$ and plugging this into the remaining ones yields
    \begin{equation}
        -2a_jz=r_j^2-r_N^2-a_j^2 \qquad\Leftrightarrow\qquad z_j=\frac{i}{2}(r_j^2-r_N^2+1)
    \end{equation}
    for all $j\in\{1,\ldots,N-1\}$. Hence, we have
    \begin{equation}
        \bigcap_{i=1}^NS_\mathbb{C}^n(a_i,r_i)=\{z\in\mathbb{C}^{n+1} \;|\; z^2=r_N^2,\; \forall j\in\{1,\ldots,N-1\}:z_j=\frac{i}{2}(r_j^2-r_N^2+1)\}
    \end{equation}
    and we get the desired homeomorphism by projecting down to the last $n-N+2$ components of $\mathbb{C}^{n+1}$.
\end{proof}
The fact that we do not leave the world of complex spheres by forming intersections turns out to be very useful for our computations. It allows us to apply useful induction arguments in our computation of the homology further below.\\
For our purposes, it is convenient to not only consider finite unions of complex spheres, but unions intersected with finitely many complex spheres. Due to Proposition \ref{prop:intersection_of_complex_spheres}, such intersected unions are still unions of complex spheres (of smaller dimension). We make the following
\begin{defn}
    Let $(a,r)\in(\mathbb{C}^{n+1})^N\times(\mathbb{C}^\times)^N$ and denote $S_1:=S_\mathbb{C}^n(a_1,r_1),\ldots,S_N:=S_\mathbb{C}^n(a_N,r_N)$ be $N$ complex $n$-spheres. For every $I\subset\{1,\ldots,N\}$ and $J\subset\{1,\ldots,N\}\backslash I$, we call
    \begin{equation}
        S_{a,r}^{I,J}:=\bigcap_{i\in I}S_i\cap\bigcup_{j\in J}S_j
    \end{equation}
    an \textit{arrangement of complex $n$-spheres}. Furthermore, we say the arrangement $S_{a,r}^{I,J}$ is in \textit{general position} if the spheres $S_1,\ldots,S_N$ are in general position as submanifolds of $\mathbb{C}^{n+1}$ and if
    \begin{equation}
        \det((a_i-a_1)(a_j-a_1))_{2\leq i,j\leq N}\neq0.
    \end{equation}
\end{defn}
Let $S_1:=S_\mathbb{C}^n(a_1,r_1),\ldots,S_N:=S_\mathbb{C}^n(a_N,r_N)$ be complex $n$-spheres. If the radii $r_1,\ldots,r_N$ and centers $a_1,\ldots,a_N$ are clear from the context or of no particular relevance, we often write $S^{I,J}=S_{a,r}^{I,J}$ for arrangements of these spheres in a slight abuse of notation. First, we make some elementary observations: For every $I\subset\{1,\ldots,N\}$, every $J\subset\{1,\ldots,N\}\backslash I$ and $j\in\{1,\ldots,N\}\backslash I$ we have
\begin{equation}
    S^{\emptyset,J}=\bigcup_{j\in J}S_j,\qquad S^{I,\emptyset}=\emptyset,\qquad S^{I,\{j\}}=\bigcap_{i\in I\cup\{j\}}S_i.
\end{equation}
Furthermore, let $J_1,J_2\subset J$. Then we have
\begin{equation}
    S^{I,J_1\cup J_2}=S^{I,J_1}\cup S^{I,J_2}
\end{equation}
and if $J_1\subset J_2$, we have
\begin{equation}
    S^{I,J_1}\subset S^{I,J_2}.
\end{equation}
The other way around, if $I_1\subset I_2\subset\{1,\ldots,N\}$ and $J\subset\{1,\ldots,N\}\backslash I_2$ then
\begin{equation}
    S^{I_2,J}\subset S^{I_1,J}.
\end{equation}
Also note that each $S_{a,r}^{I,J}$ is an algebraic set. If we denote $p_i(z):=(z-a_i)^2-r_i^2$ for all $1\leq i\leq N$, then $S_{a,r}^{I,J}$ is the common zero locus of the polynomials $p_i$ for $i\in I$ and $\prod_{j\in J}p_j$.
\begin{prop}\label{prop:pathconnected}
    Let $S_1,\ldots,S_N\subset\mathbb{C}^{n+1}$ be complex $n$-spheres in general position and $S^{I,J}$ an arrangement of these spheres. If $|I|<n$ or $J=\emptyset$, then $S^{I,J}$ is a path-connected space and if $n=|I|$ then $S^{I,J}$ consists of $2|J|$ points and thus has $2|J|$ path-connected components.
\end{prop}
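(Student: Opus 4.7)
The plan is to decompose
\[
    S^{I,J}=\bigcup_{j\in J}S^{I,\{j\}},\qquad S^{I,\{j\}}=\bigcap_{i\in I\cup\{j\}}S_i,
\]
which is one of the elementary identities recorded just before the statement, and then apply Proposition \ref{prop:intersection_of_complex_spheres} both to each piece $S^{I,\{j\}}$ and to pairwise intersections $S^{I,\{j_1\}}\cap S^{I,\{j_2\}}=\bigcap_{i\in I\cup\{j_1,j_2\}}S_i$. The case $J=\emptyset$ is immediate since $S^{I,\emptyset}=\emptyset$.

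Suppose next that $J\neq\emptyset$ and $|I|<n$. Then each piece $S^{I,\{j\}}$ is the intersection of $|I|+1\leq n$ complex $n$-spheres in general position, so by Proposition \ref{prop:intersection_of_complex_spheres} it is homeomorphic to a complex $(n-|I|)$-sphere; since $n-|I|\geq 1$, Proposition \ref{prop:deformation_retract} (together with path-connectedness of real spheres of positive dimension) implies each piece is path-connected. For any two distinct $j_1,j_2\in J$ the pairwise intersection is, again by Proposition \ref{prop:intersection_of_complex_spheres}, a complex $(n-|I|-1)$-sphere, and complex $k$-spheres are non-empty for every $k\geq 0$. Hence every pair of pieces shares a point, and concatenating paths through such bridging points yields the path-connectedness of the whole union.

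Finally suppose $|I|=n$. Then each piece $S^{I,\{j\}}$ is a complex $0$-sphere, i.e.\ a pair of distinct points. For distinct $j_1,j_2\in J$ the pairwise intersection would be a transverse intersection of $n+2$ complex hypersurfaces in $\mathbb{C}^{n+1}$; since the expected complex codimension $n+2$ exceeds the ambient dimension $n+1$, the general position hypothesis forces this intersection to be empty. The $|J|$ pieces are therefore pairwise disjoint two-point sets, so $S^{I,J}$ consists of exactly $2|J|$ points and has $2|J|$ path-components.

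The only real technical point is that Proposition \ref{prop:intersection_of_complex_spheres} must be applied to each subsystem $\{S_i:i\in I\cup\{j\}\}$ and $\{S_i:i\in I\cup\{j_1,j_2\}\}$, which requires that general position together with the Gram-determinant non-degeneracy condition restricts to these subsystems. This inheritance follows from the standing hypothesis that the ambient arrangement is in general position, so beyond this bookkeeping the argument is essentially a set-theoretic consequence of the earlier propositions and is not where the difficulty of the paper lies.
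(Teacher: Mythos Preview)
Your proof is correct and follows essentially the same approach as the paper's: decompose $S^{I,J}$ into the pieces $S^{I,\{j\}}$, use Proposition~\ref{prop:intersection_of_complex_spheres} to identify each piece and each pairwise intersection with a complex sphere of the appropriate dimension, and then concatenate paths through a common point (resp.\ argue disjointness via general position when $|I|=n$). Your remark about the Gram-determinant condition restricting to subsystems is a valid bookkeeping point that the paper leaves implicit.
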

\begin{proof}
    First, if $J=\emptyset$ then $S^{I,J}=\emptyset$, which is path-connected. Now suppose that $J\neq\emptyset$ and $|I|<n$. Let $x,y\in S^{I,J}$. Then $x\in S^{I,\{j_1\}}$ and $y\in S^{I,\{j_2\}}$ for some $j_1,j_2\in J$. If $j_1=j_2$, there is nothing to do since $S^{I,\{j_1\}}$ is homeomorphic to a complex $(n-|I|)$-sphere according to Proposition \ref{prop:intersection_of_complex_spheres}, which is a path-connected space. If $j_1\neq j_2$ on the other hand, note that $S^{I,\{j_1\}}\cap S^{I,\{j_2\}}=S^{I\cup\{j_1\},\{j_2\}}\neq\emptyset$. So let $z\in S^{I,\{j_1\}}\cap S^{I,\{j_2\}}$. Then $z\in S^{I,\{j_1\}}$ and hence there exists a path $\gamma_1$ in $S^{I,\{j_1\}}$ from $x$ to $z$. Similarly $z\in S^{I,\{j_2\}}$, so there exists a path $\gamma_2$ from $z$ to $y$. Hence, concatenating $\gamma_1$ and $\gamma_1$ yields a path from $x$ to $y$. The case $n=|I|$ is easy since in this case every $S^{I,\{j\}}$ consists simply of two disjoint points. Since $S^{I,J}$ is assumed to be in general position, the $S^{I,\{j\}}$ must be disjoint. Thus, $S^{I,J}$ is the disjoint union of $2|J|$ points.
\end{proof}

\subsection{Homology of Complex Sphere Arrangements}\label{subsec:homology}
Now we turn to the task of understanding the homology of $\mathbb{C}^{n+1}$ relative to an arrangement of complex spheres $S^{\emptyset,I}$ to at least such a degree that we can name a set of generators and compute the intersection indices of them with $(i\cdot\mathbb{R})^{n+1}$. In this subsection, we begin the task by computing the homology of $\mathbb{C}^{n+1}$ relative to any arrangement of complex spheres $S^{I,J}$ by decomposing it into a direct sum which is trivial in all dimensions but $n+1$, where it consists of summands isomorphic to $\mathbb{Z}$, one for each non-empty subset $K$ of $J$ with $|I\cup K|\leq n+1$. Having constructed such an isomorphism makes the calculations in the subsequent Section \ref{sec:generators} possible, where we determine a set of explicit generators by constructing a CW-decomposition of a subspace of $S^{I,J}$, large enough to contain all the generators.\\
First we need the following simple lemma, which is well-known (see for example \cite{hatcher}) but not often explicitly stated:
\begin{lem}\label{lem:decomposition_homology}
    Let $X$ be a contractible topological space, $A\subset X$ a non-empty subspace. Then the boundary homomorphism $\partial$ induces an isomorphism
    \begin{equation}
        \partial^\ast:H_k(X,A)\overset{\sim}\to\widetilde{H}_{k-1}(A)
    \end{equation}
    for all $k\in\mathbb{N}$.\footnote{The reader is reminded of the fact that for any non-empty space $Y$, we have $\widetilde{H}_k(Y)=0$ for all $k<0$.}
\end{lem}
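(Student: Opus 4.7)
The plan is to derive the isomorphism directly from the long exact sequence of the pair $(X,A)$ in reduced homology. Recall that for any non-empty subspace $A\subset X$, there is a long exact sequence
\begin{equation*}
    \cdots\to\widetilde{H}_k(A)\to\widetilde{H}_k(X)\to H_k(X,A)\overset{\partial}\to\widetilde{H}_{k-1}(A)\to\widetilde{H}_{k-1}(X)\to\cdots,
\end{equation*}
where the relative groups $H_k(X,A)$ are the same as in unreduced homology. This is the only ingredient really needed.

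First, I would invoke contractibility: since $X$ is contractible, it is homotopy equivalent to a point, and hence $\widetilde{H}_k(X)=0$ for every $k\in\mathbb{N}$. Substituting this into the long exact sequence above, the terms flanking $H_k(X,A)\overset{\partial}\to\widetilde{H}_{k-1}(A)$ both vanish, so exactness forces $\partial$ to be simultaneously injective (kernel equals image of $\widetilde{H}_k(X)=0$) and surjective (image equals kernel of $\widetilde{H}_{k-1}(X)=0$). This gives the desired isomorphism $\partial^\ast:H_k(X,A)\overset{\sim}\to\widetilde{H}_{k-1}(A)$ for all $k\geq 1$.

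Next, I would handle the edge case $k=0$ separately, since the conventions around $\widetilde{H}_{-1}$ must be made explicit. By the footnote, $\widetilde{H}_{-1}(A)=0$ for the non-empty space $A$. On the other hand, $X$ is contractible, hence path-connected, and $A\neq\emptyset$, so every $0$-chain in $X$ is homologous (rel.\ $A$) to a chain supported in $A$; equivalently, in the long exact sequence the map $H_0(A)\to H_0(X)$ is surjective, which forces $H_0(X,A)=0$. Hence the isomorphism $\partial^\ast:H_0(X,A)\to\widetilde{H}_{-1}(A)$ holds trivially because both sides vanish.

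There is essentially no obstacle here; the only point requiring any care is ensuring that one uses the \emph{reduced} long exact sequence (which needs $A\neq\emptyset$) so that the argument also functions in low degrees, and checking the degenerate case $k=0$ to justify the uniform statement for all $k\in\mathbb{N}$.
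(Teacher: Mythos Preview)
Your proof is correct and follows essentially the same approach as the paper: both use the reduced long exact sequence of the pair $(X,A)$, invoke $\widetilde{H}_k(X)=0$ from contractibility, and read off that $\partial_\ast$ is an isomorphism. The paper handles all degrees uniformly via the identification $H_k(X,A)=\widetilde{H}_k(X,A)$, whereas you treat $k=0$ separately, but this is a cosmetic difference.
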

\begin{proof}
    We consider the long exact sequence
    \begin{equation}\label{eq:lem_decomposition_homology_1}
        \cdots\to\widetilde{H}_k(X)\to\widetilde{H}_k(X,A)\overset{\partial_\ast}\to\widetilde{H}_{k-1}(A)\to\widetilde{H}_{k-1}(X)\to\cdots
    \end{equation}
    of the reduced homology groups of the pair $(X,A)$. Recall that $\partial_\ast$ is the homomorphism induced by the boundary homomorphism $\partial$ on the level of chains. Since $X$ is contractible, we have $\widetilde{H}_k(X)=0$ for all $k\in\mathbb{N}$. Thus, the above sequence \eqref{eq:lem_decomposition_homology_1} contains a sequence
    \begin{equation}
        0\to\widetilde{H}_k(X,A)\overset{\partial_\ast}\to\widetilde{H}_{k-1}(A)\to 0
    \end{equation}
    for each $k\in\mathbb{N}$, which immediately yields $H_k(X,A)=\widetilde{H}_k(X,A)\simeq\widetilde{H}_{k-1}(A)$ and the isomorphism is realized by the boundary $\partial_\ast$.
\end{proof}
The above lemma is useful to translate the computation of the relative homology groups we are interested in to the computation of regular homology groups. For the computation of the latter, we iteratively apply a Mayer-Vietoris sequence argument to decompose the homology into contributions from one complex $(n-|I|)$-sphere $S^{I,\{j\}}$, the union of the remaining $(n-|I|)$-spheres $S^{I,J\backslash\{j\}}$ and their intersection $S^{I,\{j\}}\cap S^{I,J\backslash\{j\}}=S^{I\cup\{j\},J\backslash\{j\}}$, which is homeomorphic to a union of complex spheres of smaller dimension. This opens up the possibility of an inductive argument.
\begin{thm}\label{thm:decomposition_homology}
    Let $S_1,\ldots,S_N\subset\mathbb{C}^{n+1}$ be complex $n$-spheres in general position and $S^{I,J}$ an arrangement of these spheres such that $|I|<n$.\footnote{The marginal cases $|I|=n$ and $|I|=n+1$ are trivial: If $|I|=n$, then $S^{I,J}$ is just $2|J|$ disjoint points. If $|I|=n+1$, then $J=\emptyset$ and thus $S^{I,J}=\mathbb{C}^{n+1}$ or $J\neq\emptyset$ and thus $S^{I,J}=\emptyset$.} Let $m:=|J|$ and write $J=\{j_1,\ldots,j_m\}$. We have a direct sum decomposition
    \begin{equation}\label{eq:decomposition_thm}
        H_{n-|I|+1}(\mathbb{C}^{n+1},S^{I,J})\simeq\widetilde{H}_{n-|I|}(S^{I,J})\;\simeq\bigoplus_{\substack{K\subset J \\ K\neq\emptyset}}\widetilde{H}_{n-|I|-|K|+1}(\bigcap_{i\in I\cup K}S_i)\;\simeq\;\begin{cases}\mathbb{Z}^{2^{|J|}-1} & \text{if }N\leq n+1 \\ \mathbb{Z}^{2^{|J|}-2} & \text{if }N=n+2\end{cases}
    \end{equation}
    for the relative homology group in degree $n-|I|+1$ and
    \begin{equation}
        H_{k+1}(\mathbb{C}^{n+1},S^{I,J})=\widetilde{H}_k(S^{I,J})=0 \qquad \forall k\neq n-|I|
    \end{equation}
    for the remaining degrees.\\
    Furthermore, the first isomorphism in \eqref{eq:decomposition_thm} is given by the boundary $\partial_\ast$ as in Lemma \ref{lem:decomposition_homology} and the second isomorphism is given by a map $g_{I,J}$, which is recursively determined by
    \begin{equation}\label{eq:decomposition_thm_iso}
        g_{I,J}=(f_1,g_{I,J\backslash\{j_1\}}\circ f_2,g_{I\cup\{j_1\},J\backslash\{j_1\}}\circ\delta_{j_1}),\qquad g_{I,\emptyset}=\mathrm{id},
    \end{equation}
    with $\delta_{j_1}$ the Mayer-Vietoris homomorphism with respect to the decomposition $S^{I,J}=S^{I,\{j_1\}}\cup S^{I,J\backslash\{j_1\}}$ and
    \begin{equation}
        f_1:\widetilde{H}_{n-|I|+1}(S^{I,J})\to\widetilde{H}_{n-|I|+1}(S^{I,\{j_1\}}),\qquad f_2:\widetilde{H}_{n-|I|+1}(S^{I,J})\to\widetilde{H}_{n-|I|+1}(S^{I,J\backslash\{j_1\}})
    \end{equation}
    are homomorphisms with the following property: If $k:S^{I,\{j_1\}}\hookrightarrow S^{I,J}$ and $l:S^{I,J\backslash\{j_1\}}\hookrightarrow S^{I,J}$ are the natural inclusions then
    \begin{equation}
        f_1\circ k_\ast=\mathrm{id},\quad f_1\circ l_\ast=0 \qquad\text{and}\qquad f_2\circ k_\ast=0,\quad f_2\circ l_\ast=\mathrm{id}.
    \end{equation}
\end{thm}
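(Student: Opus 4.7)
The plan is to proceed by induction on $|J|$, using the Mayer-Vietoris sequence to peel off one sphere at a time. The first isomorphism $H_{k}(\mathbb{C}^{n+1}, S^{I,J}) \simeq \widetilde{H}_{k-1}(S^{I,J})$ is immediate from Lemma \ref{lem:decomposition_homology} (using contractibility of $\mathbb{C}^{n+1}$ and non-emptiness of $S^{I,J}$ when $J \neq \emptyset$), realized by $\partial_\ast$; the task thus reduces to computing $\widetilde{H}_\bullet(S^{I,J})$ and constructing $g_{I,J}$.

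For the base case $|J|=1$, the space $S^{I,\{j_1\}} = \bigcap_{i \in I \cup \{j_1\}} S_i$ is a complex $(n-|I|)$-sphere by Proposition \ref{prop:intersection_of_complex_spheres}, so Corollary \ref{cor:homology_complex_sphere} gives the homology and the formula holds with $g_{I,\{j_1\}} = \mathrm{id}$. For the inductive step $|J| \geq 2$, pick $j_1 \in J$ and set $J' := J \setminus \{j_1\}$, then decompose $S^{I,J} = S^{I,\{j_1\}} \cup S^{I,J'}$ as a union of subcomplexes (both being algebraic sets, this can be arranged in a common CW-structure). The intersection equals $S^{I \cup \{j_1\}, J'}$, and Theorem \ref{thm:mayer-vietoris} yields the corresponding Mayer-Vietoris sequence.

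By induction, $\widetilde{H}_\bullet(S^{I,J'})$ is concentrated in degree $n-|I|$. For $S^{I \cup \{j_1\}, J'}$, induction also applies when $|I|+1 < n$, giving concentration in degree $n-|I|-1$; the marginal case $|I|+1 = n$ is handled directly via Proposition \ref{prop:pathconnected} (a disjoint union of $2|J'|$ points, whose reduced $H_0$ sits in degree $n-|I|-1 = 0$). Combined with Corollary \ref{cor:homology_complex_sphere} for $S^{I,\{j_1\}}$, the Mayer-Vietoris sequence vanishes outside one degree and contracts to a short exact sequence
\begin{equation*}
0 \to \widetilde{H}_{n-|I|}(S^{I,\{j_1\}}) \oplus \widetilde{H}_{n-|I|}(S^{I,J'}) \xrightarrow{(k_\ast,\,l_\ast)} \widetilde{H}_{n-|I|}(S^{I,J}) \xrightarrow{\delta_{j_1}} \widetilde{H}_{n-|I|-1}(S^{I \cup \{j_1\}, J'}) \to 0.
\end{equation*}
Since the right-hand term is free abelian, the sequence splits. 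A choice of splitting identifies $\widetilde{H}_{n-|I|}(S^{I,J})$ with the direct sum of the three groups, and defining $f_1, f_2$ as the projections onto the first two summands yields the stated compatibility with $k_\ast, l_\ast$ automatically. Composing with the inductively available $g_{I,J'}$ and $g_{I \cup \{j_1\}, J'}$ then assembles $g_{I,J}$ as in \eqref{eq:decomposition_thm_iso}. The rank count reduces to counting the non-empty $K \subset J$ for which $\bigcap_{i \in I \cup K} S_i$ has non-negative complex dimension, i.e.\ $|I|+|K| \leq n+1$; the cases $N \leq n+1$ and $N = n+2$ differ precisely by whether the extremal subset with $|I \cup K| = n+2$ must be discarded (yielding an empty intersection), which explains the $-1$ versus $-2$.

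The main obstacle is the bookkeeping at the boundary of the induction: the marginal case $|I|+1 = n$ for $S^{I \cup \{j_1\}, J'}$ must be fit into the scheme (its reduced homology lives in the correct degree, but because the theorem's hypothesis $|I|<n$ is violated there, the recursive formula $g_{I \cup \{j_1\}, J'}$ must either be extended ad hoc or replaced by an explicit identification of $\widetilde{H}_0$ of a $0$-dimensional arrangement). Beyond that, verifying the splitting is routine, but one should check that the assembled map $g_{I,J}$ really is an isomorphism and not merely a homomorphism of the correct rank; this follows from the short exactness of the Mayer-Vietoris fragment, once $f_1, f_2$ are known to witness the splitting on the kernel side.
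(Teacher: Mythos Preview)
Your proposal is correct and follows essentially the same route as the paper: induction on $|J|$, the Mayer--Vietoris decomposition $S^{I,J}=S^{I,\{j_1\}}\cup S^{I,J\setminus\{j_1\}}$, reduction to a single short exact sequence concentrated in degree $n-|I|$, splitting via freeness of the right-hand term, and the recursive assembly of $g_{I,J}$. Your explicit handling of the marginal case $|I\cup\{j_1\}|=n$ (where the intersection term falls outside the inductive hypothesis and must be read off directly from Proposition~\ref{prop:pathconnected}) is in fact more careful than the paper's own proof, which invokes the induction hypothesis there without comment and relies implicitly on the footnote in the theorem statement.
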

Before we go into the proof, it should be remarked that this decomposition is reminiscent of the homology decomposition from Theorem (3) in \cite{app-iso}. In this theorem, the (co)homology of a complement $Y-\bigcup_{i\in I}\Sigma_i$ is decomposed into a direct sum of homology groups of $Y\cap\bigcap_{j\in J}\Sigma_j$ and $J$ runs over subsets of $I$. Here, $Y$ is $\mathbb{C}^n$ (viewed as embedded in $\mathbb{C}\mathbb{P}^n$) intersected with a finite number of compact complex analytic codimension 1 submanifolds $S^i$ of $\mathbb{C}\mathbb{P}^n$ and the $\Sigma_i$ are intersections of $Y$ with compact complex analytic codimension 1 submanifolds $S_i$ of $\mathbb{C}\mathbb{P}^n$. The only necessary assumption is that the $S^i$ and $S_i$ together with the hyperplane at infinity are in general position. It should be noted that our situation, where all manifolds $\Sigma_i$ are complex spheres, is not covered by this theorem. We would have to set $S_i:=S_{\mathbb{C}\mathbb{P}}^n(a_i,r_i)$\footnote{Note that in our notation, $n$ is incremented by 1.} but then the $S_i$ are not in general position at infinity (see \cite{cutkosky1} for more details and a resolution of this problem in the context of singular integrals). This is also reflected by the fact that the following proof depends on the iterated structure of complex sphere arrangements (intersections of complex spheres are complex spheres, see Proposition \ref{prop:intersection_of_complex_spheres}). We remark however that one would obtain a decomposition as in \cite{app-iso} if one could find a different smooth toric compactification such that the $S_i$ are in general position together with all the hyperplanes at infinity \cite{toric_compactification}.
\begin{proof}
    Since $\mathbb{C}^{n+1}$ is contractible, we immediately obtain from Lemma \ref{lem:decomposition_homology} that the boundary $\partial$ descends to an isomorphism
    \begin{equation}
        H_k(\mathbb{C}^{n+1},S^{I,J})\simeq\widetilde{H}_{k-1}(S^{I,J})
    \end{equation}
    for all $k\in\mathbb{N}$. Now we show that $\widetilde{H}_k(S^{I,J})=0$ for all $k\neq n-|I|$ and that
    \begin{equation}
        \widetilde{H}_{n-|I|}(S^{I,J})\simeq\bigoplus_{\substack{K\subset J \\ K\neq\emptyset}}\widetilde{H}_{n-|I|-|K|+1}(\bigcap_{i\in I\cup K}S_i),
    \end{equation}
    where the isomorphism is given by $g_{I,J}$ from equation \eqref{eq:decomposition_thm_iso}.\\
    First, we equip $S^{I,J}$ with the structure of a CW-complex such that each $S^{I',J'}$ with $I\subset I'$ and $J'\subset J$ appears as a subcomplex. We conduct the proof by induction on $|J|$. If $|J|=0$, there is nothing to do: In this case, we have $J=\emptyset$ and thus $S^{I,J}=\emptyset$. This means we trivially have
    \begin{equation}
        \widetilde{H}_{n-|I|}(S^{I,J})=0=\bigoplus_{\substack{K\subset\emptyset \\ K\neq\emptyset}}\widetilde{H}_{n-|I|-|K|+1}(\bigcap_{i\in I\cup K}S_i).
    \end{equation}
    Now suppose we already know the theorem is true up to some some $|J|-1=m-1\geq0$ and want to show that it also holds for $|J|=m$. Let us write $J=\{j_1,\ldots,j_m\}$. The reduced Mayer-Vietoris sequence (see Theorem \ref{thm:mayer-vietoris}) associated to the decomposition $S=S^{I,\{j_1\}}\cup S^{I,J\backslash\{j_1\}}$ reads
    \begin{equation}\label{eq:mayer_vietoris}
        \cdots\to\widetilde{H}_k(S^{I\cup\{j_1\},J\backslash\{j_1\}})\to\widetilde{H}_k(S^{I,\{j_1\}})\oplus\widetilde{H}_k(S^{I,J\backslash\{j_1\}})\to\tilde{H}_k(S^{I,J})\to\widetilde{H}_{k-1}(S^{I\cup\{j_1\},J\backslash\{j_1\}})\to\cdots
    \end{equation}
    We can apply the induction hypothesis to $S^{I\cup\{j_1\},J\backslash\{j_1\}}$, $S^{I,\{j_1\}}$ and $S^{I,J\backslash\{j_1\}}$ so that
    \begin{equation}
        \widetilde{H}_k(S^{I\cup\{j_1\},J\backslash\{j_1\}})=0
    \end{equation}
    for all $k\in\mathbb{N}$ such that $k\neq n-|I\cup\{j_1\}|+1=n-|I|$ and
    \begin{equation}
        \widetilde{H}_k(S^{I,\{j_1\}})=\widetilde{H}_k(S^{I,J\backslash\{j_1\}})=0
    \end{equation}
    for all $k\in\mathbb{N}$ such that $k\neq n-|I|+1$. Thus, the sequence \eqref{eq:mayer_vietoris} contains an exact sequence
    \begin{equation}
        0\to\widetilde{H}_k(S^{I,J})\to0
    \end{equation}
    for all $k\neq n-|I|+1$, which means $\widetilde{H}_k(S^{I,J})=0$ for all $k\neq n-|I|+1$. Additionally, \eqref{eq:mayer_vietoris} contains a short exact sequence
    \begin{equation}
        0\to\widetilde{H}_{n-|I|+1}(S^{I,\{j_1\}})\oplus\widetilde{H}_{n-|I|+1}(S^{I,J\backslash\{j_1\}})\overset{k_\ast-l_\ast}\to\widetilde{H}_{n-|I|+1}(S^{I,J})\overset{\delta_{j_1}}\to\widetilde{H}_{n-|I|}(S^{I\cup\{j_1\},J\backslash\{j_1\}})\to0,
    \end{equation}
    where $k:S^{I,\{j_1\}}\hookrightarrow S^{I,J}$ and $l:S^{I,J\backslash\{j_1\}}\hookrightarrow S^{I,J}$ are the natural inclusions, while $\delta_{j_1}$ is the Mayer-Vietoris homomorphism with respect to the given decomposition. Now, again by the induction hypothesis, the group $\widetilde{H}_{n-|I|}(S^{I\cup\{j_1\},J\backslash\{j_1\}})$ is free and hence this short exact sequence splits. Thus,
    \begin{equation}\label{eq:splitting}
        \widetilde{H}_{n-|I|+1}(S^{I,J})\simeq\widetilde{H}_{n-|I|+1}(S^{I,\{j_1\}})\oplus\widetilde{H}_{n-|I|+1}(S^{I,J\backslash\{j_1\}})\oplus\widetilde{H}_{n-|I|}(S^{I\cup\{j_1\},J\backslash\{j_1\}}),
    \end{equation}
    where the isomorphism is given by a map $(f_1,f_2,\delta_\ast)$ such that
    \begin{equation}
        (f_1,f_2)\circ(k_\ast-l_\ast)=\mathrm{id},
    \end{equation}
    which is equivalent to
    \begin{equation}
        f_1\circ k_\ast=\mathrm{id},\quad f_1\circ l_\ast=0 \qquad\text{and}\qquad f_2\circ k_\ast=0,\quad f_2\circ l_\ast=\mathrm{id}.
    \end{equation}
    Using the induction hypothesis one more time, we have
    \begin{equation}
        \widetilde{H}_{n-|I|+1}(S^{I,J\backslash\{j_1\}})\overset{g_{I,J\backslash\{j_1\}}}\simeq\bigoplus_{\substack{K\subset J\backslash\{j_1\} \\ K\neq\emptyset}}\widetilde{H}_{n-|I|-|K|+1}(\bigcap_{i\in I\cup K}S_i)
    \end{equation}
    and
    \begin{equation}
        \begin{split}
            \widetilde{H}_{n-|I\cup\{j_1\}|+1}(S^{I\cup\{j_1\},J\backslash\{j_1\}})&\overset{g_{I\cup\{j_1\},J\backslash\{j_1\}}}\simeq\bigoplus_{\substack{K\subset J\backslash\{j_1\} \\ K\neq\emptyset}}\widetilde{H}_{n-|I\cup\{j_1\}|-|K|+1}(\bigcap_{i\in I\cup\{j_1\}\cup K}S_i)\\
            &\overset{\phantom{g_{I\cup\{j_1\},J\backslash\{j_1\}}}}=\bigoplus_{\substack{K\subset J \\ j_1\in K,\; K\neq\emptyset,\{j_1\}}}\widetilde{H}_{n-|I|-|K|+1}(\bigcap_{i\in I\cup K}S_i).
        \end{split}
    \end{equation}
    Plugging this into equation \eqref{eq:splitting} and using $S^{I,\{j_1\}}=\bigcap_{i\in I\cup\{j_1\}}S_i$, we obtain
    \begin{equation}
        \widetilde{H}_{n-|I|+1}(S^{I,J})\overset{(f_1,g_{I,J\backslash\{j_1\}}\circ f_2,g_{I\cup\{j_1\},J\backslash\{j_1\}}\circ\delta_{j_1})}\simeq\bigoplus_{\substack{K\subset J \\ K\neq\emptyset}}\widetilde{H}_{n-|I|-|K|+1}(\bigcap_{i\in I\cup K}S_i)
    \end{equation}
    as claimed.\\
    All that remains to do is prove the existence of the last isomorphism in \eqref{eq:decomposition_thm}, which is easy: For any non-empty $K\subset\{1,\ldots,N\}$, the space $\bigcap_{i\in K}S_i$ is homeomorphic to a complex $(n-|K|+1)$-sphere according to Proposition \ref{prop:intersection_of_complex_spheres} and thus we have
    \begin{equation}
        \widetilde{H}_{n-|I|+1}(\bigcap_{i\in K}S_i)\simeq\begin{cases}\mathbb{Z} & \text{if }|K|\leq n+1 \\ 0 & \text{otherwise} \end{cases}.
    \end{equation}
    according to Corollary \ref{cor:homology_complex_sphere}. Thus, $\widetilde{H}_{n-|I|+1}(S^{I,J})$ contains a summand $\mathbb{Z}$ for every non-empty $K\subset J$ with $|I\cup K|\leq n+1$. Since there are $2^{|J|}-1$ such subsets of $J$ if $N\leq n+1$ and $2^{|J|}-2$ if $N=n+2$, we are done.
\end{proof}

\subsubsection{Comparison with the Local Homology}
The above Theorem \ref{thm:decomposition_homology} computes the homology of the pair $(\mathbb{C}^{n+1},S^{I,J})$ up to isomorphism. But a priori, this does not tell us much about the discontinuity of singular integrals since the vanishing cycle, sphere and cell are homology classes in
\begin{equation}
    H_{n+1-N}(U\cap\bigcap_{i\in I}S_i),\quad H_{n+1}(U-\bigcup_{i\in I}S_i) \quad\text{and}\quad H_{n+1}(U,\,\bigcup_{i\in I}S_i)
\end{equation}
respectively, where $I\subset\{1,\ldots,N\}$ is a non-empty subset and $U\subset\mathbb{C}^{n+1}$ is a small neighborhood around a pinch point (for details, the reader is referred to \cite{pham} for the general theory or \cite{cutkosky1} for this construction in the context of our setup of complex spheres). As is well-known, homology does not behave particularly nicely with respect to inclusions. In particular, if $i:U\hookrightarrow\mathbb{C}^{n+1}$ is the natural inclusion, then the induced map $i_\ast$ on the level of homology need neither be injective nor surjective. For arrangements of complex spheres however, there is a direct correspondence between the vanishing classes and the generators of the global homology:\\
Let $U\subset\mathbb{C}^{n+1}$ be an open neighborhood near pinch point where the $S_i$ with $i\in I$ pinch such that the vanishing classes are defined. Then the vanishing cycle is given by the deformation retract of $\bigcap_{i\in I}S_i$ to a real sphere contained in $U$. But the global homology $H_{n+1-N}(\bigcap_{i\in I}S_i)$ is also generated by this real sphere and thus the inclusion
\begin{equation}
    i:U\cap\bigcap_{i\in I}S_i\hookrightarrow\bigcap_{i\in I}S_i
\end{equation}
descends to an isomorphism on the level of homology. Furthermore, it is known that we have isomorphisms \cite{pham}
\begin{equation}
    \partial:H_{n+1}(U,\,\bigcup_{i\in I}S_i)\overset{\sim}\to H_{n+1-N}(U\cap\bigcap_{i\in I}S_i)
\end{equation}
and
\begin{equation}
    \delta:H_{n+1-N}(U\cap\bigcap_{i\in I}S_i)\overset{\sim}\to H_{n+1}(U-\bigcup_{i\in I}S_i),
\end{equation}
given by the iterated boundary $\partial:=\partial_N\circ\cdots\circ\partial_1$ (see equation \eqref{eq:iterated_boundary}) and the iterated Leray coboundary $\delta:=\delta_1\circ\cdots\circ\delta_N$ (see equation \eqref{eq:iterated_coboundary}) respectively.\footnote{This is not to be confused with the Mayer-Vietoris homomorphism, which we also denote by $\delta$.} Denoting $S:=\bigcup_{i\in I}S_i$ and $[N]:=\{1,\ldots,N\}$, we get a commutative diagram
\begin{center}
    \begin{tikzcd}[scale=0.3]
        H_{n+1}(U-S) \arrow[d, "\sim"] & & & \\
        H_{n+1-N}(U\cap\bigcap_{i\in I}S_i) \arrow[r, "\sim"] & H_{n+1-N}(\bigcap_{i\in I}S_i) \arrow[hook, r] & \bigoplus_{\substack{J\subset[N] \\ J\neq\emptyset}}H_{n+1-|J|}(\bigcap_{j\in J}S_j) \arrow{r}{\sim}[swap]{g_{\emptyset,I}^{-1}} & H_{n+1}(\mathbb{C}^{n+1},S) \\
        H_{n+1}(U,S) \arrow[u, "\sim"] & & &
    \end{tikzcd}
\end{center}
where all arrows are isomorphisms except for the inclusion into the direct sum, which is injective. In particular, the vanishing cell $[\mathbf{e}]\in H_{n+1}(U,S)$ (one of the two generators of $H_{n+1}(U,S)\simeq\mathbb{Z}$) can be expressed as a linear combination of the generators of the global homology group $H_{n+1}(\mathbb{C}^{n+1},S)$.

\section{Generators for the Homology}\label{sec:generators}
Theorem \ref{thm:decomposition_homology} tells us that the homology of $\mathbb{C}^{n+1}$ relative to an arrangement of complex spheres $S^{I,J}$ is trivial except in dimension $n-|I|+1$. The group $H_{n-|I|+1}(\mathbb{C}^{n+1},S^{I,J})$ is generated by $2^{|J|}-1$ elements $[\mathbf{e}_K^{I,J}]$ (except in the case $N=n+2$, where there are $2^{|J|}-2$ elements), one for each non-empty set $K\subset J$ with $|I\cup K|\leq n+1$. More specifically, following the isomorphism $g_{I,J}$, it is essentially the direct sum of terms
\begin{equation}
    (\partial_\ast)^{-1}\delta_{k_1}^{-1}\cdots\delta_{k_{|K|}}^{-1}\widetilde{H}_{n-|I|-|K|+1}(\bigcap_{i\in I\cup K}S_i)
\end{equation}
isomorphic to $\mathbb{Z}$ for each $K=\{k_1,\ldots,k_{|K|}\}\subset J$ with $k_1<\cdots<k_{|K|}$. Here, the $\delta_{k_i}$ are the Mayer-Vietoris homomorphisms given by the iterated decomposition of $S^{I,J}$ as described in the proof of Theorem \ref{thm:decomposition_homology}. To compute the intersection indices we are actually interested in however, it is convenient to have concrete representatives of these generators at hand. Since the above theorem already establishes the isomorphism $g_{I,J}$ which realizes the decomposition, it is sufficient to guess the generators of $H_{n-|I|+1}(\mathbb{C}^{n+1},S^{I,J})$, apply $g_{I,J}$ and check that they yield the obvious generators of the summands $\widetilde{H}_{n-|I|-|K|+1}(\bigcap_{i\in I\cup K}S_i)$.\\
A rigorous computation of the generators is quite finicky, technical and not necessarily very enlightening on its own. To better understand the intuition behind the ideas in this section, we start with two introductory examples. The first one is the smallest non-trivial case.
\begin{example}\label{ex:cw_1}
    Let us consider the two complex 1-spheres $S_1:=S_\mathbb{C}^1(a_1,r_1)$ and $S_2:=S_\mathbb{C}^1(a_2,r_2)$ in $\mathbb{C}^2$.\footnote{In the context of Feynman integrals, this corresponds to the famous one-loop bubble graph in two dimensions.} To simplify the picture in this example, we assume that $r_1,r_2\in\mathbb{R}_{>0}$ and set $a_1=0$ as well as $a_2=(i,0)$. This is the smallest non-trivial example and, unfortunately, it is already taking place in 4 real dimensions and consequently is hard to draw. Luckily, all the necessary intuition can already be obtained from a sketch in the 3 dimensional $(\text{Im}\,z_2=0)$-hyperplane. This situation is depicted in Figure \ref{fig:example_two_spheres_1}.\\
    \begin{figure}
        \centering
        \def\distance{0.6}
        \def\radiusa{0.5}
        \def\radiusb{0.4}
        \def\stepsa{9}
        \def\stepsaa{9}
        \def\stepsb{7}
        \def\stepsbb{13}
        \def\opa{0.4}
        \begin{tikzpicture}[scale=5]
            \draw[->] (0,0,0) -- (xyz cylindrical cs:radius=1.6);
            \draw (0,0,0) -- (xyz cylindrical cs:radius=1,angle=180);
            \draw[->] (0,0,0) -- (xyz cylindrical cs:radius=1,angle=90);
            \draw (0,0,0) -- (xyz cylindrical cs:radius=1,angle=-90);
            \draw[->] (0,0,0) -- (xyz cylindrical cs:z=-1);
            \draw (0,0,0) -- (xyz cylindrical cs:z=1);
        
            \draw (-0.11,0.03,-1) node {$\text{Re}\,k_1$};
            \draw (0.15,1,0) node {$\text{Re}\,k_2$};
            \draw (1.6,0.05,0) node {$\text{Im}\,k_1$};
            
            \draw (-0.35,0.15,0) node [text=red] {$S_1\cap\mathbb{R}^2$};
            \draw (1.1,0.25,0) node [text=blue] {$S_2\cap((\mathbb{R}+i)\times\mathbb{R})$};
            
            \begin{scope}[canvas is zy plane at x=0]
                \draw[line width=1pt] (0,0) circle (\radiusa);
            \end{scope}
            \begin{scope}[canvas is zy plane at x=-0.01]
                \draw[red,line width=2pt,fill=orange,opacity=\opa] (0,0) circle (\radiusa);
                \draw[red,line width=2pt] (0,0) circle (\radiusa);
            \end{scope}
            \begin{scope}[canvas is zy plane at x=0.01]
                \clip (-1,-1) rectangle (0,1);
                \draw[green, line width=2pt, fill=lime,opacity=\opa] (0,0) circle (\radiusa);
                \draw[green, line width=2pt] (0,0) circle (\radiusa);
            \end{scope}
            \begin{scope}[canvas is zy plane at x=\distance]
                \draw[line width=1pt] (0,0) circle (\radiusb);
            \end{scope}

            \begin{scope}[canvas is xy plane at z=0]
                \draw [thick, domain=0:1] plot (\x, {sqrt(\radiusa^2+\x^2)});
                \draw [thick, domain=0:1] plot (-\x, {sqrt(\radiusa^2+\x^2)});
                \draw [thick, domain=0:1] plot (\x, -{sqrt(\radiusa^2+\x^2)});
                \draw [thick, domain=0:1] plot (-\x, -{sqrt(\radiusa^2+\x^2)});
                \draw [thick, domain=0:1] plot (\x+\distance, {sqrt(\radiusb^2+\x^2)});
                \draw [thick, domain=0:1] plot (-\x+\distance, {sqrt(\radiusb^2+\x^2)});
                \draw [thick, domain=0:1] plot (\x+\distance, -{sqrt(\radiusb^2+\x^2)});
                \draw [thick, domain=0:1] plot (-\x+\distance, -{sqrt(\radiusb^2+\x^2)});
            \end{scope}
            \begin{scope}[canvas is xy plane at z=0.01]
                \draw [green, line width=2pt, domain=0.01:(\radiusb^2-\radiusa^2+\distance^2)/(2*\distance),name path=A] plot (\x, {sqrt(\radiusa^2+\x^2)});
                \draw [green, line width=2pt, domain=0.01:(\radiusb^2-\radiusa^2+\distance^2)/(2*\distance),name path=B] plot (\x, -{sqrt(\radiusa^2+\x^2)});
                \tikzfillbetween[of=A and B] {lime,opacity=\opa};
                
                \draw [green, line width=2pt, domain=-0.02:(\distance-(\radiusb^2-\radiusa^2+\distance^2)/(2*\distance)),name path=C] plot (-\x+\distance, {sqrt(\radiusb^2+\x^2)});
                \draw [green, line width=2pt, domain=-0.02:(\distance-(\radiusb^2-\radiusa^2+\distance^2)/(2*\distance)),name path=D] plot (-\x+\distance, -{sqrt(\radiusb^2+\x^2)});
                \tikzfillbetween[of=C and D] {lime,opacity=\opa};
            \end{scope}
            \begin{scope}[canvas is zy plane at x=\distance-0.01]
                \draw[blue,line width=2pt,fill=cyan,opacity=\opa] (0,0) circle (\radiusb);
                \draw[blue,line width=2pt] (0,0) circle (\radiusb);
            \end{scope}
            \begin{scope}[canvas is zy plane at x=\distance+0.01]
                \clip (-1,-1) rectangle (0,1);
                \draw[green,line width=2pt, fill=lime,opacity=\opa] (0,0) circle (\radiusb);
                \draw[green,line width=2pt] (0,0) circle (\radiusb);
            \end{scope}
            \draw ({(\radiusb^2-\radiusa^2+\distance^2)/(2*\distance)},{sqrt(((\radiusb^2-\radiusa^2+\distance^2)/(2*\distance))^2+\radiusa^2)},0) node [circle, fill, scale=0.7] {};
            \draw ({(\radiusb^2-\radiusa^2+\distance^2)/(2*\distance)+0.01},{sqrt(((\radiusb^2-\radiusa^2+\distance^2)/(2*\distance))^2+\radiusa^2)+0.1},0) node {$p_+$};
            \draw ({(\radiusb^2-\radiusa^2+\distance^2)/(2*\distance)+0.01},{-sqrt(((\radiusb^2-\radiusa^2+\distance^2)/(2*\distance))^2+\radiusa^2)-0.1},0) node {$p_-$};
            \draw ({(\radiusb^2-\radiusa^2+\distance^2)/(2*\distance)},{-sqrt(((\radiusb^2-\radiusa^2+\distance^2)/(2*\distance))^2+\radiusa^2)},0) node [circle, fill, scale=0.7] {};
            
            \draw (0.01,\radiusa,0) node [circle, fill, green, scale=0.4] {};
            \draw (0.01,-\radiusa,0) node [circle, fill, green, scale=0.4] {};
            \draw (\distance+0.01,\radiusb,0) node [circle, fill, green, scale=0.4] {};
            \draw (\distance+0.01,-\radiusb,0) node [circle, fill, green, scale=0.4] {};
        \end{tikzpicture}
        \caption{The union $S_1\cup S_2$ in the $\text{Im}\,z_2=0$ plane. The generators of $H_1(S_1\cup S_2)$ are marked in red, blue and green and the generators of $H_2(\mathbb{C}^2,S_1\cup S_2)$ in orange, cyan and lime.}
        \label{fig:example_two_spheres_1}
    \end{figure}
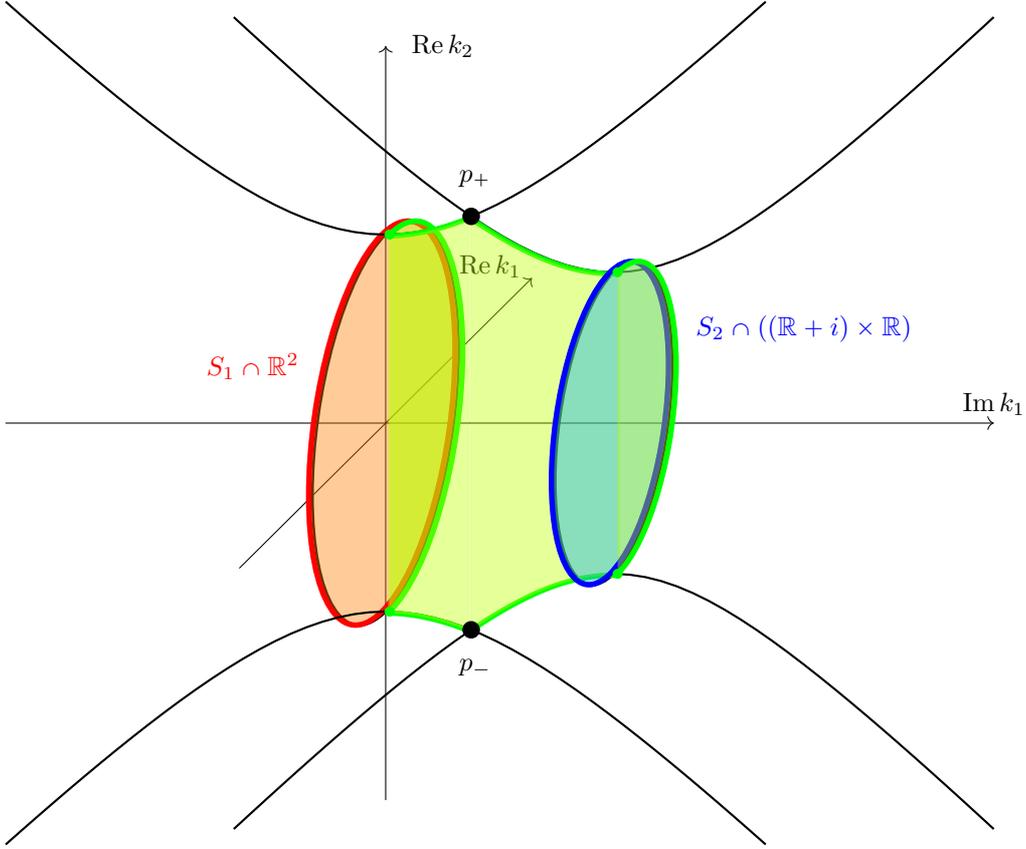
    From the general statement in Theorem \ref{thm:decomposition_homology}, we know that there is a decomposition
    \begin{equation}
        H_2(\mathbb{C}^2,S_1\cup S_2)\simeq\widetilde{H}_1(S_1\cup S_2)\simeq \widetilde{H}_1(S_1)\oplus\widetilde{H}_1(S_2)\oplus\widetilde{H}_0(S_1\cap S_2)
    \end{equation}
    and hence that the homology group $H_2(\mathbb{C}^2,S_1\cup S_2)$ (resp. $\widetilde{H}_1(S_1\cup S_2)$) is generated by three elements $[\mathbf{e}_1]$, $[\mathbf{e}_2]$, $[\mathbf{e}_{12}]$ (resp. $[e_1]$, $[e_2]$, $[e_{12}]$) corresponding to $S_1$, $S_2$ and $S_1\cap S_2$ respectively. In Figure \ref{fig:example_two_spheres_1}, we can easily identify the real spheres to which $S_1$ and $S_2$ deformation retract and hence (representatives of) the generators $e_1$ and $e_2$. These are drawn in red and blue respectively. Concretely, their support can be written as
    \begin{equation}
        |e_1|=S_1\cap\mathbb{R}^2 \qquad\text{and}\qquad |e_2|=S_2\cap((\mathbb{R}+i)\times\mathbb{R}).
    \end{equation}
    Furthermore, we can identify the generators $\mathbf{e}_1$ and $\mathbf{e}_2$ of $H_2(\mathbb{C}^2,S_1\cup S_2)$, drawn in orange and cyan in Figure \ref{fig:example_two_spheres_1}. They are the disks associated to the real spheres $e_1$, $e_2$ and satisfy $\partial\mathbf{e}_i=e_i$ for $i=1,2$. The generator corresponding to the intersection $S_1\cap S_2$ is a little bit more complicated: The subspace $S_1\cap S_2$ itself is a complex 0-sphere, which consists of two points $p_+,p_-$. In Figure \ref{fig:example_two_spheres_1}, these two points are marked as black dots. The reduced homology group $\widetilde{H}_0(S_1\cap S_2)$ is generated by $e_{12}'=p_+-p_-$ (viewed as an appropriate homology class). To be more specific, denote $b:=\frac{1}{2}(r_2^2-r_1^2+1)$. Then we can write
    \begin{equation}
        p_\pm=(b,\pm\sqrt{r_1^2+b^2}).
    \end{equation}
    Theorem \ref{thm:decomposition_homology} tells us that we can obtain the corresponding generator $e_{12}$ by applying the inverse of the Mayer-Vietoris homomorphism corresponding to the decomposition of $S_1\cup S_2$ into $S_1$ and $S_2$. If we consider the green part in Figure \ref{fig:example_two_spheres_1} left of $S_1\cap S_2$ as a chain $u\in C_1(S_1)$ and the green part right of $S_1\cap S_2$ as a chain $v\in C_1(S_2)$, we obtain a cycle $e_{12}=u+v$, which indeed satisfies $\partial u=-\partial v=e_{12}'$ (if we choose the orientation correctly). We can think of $e_{12}$ as comprising four parts, two lying in $S_1$ corresponding to $u$ and two in $S_2$ corresponding to $v$: The support of $e_{12}$ can be written as
    \begin{equation}
        |e_{12}|=(S_1\cap((\mathbb{R}_{\geq0}\times\mathbb{R})\cup(i[0,b]\times\mathbb{R})))\cup(S_2\cap(((\mathbb{R}_{\geq0}+i)\times\mathbb{R})\cup(i[b,1]\times\mathbb{R}))).
    \end{equation}
    The corresponding generator $\mathbf{e}_{12}$ for the relative homology is shown in lime in Figure \ref{fig:example_two_spheres_1}. All of the sets involved in the unions in the support of $e_1$, $e_2$ and $e_{12}$ are homeomorphic to a line segment. Thus it is easy to see that we can give $|e_1|\cup|e_2|\cup|e_{12}|$ the structure of a CW-complex which readily yields the homology we computed pictorially.\\
    For $\mathbf{e}_1$ and $\mathbf{e}_2$, the computation of $\langle(i\cdot\mathbb{R})^2|\mathbf{e}_i\rangle$ is not difficult. They intersect $(i\cdot\mathbb{R})^2$ only in 0 and $a_2$ respectively and it suffices to compare the orientation at these points. For $\mathbf{e}_{12}$, we can slightly deform the cycle away from $(i\cdot\mathbb{R})^2$ as shown in Figure \ref{fig:example_two_spheres_2}. Thus we have $\langle(i\cdot\mathbb{R})^2|\mathbf{e}_{12}\rangle=0$.
    \begin{figure}
        \centering
        \def\distance{0.6}
        \def\radiusa{0.5}
        \def\radiusb{0.4}
        \def\stepsa{9}
        \def\stepsaa{9}
        \def\stepsb{7}
        \def\stepsbb{13}
        \def\opa{0.4}
        \begin{tikzpicture}[scale=5]
            \draw[->] (0,0,0) -- (xyz cylindrical cs:radius=1.6);
            \draw (0,0,0) -- (xyz cylindrical cs:radius=1,angle=180);
            \draw[->] (0,0,0) -- (xyz cylindrical cs:radius=1,angle=90);
            \draw (0,0,0) -- (xyz cylindrical cs:radius=1,angle=-90);
            \draw[->] (0,0,0) -- (xyz cylindrical cs:z=-1);
            \draw (0,0,0) -- (xyz cylindrical cs:z=1);
        
            \draw (-0.11,0.03,-1) node {$\text{Re}\,z_1$};
            \draw (0.15,1,0) node {$\text{Re}\,z_2$};
            \draw (1.6,0.05,0) node {$\text{Im}\,z_1$};
            
            \begin{scope}[canvas is zy plane at x=0]
                \draw[line width=1pt] (0,0) circle (\radiusa);
            \end{scope}
            \begin{scope}[canvas is zy plane at x=0.01]
                \clip (-1,-1) rectangle (0,1);
                \draw[green, line width=3pt, name path = U] (0,0) circle (\radiusa);
                \draw[green, line width=3pt, name path = V] (0,0) circle (0.1);
                \tikzfillbetween[of=U and V] {lime};
            \end{scope}
            \begin{scope}[canvas is zy plane at x=0.05]
                \clip (-1,-1) rectangle (0,1);
                \draw[green] (0,0) circle (0.1);
            \end{scope}
            \begin{scope}[canvas is zy plane at x=0.10]
                \clip (-1,-1) rectangle (0,1);
                \draw[green] (0,0) circle (0.1);
            \end{scope}
            \begin{scope}[canvas is zy plane at x=0.15]
                \clip (-1,-1) rectangle (0,1);
                \draw[green] (0,0) circle (0.1);
            \end{scope}
            \begin{scope}[canvas is zy plane at x=0.20]
                \clip (-1,-1) rectangle (0,1);
                \draw[green] (0,0) circle (0.1);
            \end{scope}
            \begin{scope}[canvas is zy plane at x=0.25]
                \clip (-1,-1) rectangle (0,1);
                \draw[green] (0,0) circle (0.1);
            \end{scope}
            \begin{scope}[canvas is zy plane at x=0.30]
                \clip (-1,-1) rectangle (0,1);
                \draw[green] (0,0) circle (0.1);
            \end{scope}
            \begin{scope}[canvas is zy plane at x=0.35]
                \clip (-1,-1) rectangle (0,1);
                \draw[green] (0,0) circle (0.1);
            \end{scope}
            \begin{scope}[canvas is zy plane at x=0.40]
                \clip (-1,-1) rectangle (0,1);
                \draw[green] (0,0) circle (0.1);
            \end{scope}
            \begin{scope}[canvas is zy plane at x=0.45]
                \clip (-1,-1) rectangle (0,1);
                \draw[green] (0,0) circle (0.1);
            \end{scope}
            \begin{scope}[canvas is zy plane at x=0.50]
                \clip (-1,-1) rectangle (0,1);
                \draw[green] (0,0) circle (0.1);
            \end{scope}
            \begin{scope}[canvas is zy plane at x=0.55]
                \clip (-1,-1) rectangle (0,1);
                \draw[green] (0,0) circle (0.1);
            \end{scope}
            
            \begin{scope}[canvas is zy plane at x=\distance]
                \draw[line width=1pt] (0,0) circle (\radiusb);
            \end{scope}
            \begin{scope}[canvas is xy plane at z=0]
                \draw [thick, domain=0:1] plot (\x, {sqrt(\radiusa^2+\x^2)});
                \draw [thick, domain=0:1] plot (-\x, {sqrt(\radiusa^2+\x^2)});
                \draw [thick, domain=0:1] plot (\x, -{sqrt(\radiusa^2+\x^2)});
                \draw [thick, domain=0:1] plot (-\x, -{sqrt(\radiusa^2+\x^2)});
                \draw [thick, domain=0:1] plot (\x+\distance, {sqrt(\radiusb^2+\x^2)});
                \draw [thick, domain=0:1] plot (-\x+\distance, {sqrt(\radiusb^2+\x^2)});
                \draw [thick, domain=0:1] plot (\x+\distance, -{sqrt(\radiusb^2+\x^2)});
                \draw [thick, domain=0:1] plot (-\x+\distance, -{sqrt(\radiusb^2+\x^2)});
            \end{scope}
            \begin{scope}[canvas is xy plane at z=0.01]
                \draw [green, line width=2pt, domain=0.01:(\radiusb^2-\radiusa^2+\distance^2)/(2*\distance),name path=A] plot (\x, {sqrt(\radiusa^2+\x^2)});
                \draw [green, line width=2pt, domain=0.01:(\radiusb^2-\radiusa^2+\distance^2)/(2*\distance),name path=B] plot (\x, 0.1);
                \draw [green, line width=2pt, domain=0.1:(\radiusb^2-\radiusa^2+\distance^2)/(2*\distance),name path=W, draw=none] plot (\x, 0);
                \tikzfillbetween[of=A and W] {lime,opacity=\opa};
                
                \draw [green, line width=2pt, domain=0.01:(\radiusb^2-\radiusa^2+\distance^2)/(2*\distance),name path=C] plot (\x, -{sqrt(\radiusa^2+\x^2)});
                \draw [green, line width=2pt, domain=0.01:(\radiusb^2-\radiusa^2+\distance^2)/(2*\distance),name path=D] plot (\x, -0.1);
                \draw [green, line width=2pt, domain=0.1:(\radiusb^2-\radiusa^2+\distance^2)/(2*\distance),name path=Z, draw=none] plot (\x, 0);
                \tikzfillbetween[of=C and Z] {lime,opacity=\opa};
                
                \draw [green, line width=2pt, domain=-0.02:(\distance-(\radiusb^2-\radiusa^2+\distance^2)/(2*\distance)),name path=E] plot (-\x+\distance, {sqrt(\radiusb^2+\x^2)});
                \draw [green, line width=2pt, domain=-0.02:(\distance-(\radiusb^2-\radiusa^2+\distance^2)/(2*\distance)),name path=F] plot (-\x+\distance, 0.1);
                \draw [green, line width=0pt, domain=-0.02:(\distance-(\radiusb^2-\radiusa^2+\distance^2)/(2*\distance)),name path=X, draw=none] plot (-\x+\distance, 0);
                \tikzfillbetween[of=E and X] {lime,opacity=\opa};
                
                \draw [green, line width=2pt, domain=-0.02:(\distance-(\radiusb^2-\radiusa^2+\distance^2)/(2*\distance)),name path=G] plot (-\x+\distance, -{sqrt(\radiusb^2+\x^2)});
                \draw [green, line width=2pt, domain=-0.02:(\distance-(\radiusb^2-\radiusa^2+\distance^2)/(2*\distance)),name path=H] plot (-\x+\distance, -0.1);
                \draw [green, line width=0pt, domain=-0.02:(\distance-(\radiusb^2-\radiusa^2+\distance^2)/(2*\distance)),name path=Y, draw=none] plot (-\x+\distance, 0);
                \tikzfillbetween[of=G and Y] {lime,opacity=\opa};
            \end{scope}
            \begin{scope}[canvas is zy plane at x=\distance+0.016]
                \clip (-1,-1) rectangle (0,1);
                \draw[green,line width=2pt, name path=X] (0,0) circle (\radiusb);
                \draw[green,line width=2pt, name path=Y] (0,0) circle (0.1);
                \draw[green,line width=0pt, name path=Z, draw=none] (0,-0.1) -- (0,0.1);
                \tikzfillbetween[of=X and Z] {lime,opacity=\opa};
            \end{scope}
            \draw ({(\radiusb^2-\radiusa^2+\distance^2)/(2*\distance)},{sqrt(((\radiusb^2-\radiusa^2+\distance^2)/(2*\distance))^2+\radiusa^2)},0) node [circle, fill, scale=0.7] {};
            \draw ({(\radiusb^2-\radiusa^2+\distance^2)/(2*\distance)+0.01},{sqrt(((\radiusb^2-\radiusa^2+\distance^2)/(2*\distance))^2+\radiusa^2)+0.1},0) node {$S_1\cap S_2$};
            \draw ({(\radiusb^2-\radiusa^2+\distance^2)/(2*\distance)},{-sqrt(((\radiusb^2-\radiusa^2+\distance^2)/(2*\distance))^2+\radiusa^2)},0) node [circle, fill, scale=0.7] {};
            
            \draw (0.01,\radiusa,0) node [circle, fill, green, scale=0.4] {};
            \draw (0.01,-\radiusa,0) node [circle, fill, green, scale=0.4] {};
            \draw (\distance+0.01,\radiusb,0) node [circle, fill, green, scale=0.4] {};
            \draw (\distance+0.01,-\radiusb,0) node [circle, fill, green, scale=0.4] {};
            \draw (0.01,0.1,0) node [circle, fill, green, scale=0.4] {};
            \draw (0.01,-0.1,0) node [circle, fill, green, scale=0.4] {};
            \draw (\distance+0.01,0.1,0) node [circle, fill, green, scale=0.4] {};
            \draw (\distance+0.01,-0.1,0) node [circle, fill, green, scale=0.4] {};
        \end{tikzpicture}
        \caption{The situation as in Figure \ref{fig:example_two_spheres_1} after a slight deformation to move $\mathbf{e}_{12}$ off of $(i\mathbb{R})^2$.}
        \label{fig:example_two_spheres_2}
    \end{figure}
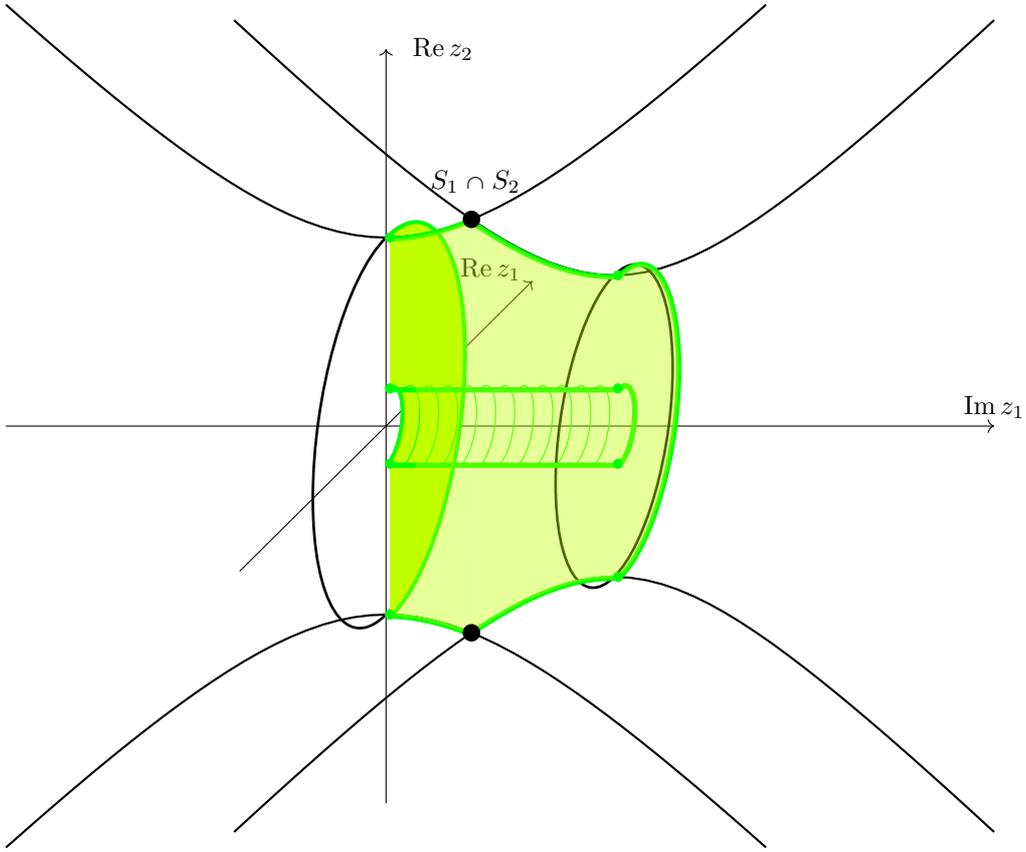
\end{example}
\begin{example}\label{ex:cw_2}
    Let us now consider the case $n=2$. I.e. we consider three complex 2-spheres $S_j:=S_\mathbb{C}^2(a_j,r_j)$ ($j=1,2,3$). To obtain a symmetric situation, we set $r_1=r_2=r_3=1$ and
    \begin{equation}
        a_1=(i,0,0),\qquad a_2=(0,i,0),\qquad a_3=(0,0,i).
    \end{equation}
    With these values, the complex spheres $S_1$, $S_2$ and $S_3$ are in general position. Again, by the general Theorem \ref{thm:decomposition_homology}, we know that the relevant homology groups decompose as
    \begin{equation}\label{eq:ex_2_decomposition}
        \begin{split}
            &H_3(\mathbb{C}^3,\bigcup_{j=1}^3S_j)\simeq\widetilde{H}_2(\bigcup_{j=1}^3S_j)\\
            \simeq\;&\widetilde{H}_2(S_1)\oplus\widetilde{H}_2(S_2)\oplus\widetilde{H}_2(S_3)\oplus\widetilde{H}_1(S_1\cap S_2)\oplus\widetilde{H}_1(S_1\cap S_3)\oplus\widetilde{H}_1(S_2\cap S_3)\oplus\widetilde{H}_0(S_1\cap S_2\cap S_3).
        \end{split}
    \end{equation}
    The individual terms are again easy to understand: For each non-empty $I\subset\{1,2,3\}$, the group $\widetilde{H}_{3-|I|}(\bigcap_{j\in I}S_j)$ is generated by the real $(3-|I|)$-sphere to which $\bigcap_{j\in I}S_j$ deformation retracts. To find the corresponding generators of $\widetilde{H}_2(\bigcup_{j=1}^3S_j)$, we need to trace back the Mayer-Vietoris homomorphisms which lead to the direct sum decomposition.\\
    Let us work our way up from the homology groups in the decomposition \eqref{eq:ex_2_decomposition}. It is not difficult to see that the $0$-dimensional sphere $S_1\cap S_2\cap S_3$ is given by
    \begin{equation}
        (\begin{pmatrix} \frac{i}{3} \\ \frac{i}{3} \\ \frac{i}{3} \end{pmatrix} +\mathbb{R}\cdot\begin{pmatrix} 1 \\ 1 \\ 1 \end{pmatrix})\cap S_j
    \end{equation}
    for any choice of $j\in\{1,2,3\}$. We clearly want this to be included in our complex. For the real spheres associated to $S_1\cap S_2$, we obtain
    \begin{equation}
        (\begin{pmatrix} \frac{i}{2} \\ \frac{i}{2} \\ 0 \end{pmatrix} +\mathbb{R}\cdot\begin{pmatrix} 1 \\ 1 \\ 1 \end{pmatrix}+\mathbb{R}\cdot\begin{pmatrix} 1 \\ 1 \\ 0 \end{pmatrix})\cap S_j
    \end{equation}
    for any $j\in\{1,2\}$. Similar expressions for $S_1\cap S_3$ and $S_2\cap S_3$ can be derived by simply permuting the coordinates. These, we want to include as well. The choice of the basis vectors for the real parts are chosen such that we can glue the cells together nicely. In the same fashion, we can write the real sphere in $S_1$ as
    \begin{equation}
        (\begin{pmatrix} i \\ 0 \\ 0 \end{pmatrix} +\mathbb{R}\cdot\begin{pmatrix} 1 \\ 1 \\ 1 \end{pmatrix}+\mathbb{R}\cdot\begin{pmatrix} 1 \\ 1 \\ 0 \end{pmatrix}+\mathbb{R}\cdot\begin{pmatrix} 1 \\ 0 \\ 1 \end{pmatrix})\cap S_1
    \end{equation}
    and similarly for $S_2$ and $S_3$. To simplify our notation, let us denote the center of the real sphere associated to $\bigcap_{i\in I}S_i$, with $I\subset\{1,2,3\}$ a non-empty subset, by $c_I\in(i\cdot\mathbb{R})^3$. More precisely,
    \begin{equation}
        (c_I)_j:=\begin{cases} \frac{i}{|I|} & \text{if }j\in I \\ 0 & \text{if }j\notin I \end{cases},\qquad j=1,2,3.
    \end{equation}
    Moreover, let us denote
    \begin{equation}
        v:=\begin{pmatrix} 1 \\ 1 \\ 1\end{pmatrix}\qquad\text{and}\qquad \forall j\in\{1,2,3\}\;:\;v_j:=v-e_j.
    \end{equation}
    Then we express the real sphere associated to $\bigcap_{i\in I}S_i$ as
    \begin{equation}
        (c_I+\mathbb{R}\cdot v+\sum_{i\notin I}\mathbb{R}\cdot v_i)\cap S_j
    \end{equation}
    for any choice of $j\in I$.\\
    To connect all these pieces, we first set up some simplices with support contained within $(i\cdot\mathbb{R})^3$. The centers $c_I$ need to be included as 0-simplices. We construct this complex by considering the 2-simplex spanned by $c_{\{1\}}$, $c_{\{2\}}$ and $c_{\{3\}}$. The remaining 4 centers $c_{\{1,2\}}$, $c_{\{1,3\}}$, $c_{\{2,3\}}$ and $c_{\{1,2,3\}}$ lie in this simplex. In fact, performing a barycentric subdivision of this simplex yields a simplicial complex whose 0-simplices are precisely the $c_I$. This complex is depicted in Figure \ref{fig:imaginary simplices_1}. The simplices can be labeled by chains of subsets $\emptyset\neq I_1\subsetneq\cdots\subsetneq I_k\subset\{1,2,3\}$.
    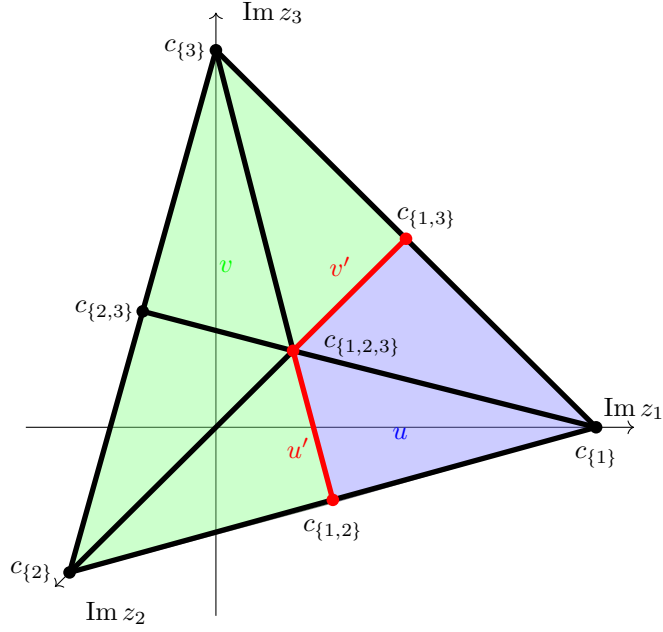
\begin{figure}
        \centering
        \def\distance{0.6}
        \def\radiusa{0.5}
        \def\radiusb{0.4}
        \def\stepsa{9}
        \def\stepsb{7}
        \begin{tikzpicture}[scale=5]
            \draw[->] (0,0,0) -- (xyz cylindrical cs:radius=1.1);
            \draw (0,0,0) -- (xyz cylindrical cs:radius=0.5,angle=180);
            \draw[->] (0,0,0) -- (xyz cylindrical cs:radius=1.1,angle=90);
            \draw (0,0,0) -- (xyz cylindrical cs:radius=0.5,angle=-90);
            \draw (0,0,0) -- (xyz cylindrical cs:z=-1);
            \draw[->] (0,0,0) -- (xyz cylindrical cs:z=1.1);
            
            \begin{scope}[canvas is xy plane at z=0]
                \coordinate (s0) at (1,0);
                \coordinate (r0) at (0,1);
                \coordinate (r1) at (0.5,0.5);
                \coordinate (r2) at (0.2,0.2);
                \coordinate (r3) at (0.31,-0.2);
                \coordinate (r4) at (-0.39,-0.39);
                \filldraw[draw=black, fill=green!50, opacity=0.4] (r0) -- (r1) -- (r2) -- (r3) -- (r4) -- cycle;
                \filldraw[draw=black, fill=blue!50, opacity=0.4] (s0) -- (r1) -- (r2) -- (r3) -- cycle;
            \end{scope}
            
            \draw (1.1,0.05,0) node {$\text{Im}\,z_1$};
            \draw (0.2,-0.03,1.2) node {$\text{Im}\,z_2$};
            \draw (0.15,1.1,0) node {$\text{Im}\,z_3$};

            \draw (1,-0.08,0) node {$c_{\{1\}}$};
            \draw (-0.08,1,0) node {$c_{\{3\}}$};
            \draw (-0.1,0,1) node {$c_{\{2\}}$};
            \draw (0.5,0.5,-0.14) node {$c_{\{1,3\}}$};
            \draw (0.5,-0.08,0.5) node {$c_{\{1,2\}}$};
            \draw (-0.1,0.5,0.5) node {$c_{\{2,3\}}$};
            \draw (0.5,0.33,0.30) node {$c_{\{1,2,3\}}$};

            \draw (0.37,0.1,0.4) node [red] {$u'$};
            \draw (0.5,0.6,0.45) node [red] {$v'$};
            \draw (0.7,0.2,0.56) node [blue] {$u$};
            \draw (0.2,0.6,0.45) node [green] {$v$};
            
            \draw[line width=2pt] (0.5,0.5,0) -- (1,0,0);
            \draw[line width=2pt] (0.5,0.5,0) -- (0,1,0);
            \draw[line width=2pt] (0.5,0,0.5) -- (1,0,0);
            \draw[line width=2pt] (0.5,0,0.5) -- (0,0,1);
            \draw[line width=2pt] (0,0.5,0.5) -- (0,1,0);
            \draw[line width=2pt] (0,0.5,0.5) -- (0,0,1);

            \draw[line width=2pt,red] (0.33,0.33,0.33) -- (0.5,0.5,0);
            \draw[line width=2pt,red] (0.33,0.33,0.33) -- (0.5,0,0.5);
            \draw[line width=2pt] (0.33,0.33,0.33) -- (0,0.5,0.5);
            \draw[line width=2pt] (0.33,0.33,0.33) -- (1,0,0);
            \draw[line width=2pt] (0.33,0.33,0.33) -- (0,1,0);
            \draw[line width=2pt] (0.33,0.33,0.33) -- (0,0,1);
            
            \draw (1,0,0) node [circle, fill, scale=0.5] {};

            \draw (0,1,0) node [circle, fill, scale=0.5] {};

            \draw (0,0,1) node [circle, fill, scale=0.5] {};

            \draw (0.5,0.5,0) node [circle, fill, scale=0.5,red] {};
            \draw (0.5,0,0.5) node [circle, fill, scale=0.5,red] {};
            \draw (0,0.5,0.5) node [circle, fill, scale=0.5] {};

            \draw (0.33,0.33,0.33) node [circle, fill, scale=0.5,red] {};
        \end{tikzpicture}
        \caption{The subcomplex consisting of all simplices with support contained in $(i\mathbb{R})^{n+1}$ for the case $n=2$. The 0-cells are the centers of the complex spheres given by the various intersections of the $S_i$.}
        \label{fig:imaginary simplices_1}
    \end{figure}
    Clearly, this complex alone does not contain any non-trivial cycles and does not even intersect $\bigcup_{i=1}^3S_i$. But we can glue more cells to the complex. To each simplex $(I_1,\ldots,I_k)$, we associate the sets
    \begin{equation}
        V_{J_\leq,J_\geq}:=\mathbb{R}\cdot v+\sum_{j\in J_\leq}\mathbb{R}_{\leq0}\cdot v_j+\sum_{j\in J_\geq}\mathbb{R}_{\geq0}\cdot v_j,\qquad J_\leq\cap I=J_\geq\cap I=\emptyset
    \end{equation}
    and form the cells
    \begin{equation}
        e_{(I_1,\ldots,I_k),J_\leq,J_\geq}:=((I_1,\ldots,I_k)+V_{J_\leq,J_\geq})\cap S_{\min I_1}.
´    \end{equation}
    Of course, we need to show that these sets actually form a CW-complex. We do not do this here but further below we establish this for the general case. For now, we appeal to the geometric intuition of the reader. The complex is sufficiently large to work our way backwards along the up to two Mayer-Vietoris homomorphisms. For any intersection of two complex spheres, the situation looks essentially the same as in Example \ref{ex:cw_1}. Let us work out the maximal intersection $S_1\cap S_2\cap S_3$. Consider the two 2-simplices
    \begin{equation}
        u':=e_{(\{1,2\}),\emptyset,\{3\}}\pm e_{(\{1,2\},\{1,2,3\}),\emptyset,\emptyset} \quad\text{and}\quad v':=e_{(\{1,3\}),\emptyset,\{2\}}\pm e_{(\{1,3\},\{1,2,3\}),\emptyset,\emptyset}
    \end{equation}
    In Figure \ref{fig:imaginary simplices_1}, the projection of these simplices to $(i\cdot\mathbb{R})^3$ is marked in red. Of course, these linear combinations are not to be taken seriously here since we have not discussed the issue of orientation. Orienting everything correctly however (which we do for the general case in the following subsection), $\partial u'=-\partial v'$ is precisely (the real sphere in) $S_1\cap S_2\cap S_3$. Thus $[u'+v']$ generates $\delta_2^{-1}\widetilde{H}_0(S_1\cap S_2\cap S_3)$. In a similar fashion, we can define
    \begin{equation}
        \begin{split}
            u:&=e_{(\{1\}),\emptyset,\{2,3\}}\pm e_{(\{1\},\{1,2\}),\emptyset,\{3\}}\pm e_{(\{1\},\{1,3\}),\emptyset,\{2\}}\\
            &\pm e_{(\{1\},\{1,3\},\{1,2,3\}),\emptyset,\emptyset}\pm e_{(\{1\},\{1,2\},\{1,2,3\}),\emptyset,\emptyset}
        \end{split}
    \end{equation}
    and
    \begin{equation}
        \begin{split}
            v:&=e_{(\{2\}),\emptyset,\{1,3\}}\pm e_{(\{2\},\{1,2\}),\emptyset,\{3\}}\pm e_{(\{2\},\{2,3\}),\emptyset,\{1\}}\\
            &\pm e_{(\{2\},\{1,2\},\{1,2,3\}),\emptyset,\emptyset}\pm e_{(\{2\},\{2,3\},\{1,2,3\}),\emptyset,\emptyset}\\
            &\pm e_{(\{3\}),\emptyset,\{1,2\}}\pm e_{(\{3\},\{1,3\}),\emptyset,\{2\}}\pm e_{(\{3\},\{2,3\}),\emptyset,\{1\}}\\
            &\pm e_{(\{3\},\{1,3\},\{1,2,3\}),\emptyset,\emptyset}\pm e_{(\{3\},\{2,3\},\{1,2,3\}),\emptyset,\emptyset},
        \end{split}
    \end{equation}
    which are already quite unwieldy expressions. Nevertheless, it can be verified, again after carefully choosing orientations, that $\partial u=-\partial v$ is exactly $u'+v'$ so that $[u+v]$ generates $\delta_1^{-1}\delta_2^{-1}\widetilde{H}_0(S_1\cap S_2\cap S_3)$. The projection of $u$ and $v$ to $(i\cdot\mathbb{R})^3$ is marked in blue and green respectively in Figure \ref{fig:imaginary simplices_1}. The generators for the relative homology $H_3(\mathbb{C}^3,S_1\cup S_2\cup S_3)$ can be obtained by basically filling up the space between the generators of $\widetilde{H}_2(S_1\cup S_2\cup S_3)$, similar to the previous Example \ref{ex:cw_1}.
\end{example}
Now we turn to the general construction:
\subsection{Construction of the Complex}
The ideas from the above example can be generalized to an arbitrary number $n+1$ of dimensions and an arbitrary number $N$ of complex spheres as long as $N\leq n+1$. Similarly to Example \ref{ex:cw_2}, it is useful to consider a particularly simple arrangement of complex spheres, symmetric in the indices labeling the spheres (more precisely: a permutation of the indices corresponds to a permutation of the coordinates of $\mathbb{C}^{n+1}$). Without loss of generality, we may consider the case $N=n+1$, which is the maximal number of complex spheres such that the maximal intersection is non-empty. For smaller $N$, we can simply forget about the generators corresponding to intersections of complex spheres which do not appear in the arrangement of $N$ spheres. For larger $N$, we can repeat the construction below for any subset of complex sphere of size $n+1$.\footnote{This is not much of a generalization though since for an arrangement $S^{I,J}$ of $N$ complex $n$-spheres to be in general position, we need $N\leq n+2$.}\\
We set $r_1=\cdots=r_{n+1}:=1$ as well as $a_j:=i\cdot e_j\in\mathbb{C}^{n+1}$ for all $1\leq j\leq n+1$ (with $e_j$ the $j$th canonical basis vector of $\mathbb{C}^{n+1}$). Let us denote $S_j:=S_\mathbb{C}^n(a_j,r_j)$ for all $1\leq j\leq n+1$. Note that for the chosen values of $a$ and $r$, the $S_j$ are in general position. Indeed, let $I=\{i_1,\ldots,i_m\}\subset\{1,\ldots,n+1\}$ be non-empty and suppose there would exist $z\in\bigcap_{j\in I}S_j$ and $\lambda_1,\ldots,\lambda_{n+1}\in\mathbb{C}$ (not all zero) such that $\lambda_j=0$ for all $j\notin I$ and
\begin{equation}
    \sum_{i=1}^{n+1}\lambda_i\cdot(z-a_i)=0.
\end{equation}
Then either $\sum_{j=1}^{n+1}\lambda_j=0$, which leads to the contradiction $\sum_{j=1}^{n+1}\lambda_j\cdot a_j=0$ (since this implies $\lambda_1=\cdots=\lambda_{n+1}=0$), or we may assume $\sum_{j=1}^{n+1}\lambda_j=1$ without loss of generality (by dividing each $\lambda_j$ by $\sum_{j=1}^{n+1}\lambda_j$ if necessary). Then $z=(i\lambda_1,\ldots,i\lambda_{n+1})$. Now let $j\in I$. Then $z\in S_j$ implies $\sum_{k=1}^{n+1}(\lambda_k-\delta_{k,j})^2=-1$ (with $\delta_{j,k}$ the Kronecker-delta) and we may conclude $\lambda_{i_1}=\cdots=\lambda_{i_m}=\lambda\neq0$. From $\sum_{j=1}^{n+1}\lambda_j=|I|\cdot\lambda=1$, we obtain $\lambda=\frac{1}{|I|}\in\mathbb{R}_{>0}$. But then $z\in(i\cdot\mathbb{R})^{n+1}$, i.e. $z$ is purely imaginary, which means $(z-a_j)^2$ is non-positive, in particular not equal to $r_j^2=1$, for all $1\leq j\leq n+1$. This too is a contradiction.\\
How do the various intersections look with this choice of $a$ and $r$? Let $I\subset\{1,\ldots,n+1\}$ be non-empty and choose some $j_0\in I$. Then we have
\begin{equation}\label{eq:intersection_2}
    z\in\bigcap_{j\in I}S_j \qquad\Leftrightarrow\qquad z\in S_{j_0}\;\land\; \forall j\in I\backslash\{j_0\}:z_j=z_{j_0}
\end{equation}
by plugging in the equation for $S_{j_0}$ into the remaining ones. Plugging this back into the equation for $S_{j_0}$, we get
\begin{equation}
    \begin{split}
        &(z_{j_0}-i)^2+\sum_{j\in I\backslash\{j_0\}}z_{j_0}^2+\sum_{j\notin I}z_j^2=1\\
        \Leftrightarrow\qquad& |I|\cdot z_{j_0}^2-2i\cdot z_{j_0}+\sum_{j\notin I}z_j^2=2\\
        \Leftrightarrow\qquad& |I|\cdot(z_{j_0}-\frac{i}{|I|})^2+\sum_{j\notin I}z_j^2=2-\frac{1}{|I|}
    \end{split}
\end{equation}
Hence, the real $(n+1-|I|)$-sphere\footnote{This is a slight abuse of language, since the intersection $\bigcap_{j\in I}S_j$ retracts to a space only homeomorphic to a sphere. Nevertheless, equation \eqref{eq:intersection_2} shows that the homeomorphism can be realized by simply rescaling the coordinates $z_j$ with $j\in I$ and a translation. So it still makes sense to speak of the center of this sphere.} to which $\bigcap_{j\in I}S_j$ deformation retracts is
\begin{equation}\label{eq:real_sphere}
    S_\mathbb{R}(I):=\{z\in S_{j_0} \;|\; \forall j\in I:\text{Im}\,z_j=\frac{1}{|I|},\; \forall j\notin I:\text{Im}\,z_j=0,\; \forall j_1,j_2\in I:\text{Re}\,z_{j_1}=\text{Re}\,z_{j_2}\}.
\end{equation}
In particular, the center of $\bigcap_{j\in I}S_j$ is 
\begin{equation}
    c_I\in\mathbb{C}^{n+1} \qquad\text{with}\qquad (c_I)_j:=\begin{cases} \frac{i}{|I|} & \text{if }j\in I \\ 0 & \text{otherwise}. \end{cases}
\end{equation}
Furthermore, we can write $S_\mathbb{R}(I)=S_{j_0}\cap(c_I+V_I)$ as the intersection of $S_{j_0}$ with the affine $(n+2-|I|)$-space $c_I+V_I$ with
\begin{equation}
    V_I:=\{z\in\mathbb{R}^{n+1} \;|\; \forall j_1,j_2\in I:z_{j_1}=z_{j_2}\}
\end{equation}
Note that we have the following basis for the vector space $V_I$: Let
\begin{equation}
    v:=\begin{pmatrix} 1 \\ \vdots \\ 1 \end{pmatrix}\in\mathbb{R}^{n+1} \qquad\text{and}\qquad v_j:=v-e_j\in\mathbb{R}^{n+1}, \quad\forall j\in\{1,\ldots,n+1\}.
\end{equation}
Then $v,\{v\}_{j\notin I}$ is a basis of $V_I$. Later, we need the following subsets of $V_I$: For all disjoint subsets $J_\leq,J_\geq\subset\{1,\ldots,n+1\}\backslash I$ and all $\tau\in\{-1,0,1\}$, we define
\begin{equation}
    V_{J_\leq,J_\geq}^\tau:=\tau\cdot\mathbb{R}_{\geq0}\cdot v+\sum_{j\in J_\leq}\mathbb{R}_{\leq0}\cdot v_j+\sum_{j\in J_\geq}\mathbb{R}_{\geq0}\cdot v_j.
\end{equation}
Note that the relative interior of $V_{J_\leq,J_\geq}^\tau$ is
\begin{equation}
    \overset{\circ}{V}{}_{J_\leq,J_\geq}^\tau=\tau\cdot\mathbb{R}_{>0}\cdot v+\sum_{j\in J_\leq}\mathbb{R}_{<0}\cdot v_j+\sum_{j\in J_\geq}\mathbb{R}_{>0}\cdot v_j.
\end{equation}
Now, we first establish cells with purely imaginary support, i.e. support contained within $(i\cdot\mathbb{R})^{n+1}$, connecting the centers $c_I$. We clearly want each point $c_I$ itself to be a zero cell $e_I$. Furthermore, for each sequence $I_1\subsetneq\cdots\subsetneq I_k$, we consider the convex hull of $c_{I_1},\ldots,c_{I_k}$ to be a $(k-1)$-simplex $e_{I_1,\ldots,I_k}$. All these simplices together just form a geometric realization of the order complex associated to the power set of $\{1,\ldots,n+1\}$ together with inclusion as pre-order. The boundary of each simplex $e_{I_1,\ldots,I_k}$ can be expressed as
\begin{equation}
    \partial e_{I_1,\ldots,I_k}=\sum_{j=1}^k(-1)^{j-1}\cdot e_{I_1,\ldots,\widehat{I}_j,\ldots,I_k}.
\end{equation}
The complex we obtain this way is just the barycentric subdivision obtained from the $n$-simplex spanned by $e_{\{1\}},\ldots,e_{\{n+1\}}$. We have seen this complex for the case $n=2$ (i.e. for three complex 2-spheres) in Example \ref{ex:cw_1}, which is illustrated in Figure \ref{fig:imaginary simplices_1}. In Figure \ref{fig:imaginary simplices_2}, the case $n=3$ (i.e. for four complex 3-spheres) is depicted. Since the complex in the latter case lives in $(i\cdot\mathbb{R})^4$, we projected to the first three coordinates in this picture to obtain a graphical representation.

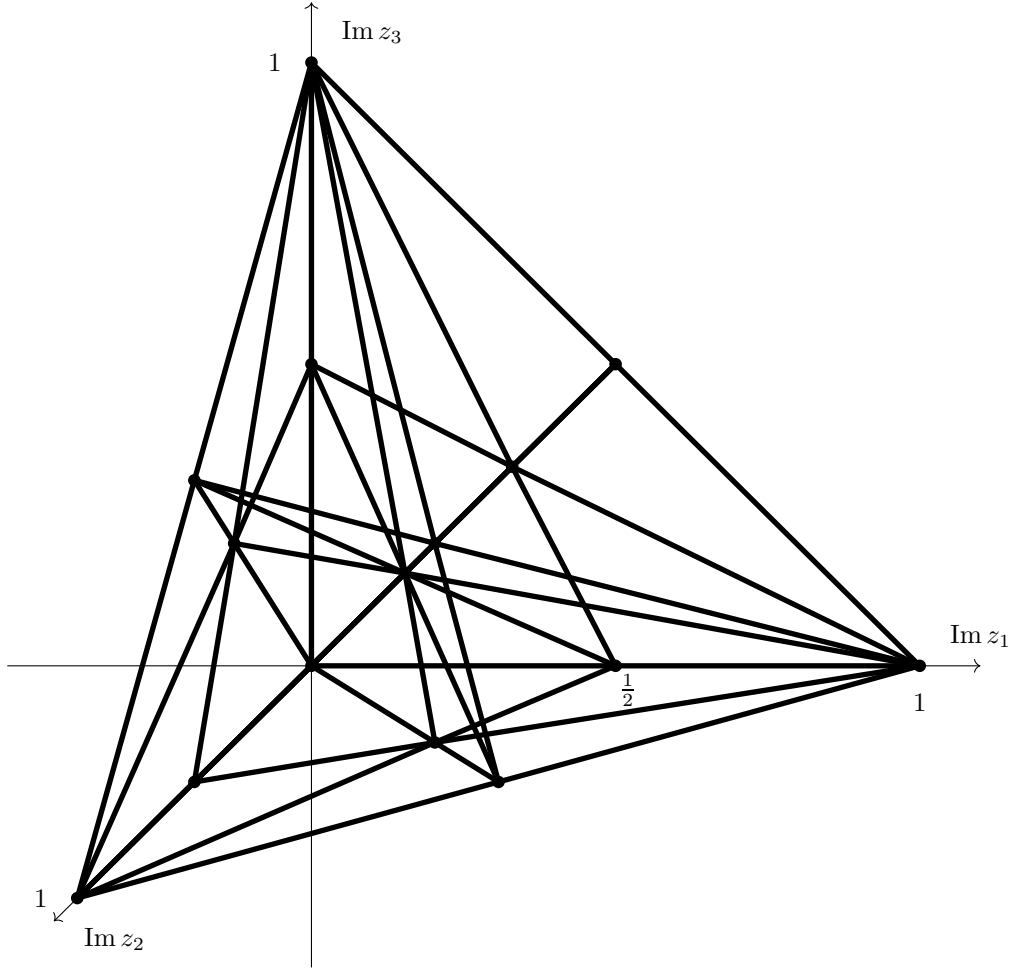
\begin{figure}
    \centering
    \def\distance{0.6}
    \def\radiusa{0.5}
    \def\radiusb{0.4}
    \def\stepsa{9}
    \def\stepsb{7}
    \begin{tikzpicture}[scale=8]
        \draw[->] (0,0,0) -- (xyz cylindrical cs:radius=1.1);
        \draw (0,0,0) -- (xyz cylindrical cs:radius=0.5,angle=180);
        \draw[->] (0,0,0) -- (xyz cylindrical cs:radius=1.1,angle=90);
        \draw (0,0,0) -- (xyz cylindrical cs:radius=0.5,angle=-90);
        \draw (0,0,0) -- (xyz cylindrical cs:z=-1);
        \draw[->] (0,0,0) -- (xyz cylindrical cs:z=1.1);

        \draw (1.1,0.05,0) node {$\text{Im}\,z_1$};
        \draw (1,-0.06,0) node {1};
        \draw (0.52,-0.04,0) node {$\frac{1}{2}$};
        \draw (0.1,-0.03,1.1) node {$\text{Im}\,z_2$};
        \draw (-0.06,0,1) node {1};
        \draw (0.1,1.05,0) node {$\text{Im}\,z_3$};
        \draw (-0.06,1,0) node {1};

        \draw (1,0,0) node [circle, fill, scale=0.5] {};
        \draw (0,1,0) node [circle, fill, scale=0.5] {};
        \draw (0,0,1) node [circle, fill, scale=0.5] {};
        \draw (0,0,0) node [circle, fill, scale=0.5] {};

        \draw (0.5,0,0) node [circle, fill, scale=0.5] {};
        \draw (0,0.5,0) node [circle, fill, scale=0.5] {};
        \draw (0,0,0.5) node [circle, fill, scale=0.5] {};
        \draw (0.5,0.5,0) node [circle, fill, scale=0.5] {};
        \draw (0.5,0,0.5) node [circle, fill, scale=0.5] {};
        \draw (0,0.5,0.5) node [circle, fill, scale=0.5] {};

        \draw (0.33,0.33,0.33) node [circle, fill, scale=0.5] {};
        \draw (0.33,0.33,0) node [circle, fill, scale=0.5] {};
        \draw (0.33,0,0.33) node [circle, fill, scale=0.5] {};
        \draw (0,0.33,0.33) node [circle, fill, scale=0.5] {};
        
        \draw (0.25,0.25,0.25) node [circle, fill, scale=0.5] {};

        \draw[line width=2pt] (0.5,0.5,0) -- (1,0,0);
        \draw[line width=2pt] (0.5,0.5,0) -- (0,1,0);
        \draw[line width=2pt] (0.5,0,0.5) -- (1,0,0);
        \draw[line width=2pt] (0.5,0,0.5) -- (0,0,1);
        \draw[line width=2pt] (0,0.5,0.5) -- (0,1,0);
        \draw[line width=2pt] (0,0.5,0.5) -- (0,0,1);
        \draw[line width=2pt] (0.5,0,0) -- (1,0,0);
        \draw[line width=2pt] (0.5,0,0) -- (0,0,0);
        \draw[line width=2pt] (0,0.5,0) -- (0,1,0);
        \draw[line width=2pt] (0,0.5,0) -- (0,0,0);
        \draw[line width=2pt] (0,0,0.5) -- (0,0,1);
        \draw[line width=2pt] (0,0,0.5) -- (0,0,0);

        \draw[line width=2pt] (0.33,0.33,0.33) -- (0.5,0.5,0);
        \draw[line width=2pt] (0.33,0.33,0.33) -- (0.5,0,0.5);
        \draw[line width=2pt] (0.33,0.33,0.33) -- (0,0.5,0.5);
        \draw[line width=2pt] (0.33,0.33,0.33) -- (1,0,0);
        \draw[line width=2pt] (0.33,0.33,0.33) -- (0,1,0);
        \draw[line width=2pt] (0.33,0.33,0.33) -- (0,0,1);
        \draw[line width=2pt] (0.33,0.33,0) -- (0.5,0.5,0);
        \draw[line width=2pt] (0.33,0.33,0) -- (0.5,0,0);
        \draw[line width=2pt] (0.33,0.33,0) -- (0,0.5,0);
        \draw[line width=2pt] (0.33,0.33,0) -- (1,0,0);
        \draw[line width=2pt] (0.33,0.33,0) -- (0,1,0);
        \draw[line width=2pt] (0.33,0.33,0) -- (0,0,0);
        \draw[line width=2pt] (0.33,0,0.33) -- (0.5,0,0.5);
        \draw[line width=2pt] (0.33,0,0.33) -- (0.5,0,0);
        \draw[line width=2pt] (0.33,0,0.33) -- (0,0,0.5);
        \draw[line width=2pt] (0.33,0,0.33) -- (1,0,0);
        \draw[line width=2pt] (0.33,0,0.33) -- (0,0,1);
        \draw[line width=2pt] (0.33,0,0.33) -- (0,0,0);
        \draw[line width=2pt] (0,0.33,0.33) -- (0,0.5,0.5);
        \draw[line width=2pt] (0,0.33,0.33) -- (0,0.5,0);
        \draw[line width=2pt] (0,0.33,0.33) -- (0,0,0.5);
        \draw[line width=2pt] (0,0.33,0.33) -- (0,1,0);
        \draw[line width=2pt] (0,0.33,0.33) -- (0,0,1);
        \draw[line width=2pt] (0,0.33,0.33) -- (0,0,0);
        
        \draw[line width=2pt] (0.25,0.25,0.25) -- (0.33,0.33,0.33);
        \draw[line width=2pt] (0.25,0.25,0.25) -- (0.33,0.33,0);
        \draw[line width=2pt] (0.25,0.25,0.25) -- (0.33,0,0.33);
        \draw[line width=2pt] (0.25,0.25,0.25) -- (0,0.33,0.33);
        \draw[line width=2pt] (0.25,0.25,0.25) -- (0.5,0.5,0);
        \draw[line width=2pt] (0.25,0.25,0.25) -- (0.5,0,0.5);
        \draw[line width=2pt] (0.25,0.25,0.25) -- (0,0.5,0.5);
        \draw[line width=2pt] (0.25,0.25,0.25) -- (0.5,0,0);
        \draw[line width=2pt] (0.25,0.25,0.25) -- (0,0.5,0);
        \draw[line width=2pt] (0.25,0.25,0.25) -- (0,0,0.5);
        \draw[line width=2pt] (0.25,0.25,0.25) -- (1,0,0);
        \draw[line width=2pt] (0.25,0.25,0.25) -- (0,1,0);
        \draw[line width=2pt] (0.25,0.25,0.25) -- (0,0,1);
        \draw[line width=2pt] (0.25,0.25,0.25) -- (0,0,0);
        
    \end{tikzpicture}
    \caption{The projection to the first three coordinates of the subcomplex consisting of all simplices with support contained in $(i\cdot\mathbb{R})^{n+1}$ for the case $n=3$. The 0-cells are the centers of the complex spheres given by the various intersections of the $S_i$.}
    \label{fig:imaginary simplices_2}
\end{figure}
Now we attach more cells to the complex, similarly to Example \ref{ex:cw_2}. To each cell $e_{I_1,\ldots,I_k}$, we associate a real part in an appropriate $V_I$. We first define the following index set:
\begin{equation}
    \mathcal{I}:=\{(I_1,\ldots,I_k,J_\leq,J_\geq,\prec) \;|\; \substack{\emptyset\neq I_1\subsetneq\cdots\subsetneq I_k\subset\{1,\ldots,n+1\},\;J_\leq, J_\geq\subset\{1,\ldots,n+1\}\backslash I,\\ J_\leq\cap J_\geq=\emptyset,\;\prec\in\{\leq,=\}}\}.
\end{equation}
The idea is that the $I_1,\ldots,I_k$ encode the previously introduced simplices $e_{I_1,\ldots,I_k}$, the sets $J_\leq,J_\geq$ encode which subspace $V_{J_\leq,J_\geq}^\tau$ of $V_I$ the real part belongs to, including the signs (or the vanishing) of the coefficients of the basis vectors and $\prec$ encodes if we consider values of $z$ with $(z-a_j)^2\leq r_j^2$ (for the relative homology) or $(z-a_j)^2=r_j^2$ (for the reduced homology). For every $i=(I_1,\ldots,I_k,J_\leq,J_\geq,\prec)\in\mathcal{I}$, we denote
\begin{equation}
    J_i:=J_\leq\cup J_\geq \qquad\text{and}\qquad |i|:=k.
\end{equation}
Now for each $i=(I_1,\ldots,I_k,J_\leq,J_\geq,\prec)\in\mathcal{I}$, we define
\begin{equation}
    e_i^\tau:=\{z\in\mathbb{C}^{n+1} \;|\; i\cdot\text{Im}\,z\in e_{I_1,\ldots,I_k},\;\text{Re}\,z\in V_{J_\leq,J_\geq}^\tau,\;(z-a_{\min I_1})^2\prec1\}.
\end{equation}
We want to show that the set $\bigcup_{\tau\in\{-1,0,1\}}\bigcup_{i\in\mathcal{I}}e_i^\tau$ has the structure of a CW-complex with cells $e_i^\tau$. First, we show that the relative interior of all these sets are disjoint so that we indeed obtain a partition of the above union into open sets.
\begin{lem}\label{lem:partition}
    Let $i_1,i_2\in\mathcal{I}$ and $\tau_1,\tau_2\in\{-1,0,1\}$ with $i_1\neq i_2$ or $\tau_1\neq\tau_2$. Then the relative interior $\overset{\circ}{e}{}_{i_1}^{\tau_1}$ of $e_{i_1}^{\tau_1}$ and $\overset{\circ}{e}{}_{i_2}^{\tau_2}$ of $e_{i_2}^{\tau_2}$ are disjoint.
\end{lem}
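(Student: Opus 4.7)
My strategy will be to show that from any point $z \in \overset{\circ}{e}{}_i^\tau$ one can uniquely reconstruct the tuple $(I_1,\ldots,I_k,J_\leq,J_\geq,\prec,\tau)$. Once this is done, a point lying simultaneously in $\overset{\circ}{e}{}_{i_1}^{\tau_1}$ and $\overset{\circ}{e}{}_{i_2}^{\tau_2}$ forces $(i_1,\tau_1)=(i_2,\tau_2)$, contradicting the hypothesis. The first move is to unpack what ``relative interior'' means in this context: $z\in\overset{\circ}{e}{}_i^\tau$ is equivalent to three independent strictness conditions---that $i\cdot\operatorname{Im} z$ lies in the open simplex $\overset{\circ}{e}_{I_1,\ldots,I_k}$, that $\operatorname{Re} z$ lies in the open polyhedral cone $\overset{\circ}{V}{}_{J_\leq,J_\geq}^\tau$, and that the constraint on $(z-a_{\min I_1})^2$ is either the equality $=1$ (when $\prec$ is ``$=$'') or the strict inequality $<1$ (when $\prec$ is ``$\leq$'').

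Next I would recover the chain $I_1\subsetneq\cdots\subsetneq I_k$ from the imaginary part of $z$. The simplices $e_{I_1,\ldots,I_k}$ form a geometric realisation of the order complex on the non-empty subsets of $\{1,\ldots,n+1\}$, and distinct open simplices of a simplicial complex are pairwise disjoint. This pins down the entire chain, and in particular the index $\min I_1$.

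The substantive step is to recover $(J_\leq,J_\geq,\tau)$ from $\operatorname{Re} z$. Using the basis $\{v\}\cup\{v_j\}_{j\notin I_1}$ of the subspace $V_{I_1}$, I write $\operatorname{Re} z=\alpha v+\sum_{j\in J}\beta_j v_j$ uniquely with $J:=J_\leq\cup J_\geq$, and set $S:=\alpha+\sum_{j\in J}\beta_j$. A short computation using $v_j=v-e_j$ then gives $(\operatorname{Re} z)_\ell=S$ for $\ell\notin J$ and $(\operatorname{Re} z)_\ell=S-\beta_\ell\neq S$ for $\ell\in J$. Since $J\cap I_1=\emptyset$ (built into the definition of $\mathcal{I}$) and $I_1$ is non-empty, $S$ can be read off as the common value of $\operatorname{Re} z$ on the coordinates indexed by $I_1$; then $J=\{\ell:(\operatorname{Re} z)_\ell\neq S\}$, the signs of the $\beta_\ell$'s split $J$ into $J_\leq$ and $J_\geq$, and $\tau=\operatorname{sgn}\alpha$. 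Finally $\prec$ is detected by checking whether $(z-a_{\min I_1})^2=1$ or $<1$.

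The hardest part will be the canonical identification of the ``baseline'' value $S$ in the real-part step: a priori, when several coordinates of $\operatorname{Re} z$ repeat, it is not obvious which value is the ``right'' one to single out. The combinatorial condition $J\cap I_1=\emptyset$ together with $I_1\neq\emptyset$ is precisely what makes this identification unambiguous, and it reduces the geometric question to a purely linear-algebraic uniqueness statement, after which the remaining checks are routine.
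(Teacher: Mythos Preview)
Your proposal is correct and follows essentially the same route as the paper: first the open simplices of the barycentric subdivision are pairwise disjoint, which recovers the chain $I_1\subsetneq\cdots\subsetneq I_k$ from $\operatorname{Im} z$; then the unique basis expansion of $\operatorname{Re} z$ in terms of $v$ and the $v_j$ determines $J_\leq,J_\geq,\tau$ from the signs of the coefficients; finally $\prec$ is read off from whether $(z-a_{\min I_1})^2$ equals or is strictly less than $1$. The only cosmetic difference is that the paper fixes a single index $j\in I_k$ and expands $\operatorname{Re} z$ in the basis $v,\{v_\ell\}_{\ell\neq j}$ of all of $\mathbb{R}^{n+1}$, invoking uniqueness of that expansion directly, whereas you work inside $V_{I_1}$ and recover the coefficients via the ``baseline'' value $S$ on the $I_1$-coordinates; both arguments hinge on the same fact $J\cap I_k=\emptyset$ (hence $J\cap I_1=\emptyset$), and your coordinate computation is just an explicit way of reading off the same unique coefficients.
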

\begin{proof}
    First note that the relative interior of any $e_i^\tau$ for $i=(I_1,\ldots,I_k,J_\leq,J_\geq,\prec)$ is given by
    \begin{equation}
        \overset{\circ}{e}{}_i^\tau=\{z\in\mathbb{C}^{n+1} \;|\; i\cdot\text{Im}\,z\in \overset{\circ}e_{I_1,\ldots,I_k},\;\text{Re}\,z\in\overset{\circ}{V}{}_{J_\leq,J_\geq}^\tau,\;(z-a_{\min I_1})^2<r_{\min I_1}^2\}
    \end{equation}
    if $\prec$ is $\leq$ and
    \begin{equation}
        \overset{\circ}{e}{}_i^\tau=\{z\in\mathbb{C}^{n+1} \;|\; i\cdot\text{Im}\,z\in \overset{\circ}e_{I_1,\ldots,I_k},\;\text{Re}\,z\in\overset{\circ}{V}{}_{J_\leq,J_\geq}^\tau,\;(z-a_{\min I_1})^2=r_{\min I_1}^2\}
    \end{equation}
    if $\prec$ is $=$. Now denote $i_j=(I_{1,j},\ldots,I_{k_j,j},J_{\leq,j},J_{\geq,j},\prec_j)$ for $j=1,2$ and let us assume $z\in\overset{\circ}{e}{}_{i_1}^{\tau_1}\cap\overset{\circ}{e}{}_{i_2}^{\tau_2}$. Then in particular $i\cdot\text{Im}\,z\in\overset{\circ}e_{I_{1,1},\ldots,I_{k_1,1}}\cap\overset{\circ}e_{I_{1,2},\ldots,I_{k_2,2}}$ since the $e_{I_1,\ldots,I_k}$ form a simplicial complex. This implies $k_1=k_2=:k$ and $I_{j,1}=I_{j,2}$ for all $j\in\{1,\ldots,k\}$. Note that this also immediately implies $\prec_1=\prec_2$ since $\min I_{1,1}=\min I_{1,2}$. Now, fix some $j\in I_{k,1}$. Then $\text{Re}\,z$ can be uniquely expressed as
    \begin{equation}
        \text{Re}\,z=\lambda\cdot v+\sum_{\substack{k=1 \\ k\neq j}}^{n+1}\lambda_k\cdot v_k
    \end{equation}
    since $v,\{v_k\}_{k\neq j}$ is a basis of $\mathbb{R}^{n+1}$. Now, $k\in J_{\leq,1}\Leftrightarrow \lambda_k>0\Leftrightarrow k\in J_{\leq,2}$ so that $J_{\leq,1}=J_{\leq,2}$. Similarly, we obtain $J_{\geq,1}=J_{\geq,2}$ and $\tau_1=\tau_2$. We conclude that $(i_1,\tau_1)=(i_2,\tau_2)$, which complete the proof.
\end{proof}
Next, we want to identify for each $i=(I_1,\ldots,I_k,J_\leq,J_\geq,\prec)$ and $\tau\in\{-1,0,1\}$ the $e_i^\tau$ with the product of the simplex $e_{I_1,\ldots,I_k}$ and a $(|J_i|+|\tau|)$-simplex if $\prec$ is $\leq$ (resp. $(|J_i|+|\tau|-1)$-simplex if $\prec$ is $=$).\footnote{Here, we implicitly use the convention that a $k$-simplex is empty if $k<0$.} Recall the standard notation for $p$-norm of a real vector. For every $x\in\mathbb{R}^k$, we denote
\begin{equation}
    \Vert x\Vert_p=\sqrt[p]{\sum_{i=1}^k|x_i|^p}.
\end{equation}
Also recall that the standard $k$-simplex is typically defined as
\begin{equation}\label{eq:standard_simplex}
    \{x\in\mathbb{R}^{k+1} \;|\; \forall1\leq j\leq k+1:x_j\geq0,\;\sum_{i=1}^{k+1}x_i=1\}.
\end{equation}
For us, it is more convenient to employ a slight variation of the standard simplex, namely the realization as a corner of a cube. So we set
\begin{equation}
    \Delta^k:=\{x\in\mathbb{R}^k \;|\; x_i\geq0,\; \Vert x\Vert_1\leq1\}.
\end{equation}
This is homeomorphic to \eqref{eq:standard_simplex}. We have the following
\begin{lem}\label{lem:partial_sphere}
    Let $n\in\mathbb{N}$, let $D^n\subset\mathbb{R}^n$ be the closed unit disk of dimension $n$ around 0, let $\tau\in\{-1,1\}^n$ and let
    \begin{equation}
        X:=(\tau_1\cdot\mathbb{R}_{\geq0})\times\cdots\times(\tau_n\cdot\mathbb{R}_{\geq0}).
    \end{equation}
    The map
    \begin{equation}\label{eq:partial_sphere}
        h:D^n\cap X\to\Delta^n, \qquad x\mapsto\begin{cases}\frac{\Vert x\Vert_2}{\Vert x\Vert_1}\cdot(\tau_1\cdot x_1,\ldots,\tau_n\cdot x_n) & \text{if }x\neq0 \\ 0 & \text{if }x=0\end{cases}
    \end{equation}
    is a homeomorphism with inverse
    \begin{equation}
        h^{-1}:\Delta^n\to D^n\cap X, \qquad x\mapsto\begin{cases}\frac{\Vert x\Vert_1}{\Vert x\Vert_2}\cdot (\tau_1\cdot x_1,\ldots,\tau_n\cdot x_n) & \text{if }x\neq0 \\ 0 & \text{if }x=0.\end{cases}
    \end{equation}
    Moreover, $h$ restricts to a homeomorphism
    \begin{equation}
        h|_{S^{n-1}\cap X}:S^{n-1}\cap X\overset{\sim}\to\iota(\Delta^{n-1}),
    \end{equation}
    where $\iota$ is the inclusion 
    \begin{equation}
        \iota:\Delta^{n-1}\hookrightarrow\Delta^n,\qquad x\mapsto(x,1-\sum_{j=1}^{n-1}x_i)
    \end{equation}
    into the last face of $\Delta^n$.
\end{lem}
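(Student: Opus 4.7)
The plan is to verify the claim by direct computation, treating it as a largely mechanical check that two explicit formulas are inverse to each other and that they land in the correct target sets. The only mildly non-obvious point is continuity at the origin, where both formulas have apparently singular denominators $\Vert x\Vert_1$ and $\Vert x\Vert_2$; this is handled by the standard equivalence of norms on $\mathbb{R}^n$.

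First I would verify that $h$ is well-defined. For $x\in D^n\cap X$ with $x\neq 0$, each coordinate $\tau_i x_i$ is non-negative by the definition of $X$, so the tuple $(\tau_1 x_1,\ldots,\tau_n x_n)$ lies in $\mathbb{R}_{\geq 0}^n$. Multiplying by the non-negative scalar $\Vert x\Vert_2/\Vert x\Vert_1$ preserves this, so $h(x)$ has non-negative coordinates. For the $\ell^1$-bound I would compute
\begin{equation}
    \Vert h(x)\Vert_1=\frac{\Vert x\Vert_2}{\Vert x\Vert_1}\cdot\sum_{i=1}^n|\tau_ix_i|=\frac{\Vert x\Vert_2}{\Vert x\Vert_1}\cdot\Vert x\Vert_1=\Vert x\Vert_2\leq 1,
\end{equation}
so $h(x)\in\Delta^n$. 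The symmetric check for $h^{-1}$ uses $\Vert h^{-1}(y)\Vert_2=\Vert y\Vert_1\leq 1$. For continuity at $0$, I would use $\frac{1}{\sqrt n}\Vert x\Vert_1\leq\Vert x\Vert_2\leq\Vert x\Vert_1$, which gives $\Vert h(x)\Vert_2=\Vert x\Vert_2^2/\Vert x\Vert_1\leq\Vert x\Vert_2$, so $h(x)\to 0$ as $x\to 0$; the analogous bound $\Vert h^{-1}(y)\Vert_1\leq\sqrt n\,\Vert y\Vert_1$ handles continuity of $h^{-1}$ at the origin. Away from $0$ both maps are manifestly continuous.

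Next I would verify that the two maps are mutually inverse. Setting $y:=h(x)$ for $x\neq 0$, the computations $\Vert y\Vert_1=\Vert x\Vert_2$ and $\Vert y\Vert_2=\Vert x\Vert_2^2/\Vert x\Vert_1$ show $\Vert y\Vert_1/\Vert y\Vert_2=\Vert x\Vert_1/\Vert x\Vert_2$, and then
\begin{equation}
    h^{-1}(y)=\frac{\Vert x\Vert_1}{\Vert x\Vert_2}\cdot\frac{\Vert x\Vert_2}{\Vert x\Vert_1}\cdot(\tau_1^2x_1,\ldots,\tau_n^2x_n)=x,
\end{equation}
using $\tau_i^2=1$. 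The reverse composition $h\circ h^{-1}=\mathrm{id}$ follows by the same computation with the roles of the norms swapped. Since $h$ and $h^{-1}$ are continuous mutual inverses, $h$ is a homeomorphism.

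Finally, for the restriction to $S^{n-1}\cap X$, I would observe that the computation $\Vert h(x)\Vert_1=\Vert x\Vert_2$ from the first step immediately shows $\Vert x\Vert_2=1$ if and only if $\Vert h(x)\Vert_1=1$, which is exactly the defining equation of the face $\iota(\Delta^{n-1})\subset\Delta^n$. Thus $h$ restricts to a continuous bijection $S^{n-1}\cap X\to\iota(\Delta^{n-1})$, and its inverse is the restriction of the already continuous $h^{-1}$, so this restriction is a homeomorphism as well. The only real obstacle is bookkeeping the signs $\tau_i$ and checking continuity at the origin carefully; everything else is a straightforward verification.
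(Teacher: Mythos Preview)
Your proof is correct and follows essentially the same approach as the paper's own argument: both verify well-definedness via the key identity $\Vert h(x)\Vert_1=\Vert x\Vert_2$, handle continuity at the origin through equivalence of norms, and check the mutual-inverse property by direct substitution. Your use of the explicit constants $1\leq\Vert x\Vert_1/\Vert x\Vert_2\leq\sqrt{n}$ is slightly sharper than the paper's appeal to the abstract equivalence of norms, but otherwise the two proofs are the same.
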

\begin{proof}
    First, we show that $h$ and $h^{-1}$ are well-defined in the sense that they take values in the correct codomains. Let $x\in D^n\cap X$. Then clearly every coordinate of $h(x)$ is positive and we have
    \begin{equation}
        \Vert h(x)\Vert_1=\Vert x\Vert_2\leq1.
    \end{equation}
    Thus $h(x)\in\Delta^n$. The other way around, if $x\in\Delta^n$ then again every coordinate of $h^{-1}(x)$ is positive and
    \begin{equation}
        \Vert h^{-1}(x)\Vert_2=\Vert x\Vert_1\leq1.
    \end{equation}
    Now we show that $h$ is a continuous map. It is obvious that $h$ is continuous at every point $x\neq0$ and it suffices to show that $h$ is continuous at 0. All norms in $\mathbb{R}^{n+1}$ are equivalent so that there are constants $m_1,m_2,M_1,M_2\in\mathbb{R}_{>0}$ such that
    \begin{equation}
        m_1\Vert x\Vert_1\leq\Vert x\Vert_2\leq M_1\Vert x\Vert_1 \qquad\text{and}\qquad m_2\Vert x\Vert_2\leq\Vert x\Vert_1\leq M_2\Vert x\Vert_2
    \end{equation}
    for all $x\in D^n\backslash\{0\}$. Thus, $\lim_{x\to0}h(x)=0=h(0)=h(\lim_{x\to0}x)$. Similarly, one shows that $h^{-1}$ is continuous. To show that these maps are inverse to each other, let $x\in D^n\cap X$. In case $x=0$, we have $h^{-1}(h(0))=h^{-1}(0)=0=x$. If on the other hand $x\neq0$, we have
    \begin{equation}
        h^{-1}(h(x))=h^{-1}(\frac{\Vert x\Vert _2}{\Vert x\Vert_1}\cdot x)=\frac{\Vert\frac{\Vert x\Vert_2}{\Vert x\Vert_1}\cdot x\Vert_1}{\Vert\frac{\Vert x\Vert_2}{\Vert x\Vert_1}\cdot x\Vert_2}\cdot\frac{\Vert x\Vert_2}{\Vert x\Vert_1}\cdot x=\frac{\Vert x\Vert_1}{\Vert x\Vert_2}\cdot\frac{\Vert x\Vert_2}{\Vert x\Vert_1}\cdot x=x.
    \end{equation}
    The other way around, let $x\in\Delta^n$. Then if $x=0$ we have $h(h^{-1}(0))=h(0)=0=x$ and if $x\neq0$ we have
    \begin{equation}
        h(h^{-1}(x))=h(\frac{\Vert x\Vert_1}{\Vert x\Vert_2}\cdot x)=\frac{\Vert\frac{\Vert x\Vert_1}{\Vert x\Vert_2}\cdot x\Vert_2}{\Vert\frac{\Vert x\Vert_1}{\Vert x\Vert_2}\cdot x\Vert_1}\cdot\frac{\Vert x\Vert_1}{\Vert x\Vert_2}\cdot x=\frac{\Vert x\Vert_2}{\Vert x\Vert_1}\cdot\frac{\Vert x\Vert_1}{\Vert x\Vert_2}\cdot x=x.
    \end{equation}
    For the second part of the statement, let $x\in S_n\cap X$. Then $x\neq0$ as well as $\Vert x\Vert_2=1$ and we compute
    \begin{equation}
        \Vert h(x)\Vert_1=\Vert\frac{\Vert x\Vert_2}{\Vert x\Vert_1}\cdot x\Vert_1=\Vert x\Vert_2=1.
    \end{equation}
    Thus $\tau(x)\in\iota(\Delta^{n-1})$.
\end{proof}
To keep track of the faces, we need a little bit more notation. Let us consider the various inclusions of the faces in a product of two simplices. Let $m,n\in\mathbb{N}$. For every $1\leq j\leq m$, we have the inclusions
\begin{equation}
    \iota_j^{(1)}:\Delta^{m-1}\times\Delta^n\hookrightarrow\Delta^m\times\Delta^n,\qquad (x,y)\mapsto(x_1,\ldots,x_{j-1},0,x_j,\ldots,x_m,y)
\end{equation}
\begin{equation}
    \iota_{m+1}^{(1)}:\Delta^{m-1}\times\Delta^n\hookrightarrow\Delta^m\times\Delta^n,\qquad (x,y)\mapsto(x_1,\ldots,x_{m-1},1-\sum_{j=1}^{m-1}x_j,y)
\end{equation}
into the $m+1$ faces of the simplex $\Delta^m$. Similarly, we denote for every $1\leq j\leq n$ the inclusions
\begin{equation}
    \iota_j^{(2)}:\Delta^m\times\Delta^{n-1}\hookrightarrow\Delta^m\times\Delta^n,\qquad (x,y)\mapsto(x,y_1,\ldots,y_{j-1},0,y_j,\ldots,y_n)
\end{equation}
\begin{equation}
    \iota_{n+1}^{(2)}:\Delta^m\times\Delta^{n-1}\hookrightarrow\Delta^m\times\Delta^n,\qquad (x,y)\mapsto(x,y_1,\ldots,y_{n-1},1-\sum_{j=1}^{n-1}y_j)
\end{equation}
into the $n+1$ faces of the simplex $\Delta^n$. These inclusions clearly depend on $m$ and $n$. But the relevant values are always obvious from context and we therefore drop this dependence from our notation. Furthermore, for every $i\in\mathcal{I}$, we denote by
\begin{equation}
    n_i:J_i\overset{\sim}\to\{1,\ldots,|J_i|\}
\end{equation}
the unique order-preserving bijection (with respect to the total order induces by the standard order relation on $\mathbb{Z}$).
The following proposition establishes explicit homeomorphisms between the sets $e_i^\tau$ and products of two simplices.
\begin{prop}\label{prop:homeomorphism_1}
    Let
    \begin{equation}
        i=(I_1,\ldots,I_k,J_\leq,J_\geq,\leq),\quad i'=(I_1,\ldots,I_k,J_\leq,J_\geq,=)\in\mathcal{I},
    \end{equation}
    and $\tau\in\{-1,0,1\}$. Denote $m:=|J_i|+|\tau|$. Then there is a homeomorphism
    \begin{equation}
        \phi_i^\tau:e_i^\tau\overset{\sim}\to\Delta^m\times\Delta^k,
    \end{equation}
    which restricts to a homeomorphism
    \begin{equation}
        \phi_{i'}^\tau:=\phi_i^\tau|_{e_{i'}^\tau}:e_{i'}^\tau\overset{\sim}\to\iota_{m+1}^{(1)}(\Delta^{m-1}\times\Delta^k)\simeq\Delta^{m-1}\times\Delta^k.
    \end{equation}
    Furthermore for $i''=(I_1,\ldots,I_k,J_\leq,J_\geq,\prec)\in\mathcal{I}$, we have the following identities:
    \begin{alignat}{2}
        \phi_i^\tau\circ(\phi_{i'}^\tau)^{-1}&=\iota_{m+1}^{(1)}&&,\label{eq:homeomorphism_1_1}\\
        \phi_{i''}^\tau\circ(\phi_{(I_1,\ldots,I_k,J_\leq-\{j\},J_\geq-\{j\},\tau,\prec)}^\tau)^{-1}&=\iota_{n_{i''}(j)}^{(1)} \qquad&&\forall j\in J_i,\label{eq:homeomorphism_1_2}\\
        \phi_{i''}^\tau\circ(\phi_{(I_1,\ldots,I_k,J_\leq,J_\geq,\prec)}^0)^{-1}&=\iota_m^{(1)} \qquad&&\text{if } \tau\neq0,\label{eq:homeomorphism_1_3}\\
        \phi_{i''}^\tau\circ(\phi_{(I_1,\ldots,\widehat{I}_j,\ldots,I_k,J_\leq,J_\geq,\prec)}^\tau)^{-1}&=\iota_j^{(2)} \qquad&&\forall j\in\{1,\ldots,k\},\label{eq:homeomorphism_1_4}
    \end{alignat}
\end{prop}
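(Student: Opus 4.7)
The plan is to reduce the defining condition of $e_i^\tau$ to a genuine product structure, then build $\phi_i^\tau$ as a product of two elementary homeomorphisms and finally check the face compatibilities by direct inspection.

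My first step would be to show that the perpendicularity part of the equation $(z-a_{j_0})^2\prec1$ (with $j_0:=\min I_1$) is \emph{automatic} for every $z$ satisfying the first two conditions in the definition of $e_i^\tau$. Since $a_{j_0}=i\cdot e_{j_0}$, expanding gives $(z-a_{j_0})^2=(\mathrm{Re}\,z)^2-(\mathrm{Im}\,z-e_{j_0})^2+2i\,(\mathrm{Re}\,z)\cdot(\mathrm{Im}\,z-e_{j_0})$. Using that $j_0\in I_1\subseteq\cdots\subseteq I_k$, that $\mathrm{Im}\,z$ is a convex combination of the $c_{I_l}/i$ (so its coordinates sum to $1$ and vanish outside $I_k$), that $J_i\subseteq\{1,\dots,n+1\}\setminus I_k$, and that $\mathrm{Re}\,z$ lies in the span of $v$ and $\{v_j\}_{j\in J_i}$, a short computation (after writing $\mathrm{Re}\,z=\mu_0 v-\sum_{j\in J_i}\lambda_j e_j$) shows $(\mathrm{Re}\,z)\cdot(\mathrm{Im}\,z-e_{j_0})=0$. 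Hence $(z-a_{j_0})^2\prec1$ collapses to the purely real condition
\begin{equation*}
    (\mathrm{Re}\,z)^2\,\prec\,R(\mathrm{Im}\,z)^2,\qquad R(w):=\sqrt{1+(w-e_{j_0})^2}>0.
\end{equation*}

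Next I would construct $\phi_i^\tau$ as the product of two maps. On the imaginary factor, use barycentric coordinates on the simplex $e_{I_1,\dots,I_k}$ to obtain a homeomorphism to $\Delta^{k-1}$ (padded appropriately to match the target factor stated in the proposition), with vertex $c_{I_j}$ corresponding to the $j$th vertex. On the real factor, first rescale $\mathrm{Re}\,z\mapsto\mathrm{Re}\,z/R(\mathrm{Im}\,z)$ to decouple it from $\mathrm{Im}\,z$; the resulting region is $V_{J_\leq,J_\geq}^\tau\cap\bar D^{n+1}$ (resp.\ $\cap\,S^n$) for the case $\prec\,=\,\leq$ (resp.\ $\prec\,=\,=$). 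The cone $V_{J_\leq,J_\geq}^\tau$ is the linear image of $(\mathbb{R}_{\geq0})^m$ under the injective map sending the standard basis to $\{\tau v\}\cup\{-v_j\}_{j\in J_\leq}\cup\{v_j\}_{j\in J_\geq}$ (the linear independence of $v$ and the $v_j$ for $j\in J_i\subsetneq\{1,\dots,n+1\}$ is immediate). Pulling back the Euclidean ball gives an ellipsoid in $(\mathbb{R}_{\geq0})^m$, and the standard radial homeomorphism normalizes this to $D^m\cap X$ with $X=(\mathbb{R}_{\geq0})^m$; applying Lemma \ref{lem:partial_sphere} delivers the desired homeomorphism onto $\Delta^m$. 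The second part of Lemma \ref{lem:partial_sphere} immediately yields that $\phi_i^\tau$ restricts on $e_{i'}^\tau$ (the sphere case) to a homeomorphism onto the face $\iota_{m+1}^{(1)}(\Delta^{m-1}\times\Delta^k)$, which is identity (\ref{eq:homeomorphism_1_1}).

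For the remaining compatibility identities I would argue by unpacking the coordinate description. Identity (\ref{eq:homeomorphism_1_2}) says that sending a coordinate $y_j$ associated to $j\in J_i$ to zero corresponds to removing $j$ from $J_\leq\cup J_\geq$ entirely, which by construction is the $n_{i''}(j)$th face of $\Delta^m$; identity (\ref{eq:homeomorphism_1_3}) says that the $\tau$-coordinate going to zero is the passage $\tau\neq0\leadsto\tau=0$, giving the $m$th face; and identity (\ref{eq:homeomorphism_1_4}) says that setting the barycentric coordinate at $c_{I_j}$ to zero reproduces the simplex of the chain with $I_j$ omitted. Each of these is a direct consequence of the way barycentric coordinates and the coordinates of the cone were chosen.

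The main obstacle will not be the existence of the homeomorphism—once the perpendicularity simplification is noted, everything factors cleanly—but rather the bookkeeping: aligning the enumeration conventions for the faces of the product $\Delta^m\times\Delta^k$ with the combinatorial data $(J_\leq,J_\geq,\tau)$ and the chain $(I_1,\dots,I_k)$, so that the four inclusions $\iota_\bullet^{(\bullet)}$ match exactly the changes of parameters on the left-hand side. This is essentially a careful check of signs/positions and of the order-preserving bijection $n_i$, and will have to be done explicitly for each of the four cases.
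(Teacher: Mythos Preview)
Your proposal is correct and follows essentially the same route as the paper: decompose $z$ into real and imaginary parts, rescale the real part by the imaginary-dependent radius $R(\mathrm{Im}\,z)=\sqrt{1+(\mathrm{Im}\,z-e_{j_0})^2}$ to decouple the two factors, linearly identify the cone $V_{J_\leq,J_\geq}^\tau$ with an orthant, normalize the resulting quadratic to the round ball, and finish with Lemma~\ref{lem:partial_sphere}. Your explicit verification that the cross term $(\mathrm{Re}\,z)\cdot(\mathrm{Im}\,z-e_{j_0})$ vanishes is a detail the paper silently assumes, and your observation that the imaginary simplex $e_{I_1,\dots,I_k}$ is genuinely $(k-1)$-dimensional (hence the ``padding'' remark) flags a harmless dimension shift in the stated target that the paper does not comment on.
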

\begin{proof}
    We start by construction the homeomorphism $\phi_i^\tau$. Let us denote $j_0:=\min I_1$ and
    \begin{equation}
        c:\mathbb{C}^{n+1}\overset{\sim}\to(\mathbb{R}^{n+1})^2,\qquad z\mapsto(\text{Re}\,z,\text{Im}\,z).
    \end{equation}
    Then for every element $(x,y)\in c(e_i^\tau)$, the first entry $x$ can be expressed as
    \begin{equation}
        x=d_{j_m}\cdot v+\sum_{j\in J_i}d_j\cdot v_j
    \end{equation}
    with $d_{j_m}\in\tau\cdot\mathbb{R}_{\geq0}$ as well as $d_j\in\mathbb{R}_{\leq0}$ if $j\in J_\leq$ and $d_j\in\mathbb{R}_{\geq0}$ if $j\in J_\geq$. Write $J_i:=\{j_1,\ldots,j_{|J_i|}\}$ with $j_1<\cdots<j_{|J_i|}$ and denote
    \begin{equation}
        X:=(\tau_1\cdot\mathbb{R}_{\geq0})\times\cdots\times(\tau_{|J_i|}\cdot\mathbb{R}_{\geq0})\times(\tau\cdot\mathbb{R}_{\geq0}),
    \end{equation}
    where $\tau_k=-1$ if $j_k\in J_\leq$, $\tau_k=1$ if $j_k\in J_\geq$ and $\tau_k=0$ otherwise.
    Then there is a change of basis homeomorphism
    \begin{equation}
        g_i^\tau:V_{J_\leq,J_\geq}^\tau\overset{\sim}\to X,\qquad x\mapsto(d_{j_1},\ldots,d_{j_m}).
    \end{equation}
    The defining inequality for $(g_i^\tau,\text{id}_{\mathbb{R}^{n+1}})(c(e_i^\tau))$ reads
    \begin{equation}\label{eq:cells_1}
        \sum_{k=1}^{n+1}(\sum_{k=1}^md_{j_k})^2\leq 1+(y-e_{j_0})^2.
    \end{equation}
    On the left hand side, we have a positive-definite quadratic form in the $d_{j_1},\ldots,d_{j_m}$ and thus there exists a real invertible $m\times m$-matrix $M_i^\tau$, sending $(d_{j_1},\ldots,d_{j_m})$ to $(d_{j_1}',\ldots,d_{j_m}')$ such that equation \eqref{eq:cells_1} reads
    \begin{equation}
        \sum_{k=1}^m(d_{j_k}')^2\leq 1+(y-e_{j_0})^2.
    \end{equation}
    Now we divide by the modified radius
    \begin{equation}
        \tilde{r}_i(y):=\sqrt{1+(y-e_{j_0})^2} \quad\in\quad\mathbb{R}_{>0}
    \end{equation}
    and apply the homeomorphism $h$ from Lemma \ref{lem:partial_sphere}. To simplify the notation, let us write
    \begin{equation}
        \begin{split}
            f_i^\tau:\{(d,y)\in X\times\Delta^k \;|\; d^2\leq\tilde{r}_i^2(y)\}&\to\Delta^m\times\Delta^k,\\
            (d,y)&\mapsto(h(\frac{M_i^\tau\cdot d}{\tilde{r}_i(y)}),y).
        \end{split}
    \end{equation}
    The map $f_i$ is a homeomorphism with continuous inverse
    \begin{equation}
        \begin{split}
            (f_i^\tau)^{-1}:\Delta^m\times\Delta^k&\to\{(d,y)\in X\times\Delta^k \;|\; d^2\leq\tilde{r}_i^2(y)\},\\
            (x,y)&\mapsto((M_i^\tau)^{-1}\cdot(\tilde{r}_i(y)\cdot h^{-1}(x)),y).
        \end{split}
    \end{equation}
    Indeed, for all $(d,y)\in\{(d,y)\in X\times\Delta^k \;|\; d^2\leq\tilde{r}_i^2(y)\}$ we have
    \begin{equation}
        \begin{split}
            ((f_i^\tau)^{-1}\circ f_i^\tau)(d,y)&=(f_i^\tau)^{-1}(h(\frac{M_i^\tau\cdot d}{\tilde{r}_i(y)}),y)\\
            &=((M_i^\tau)^{-1}\cdot\tilde{r}_i(y)\cdot h^{-1}(h(\frac{M_i^\tau\cdot d}{\tilde{r}_i(y)})),y)\\
            &=((M_i^\tau)^{-1}\cdot M_i^\tau\cdot d,y)=(d,y)
        \end{split}
    \end{equation}
    and for all $(x,y)\in\Delta^m\times\Delta^k$ we have
    \begin{equation}
        \begin{split}
            (f_i^\tau\circ(f_i^\tau)^{-1})(x,y)&=f_i^\tau((M_i^\tau)^{-1}\cdot\tilde{r}_i(y)\cdot h^{-1}(x),y)\\
            &=(h(\frac{M_i^\tau\cdot(M_i^\tau)^{-1}\cdot\tilde{r}_i(y)\cdot h^{-1}(x)}{\tilde{r}_i(y)}),y)\\
            &=(h(h^{-1}(x)),y)=(x,y).
        \end{split}
    \end{equation}
    Then we can define the sought homeomorphism by
    \begin{equation}
        \phi_i^\tau:=f_i^\tau\circ(g_i^\tau,\text{id}_{\mathbb{R}^{n+1}})\circ c\quad:\quad e_i^\tau\overset{\sim}\to\Delta^m\times\Delta^k,
    \end{equation}
    which has continuous inverse
    \begin{equation}
        (\phi_i^\tau)^{-1}=c^{-1}\circ(g_i^\tau,\text{id}_{\mathbb{R}^{n+1}})^{-1}\circ(f_i^\tau)^{-1}\quad:\quad\Delta^m\times\Delta^k\overset{\sim}\to e_i^\tau.
    \end{equation}
    Next, we consider the restriction $\phi_{i'}^\tau$ of $\phi_i^\tau$ to $e_{i'}^\tau$. It suffices to show that $\phi_i^\tau$ sends elements in $e_{i'}^\tau$ to $\iota_{m+1}^{(1)}(\Delta^{m-1}\times\Delta^k)$ and the other way around for $(\phi_i^\tau)^{-1}$. So let $z\in e_{i'}^\tau$. Then applying $(g_i^\tau,\text{id}_{\mathbb{R}^{n+1}})\circ c$ sends $z$ to some $(d,y)$. Applying the matrix $M_i^\tau$ to $d$ yields some $(d',y)$ such that $(d')^2=(\tilde{r}_i(y))^2$. After diving $d'$ by $\tilde{r}_i(y)$, the result lies on the unit sphere. By Lemma \ref{lem:partial_sphere}, $h$ sends such an element to $\iota(\Delta^{m-1})$ so that the claim follows after using the fact that $\iota_{m+1}^{(1)}=(\iota,\text{id})$. The other way around, let $(x,y)\in\iota_{m+1}^{(1)}(\Delta^{m-1}\times\Delta^k)$. Then similarly to the first case, applying $h^{-1}$ to $x$ and multiplying by $\tilde{r}_i(y)$ yields an element $(d,y)$ with $d^2=(r_i(y))^2$. By the same logic as above, this gets send to some $z\in e_{i'}^\tau$.\\
    Now we prove the second part of the statement. Equation \eqref{eq:homeomorphism_1_1} follows immediately from the preceding discussion. For equation \eqref{eq:homeomorphism_1_2} to \eqref{eq:homeomorphism_1_4}, we first compute
    \begin{equation}
        \phi_i^\tau\circ(\phi_{i'}^\tau)^{-1}=f_i^\tau\circ(g_i^\tau,\text{id}_{\mathbb{R}^{n+1}})\circ(g_{i'}^\tau,\text{id}_{\mathbb{R}^{n+1}})^{-1}\circ(f_{i'}^\tau)^{-1}.
    \end{equation}
    To show equation equation \eqref{eq:homeomorphism_1_2}, denote $\tilde{i}=(I_1,\ldots,I_k,J_\leq-\{j\},J_\geq-\{j\},\prec)$. The homeomorphism $(f_{\tilde{i}}^\tau)^{-1}$ sends $(x,y)\in\Delta^{m-1}\times\Delta^k$ to some $(d,y)$ with $d^2\prec\tilde{r}_{\tilde{i}}^2(y)=\tilde{r}_i^2(y)$. Then $(g_{\tilde{i}}^\tau)^{-1}$ sends $d$ to
    \begin{equation}
        d_{j_{m-1}}\cdot v+\sum_{j\in J_i-\{j\}}d_j\cdot v_j,
    \end{equation}
    which $g_{i''}^\tau$ sends to $(d_{j_1},\ldots,d_{j_{n_i(j)}-1},0,d_{j_{n_i(j)}},\ldots,d_{j_{m-1}})$ and the claim follows. Similarly, one shows equation \eqref{eq:homeomorphism_1_3}. Equation \eqref{eq:homeomorphism_1_4} follows directly from the structure of the simplical complex with purely imaginary support.
\end{proof}
Using the established homeomorphisms, we can endow the set
\begin{equation}
    Y:=\bigcup_{\tau\in\{-1,0,1\}}\bigcup_{i\in\mathcal{I}}e_i^\tau\quad\subset\quad\mathbb{C}^{n+1}
\end{equation}
with the structure of a CW-complex of dimension $n+1$. To unify the notation, we employ the Kronecker-Delta symbol $\delta_{a,b}$ which is 1 if $a=b$ and 0 otherwise.
\begin{cor}\label{cor:CW-complex}
    The set $Y$ has the structure of a regular CW-complex with cells $e_i^\tau$. Furthermore, the CW-structure is such that for all $i=(I_1,\ldots,I_k,J_\leq,J_\geq,\prec)\in\mathcal{I}$ and all $\tau\in\{-1,0,1\}$, we have the following identity for the boundary $\partial e_i^\tau$ of $e_i^\tau$:
    \begin{equation}\label{eq:boundary_c_a}
        \begin{split}
            \partial e_i^\tau&=\delta_{|\tau|,1}\cdot(-1)^{|J_i|}\cdot e_{(I_1,\ldots,I_k,J_\leq,J_\geq,\prec)}^0\\
            &+\delta_{\prec,\leq}\cdot(-1)^{|J_i|+|\tau|}\cdot e_{(I_1,\ldots,I_k,J_\leq,J_\geq,=)}^\tau+\sum_{j\in J_i}(-1)^{n_i(j)-1}\cdot e_{(I_1,\ldots,I_k,J_\leq-\{j\},J_\geq-\{j\},\leq)}^\tau\\
            &+(-1)^{|J_i|+|\tau|+\delta_{\prec,\leq}-1}\cdot\sum_{j=1}^k(-1)^{j-1}\cdot e_{(I_1,\ldots,\widehat{I}_j,\ldots,I_k,J_\leq,J_\geq,\leq)}^\tau.
        \end{split}
    \end{equation}
\end{cor}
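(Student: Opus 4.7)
My plan is to use Proposition \ref{prop:homeomorphism_1} as the primary input. The homeomorphisms $\phi_i^\tau\colon e_i^\tau \overset{\sim}{\to} \Delta^m \times \Delta^k$ already realize every $e_i^\tau$ as a convex compact ball of dimension $m+k$ (respectively $m-1+k$ when $\prec$ is $=$), and equations \eqref{eq:homeomorphism_1_1}--\eqref{eq:homeomorphism_1_4} identify each codimension-one face of $\Delta^m \times \Delta^k$ with a uniquely determined cell of strictly smaller total dimension. Only the combinatorial axioms of a regular CW-complex and the sign-tracking in \eqref{eq:boundary_c_a} remain to be verified.

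First I would check the CW-complex axioms. The index set $\mathcal{I}$ is finite once $n$ and $N$ are fixed, so there are only finitely many cells and both closure-finiteness and the weak topology condition hold automatically. Lemma \ref{lem:partition} gives pairwise disjointness of the relative interiors, and by construction their union is $Y$. The inverses $(\phi_i^\tau)^{-1}$ provide characteristic maps that are homeomorphisms on the \emph{entire} closed cell (not merely on the interior), so the complex is regular. Verifying that the topological boundary of each cell lies in the union of lower-dimensional cells then reduces to observing that $\partial(\Delta^m \times \Delta^k) = (\partial\Delta^m \times \Delta^k) \cup (\Delta^m \times \partial\Delta^k)$ and applying \eqref{eq:homeomorphism_1_1}--\eqref{eq:homeomorphism_1_4} face by face, keeping in mind that a negative-dimensional simplex is empty (so the corresponding contributions vanish, which is how the Kronecker deltas enter).

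For the boundary formula \eqref{eq:boundary_c_a}, I would pull back the cellular boundary of $\Delta^m \times \Delta^k$ under $\phi_i^\tau$. The Leibniz rule $\partial(A \times B) = (\partial A)\times B + (-1)^{\dim A}\,A\times\partial B$ splits the boundary into two groups. In the $\partial\Delta^m \times \Delta^k$ group, the face $\iota_{m+1}^{(1)}$ produces the $e^\tau_{(\ldots,=)}$ term via \eqref{eq:homeomorphism_1_1}, the face $\iota_m^{(1)}$ produces the $\tau \to 0$ term via \eqref{eq:homeomorphism_1_3}, and the faces $\iota_{n_i(j)}^{(1)}$ for $j \in J_i$ produce the removal-of-$j$ terms via \eqref{eq:homeomorphism_1_2}; in the $\Delta^m \times \partial\Delta^k$ group, the faces $\iota_j^{(2)}$ produce the deletion-of-$I_j$ terms via \eqref{eq:homeomorphism_1_4}. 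Each $(-1)^{j-1}$ or $(-1)^{n_i(j)-1}$ is just the standard simplex-boundary sign $\partial\Delta^\ell = \sum_{j=1}^{\ell+1}(-1)^{j-1}\iota_j(\Delta^{\ell-1})$, while the prefactors $(-1)^{|J_i|}$, $(-1)^{|J_i|+|\tau|}$ and $(-1)^{|J_i|+|\tau|+\delta_{\prec,\leq}-1}$ combine Leibniz's $(-1)^{\dim A}$ (which equals $(-1)^m$ or $(-1)^{m-1}$ depending on $\prec$) with the position of the distinguished face within $\partial\Delta^m$ (at slot $m$, $m+1$, or some $n_i(j) \in \{1,\ldots,|J_i|\}$).

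The main obstacle will be consistent orientation bookkeeping. Inside $\phi_i^\tau$, the change-of-basis matrix $M_i^\tau$ may contribute a Jacobian sign, and the corner-sphere homeomorphism $h$ from Lemma \ref{lem:partial_sphere} involves the orientation data $\tau_k \in \{-1,0,1\}$; both must be calibrated once and for all so that $\phi_i^\tau$ is orientation preserving with respect to the product orientation on $\Delta^m \times \Delta^k$. A useful intermediate step is to first prove \eqref{eq:boundary_c_a} for the purely imaginary skeleton (the case $J_\leq=J_\geq=\emptyset$, $\tau=0$), where the formula reduces to the standard boundary of the barycentric subdivision of an $n$-simplex, and then to propagate the signs outward by induction on $|J_i|+|\tau|+\delta_{\prec,\leq}$ using \eqref{eq:homeomorphism_1_1}--\eqref{eq:homeomorphism_1_3}. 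Once this calibration is in place, the signs in \eqref{eq:boundary_c_a} are forced by matching the pullback of $\partial(\Delta^m \times \Delta^k)$ through each $\iota_j^{(\cdot)}$ with the characteristic orientation of the neighbouring cell.
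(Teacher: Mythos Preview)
Your approach is essentially identical to the paper's: it too invokes Lemma~\ref{lem:partition} and Proposition~\ref{prop:homeomorphism_1} to certify the regular CW structure, then computes $\partial e_i^\tau=(\phi_i^\tau)^{-1}\partial(\Delta^m\times\Delta^k)$ via the Leibniz rule and reads off each face through the identities \eqref{eq:homeomorphism_1_1}--\eqref{eq:homeomorphism_1_4}, treating the cases $\prec\,=\,\leq$ and $\prec\,=\,{=}$ separately. The paper does not carry out the orientation calibration you flag (the $M_i^\tau$ and $h$ signs) and simply declares the result after pulling back, so your proposed inductive cross-check on $|J_i|+|\tau|+\delta_{\prec,\leq}$ is more careful than what the paper actually does, but not a different argument.
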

\begin{proof}
    By the discussion above, in particular Proposition \ref{prop:homeomorphism_1}, we can view each $e_i^\tau$ as a cell and equations \eqref{eq:homeomorphism_1_1} to \eqref{eq:homeomorphism_1_4} show that the gluing maps are in fact homeomorphisms. By Lemma \ref{lem:partition}, the $e_i^\tau$ form a partition of $Y$ into open cells. Thus, $Y$ has the structure of a regular CW-complex with cells $e_i^\tau$ and it remains to compute the boundary of each cell $e_i^\tau$.\\
    Let $i=(I_1,\ldots,I_k,J_\leq,J_\geq,\prec)\in\mathcal{I}$, $\tau\in\{-1,0,1\}$ and denote $m:=|J_i|+|\tau|$. We use the fact that the homomorphisms induced by continuous maps on the level of chains commute with the boundary:
    \begin{equation}
        \partial e_i^\tau=\partial((\phi_i^\tau)^{-1}\circ\phi_i^\tau)(e_i^\tau)=(\phi_i^\tau)^{-1}\partial(\phi_i^\tau(e_i^\tau)).
    \end{equation}
    Here, we denoted the homomorphisms induces by $\phi_i^\tau$ and $(\phi_i^\tau)^{-1}$ on the level of chains by the same symbol to not clutter the notation further. If $\prec$ is $\leq$, we calculate
    \begin{equation}
        \partial(\phi_i^\tau(e_i^\tau))=\partial(\Delta^m\times\Delta^k)=\sum_{j=1}^{m+1}(-1)^{j-1}\iota_j^{(1)}(\Delta^{m-1}\times\Delta^k)+(-1)^m\cdot\sum_{j=1}^{k+1}(-1)^{j-1}\iota_j^{(2)}(\Delta^m\times\Delta^{k-1}).
    \end{equation}
    Similarly, if $\prec$ is $=$, we calculate
    \begin{equation}
        \partial(\phi_i^\tau(e_i^\tau))=\partial(\Delta^{m-1}\times\Delta^k)=\sum_{j=1}^m(-1)^{j-1}\iota_j^{(1)}(\Delta^{m-2}\times\Delta^k)+(-1)^{m-1}\cdot\sum_{j=1}^{k+1}(-1)^{j-1}\iota_j^{(2)}(\Delta^{m-1}\times\Delta^{k-1}).
    \end{equation}
    As for $\phi_i^\tau$ and $(\phi_i^\tau)^{-1}$, we denote the homomorphisms induced by $\iota_k^{(1)}$ and $\iota_k^{(2)}$ by the same symbol. Applying $(\phi_i^\tau)^{-1}$ to both sides yields
    \begin{equation}
        \begin{split}
            \partial e_i^\tau&=\delta_{|\tau|,1}\cdot(-1)^{m-1}\cdot e_i^0+(-1)^m\cdot e_{(I_1,\ldots,I_k,J_\leq,J_\geq,=)}^\tau+\sum_{j\in J_i}(-1)^{n_i(j)-1}\cdot e_{(I_1,\ldots,I_k,J_\leq-\{j\},J_\geq-\{j\},\leq)}\\
            &+(-1)^m\cdot\sum_{j=1}^k(-1)^{j-1}\cdot e_{(I_1,\ldots,\widehat{I}_j,\ldots,I_k,J_\leq,J_\geq,\leq)},
        \end{split}
    \end{equation}
    resp.
    \begin{equation}
        \begin{split}
            \partial e_i^\tau&=\delta_{|\tau|,1}\cdot(-1)^{m-1}\cdot e_i^0+\sum_{j\in J_i}(-1)^{n_i(j)-1}\cdot e_{(I_1,\ldots,I_k,J_\leq-\{j\},J_\geq-\{j\},\leq)}\\
            &+(-1)^{m-1}\cdot\sum_{j=1}^k(-1)^{j-1}\cdot e_{(I_1,\ldots,\widehat{I}_j,\ldots,I_k,J_\leq,J_\geq,\leq)},
        \end{split}
    \end{equation}
    according to Proposition \ref{prop:homeomorphism_1}. Now combining these equations yields the desired equation \eqref{eq:boundary_c_a}.
\end{proof}
Now, the combinatorial part of this work begins. In the above corollary, the notation is somewhat unwieldy. To simplify the notation a bit, we make the following definitions, in particular to express the boundary of cells in $Y$ in a more convenient manner. Let $i=(I_1,\ldots,I_k,J_\leq,J_\geq,\prec)\in\mathcal{I}$. If $\prec$ is $\leq$, we denote
\begin{equation}
    \mathcal{D}^=(i):=(I_1,\ldots,I_k,J_\leq,J_\geq,=).
\end{equation}
Furthermore, for every $1\leq j\leq k$, we write
\begin{equation}
    \mathcal{D}_j^I(i):=(I_1,\ldots,\widehat{I}_j,\ldots,I_k,J_\leq,J_\geq,\prec)
\end{equation}
and for every $j\in J_i$, we write
\begin{equation}
    \mathcal{D}_j^J(i):=(I_1,\ldots,I_k,J_\leq-\{j\},J_\geq-\{j\},\prec).
\end{equation}
Then result of Corollary \ref{cor:CW-complex} can be expressed as follows: For every $i=(I_1,\ldots,I_k,J_\leq,J_\geq,\prec)\in\mathcal{I}$ and $\tau\in\{-1,0,1\}$, we have
\begin{equation}\label{eq:boundary_simplex}
    \begin{split}
        \partial e_i^\tau&=\delta_{|\tau|,1}\cdot(-1)^{|J_i|}\cdot e_i^0+\delta_{\prec,\leq}\cdot(-1)^{|J_i|+|\tau|}\cdot e_{\mathcal{D}^=(i)}^\tau+\sum_{j\in J_i}(-1)^{n_i(j)-1}\cdot e_{\mathcal{D}_j^J(i)}^\tau\\
            &+(-1)^{|J_i|+|\tau|+\delta_{\prec,\leq}-1}\cdot\sum_{j=1}^k(-1)^{j-1}\cdot e_{\mathcal{D}_j^I(i)}^\tau.
    \end{split}
\end{equation}
To make our life a little easier, we group some of the cells together. It turns out that the cells $e_i^0$ do not play a relevant role for our generators and that the cells $e_i^1$ and $e_i^{-1}$ always appear together. Therefore, we set
\begin{equation}
    e_i:=e_i^1-e_i^{-1}
\end{equation}
for every $i\in\mathcal{I}$. Then the $(\tau=0)$-faces cancel in the boundary of $e_i$:
\begin{lem}
    Let $i=(I_1,\ldots,I_k,J_\leq,J_\geq,\prec)\in\mathcal{I}$. Then
    \begin{equation}
        \partial e_i=\delta_{\prec,\leq}\cdot(-1)^{|J_i|+1}\cdot e_{\mathcal{D}^=(i)}+\sum_{j\in J_i}(-1)^{n_i(j)-1}\cdot e_{\mathcal{D}_j^J(i)}+(-1)^{|J_i|+\delta_{\prec,\leq}}\cdot\sum_{j=1}^k(-1)^{j-1}\cdot e_{\mathcal{D}_j^I(i)}.
    \end{equation}
\end{lem}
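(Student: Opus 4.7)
The plan is to derive this formula as an immediate algebraic consequence of equation \eqref{eq:boundary_simplex}, applied with $\tau=1$ and $\tau=-1$, together with the $\mathbb{Z}$-linearity of the boundary operator on the cellular chain complex.

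First, I would write $\partial e_i = \partial e_i^1 - \partial e_i^{-1}$ and expand each term via \eqref{eq:boundary_simplex}. The key observation is that every sign prefactor on the right-hand side of \eqref{eq:boundary_simplex} depends on $\tau$ only through $|\tau|$: concretely, $\delta_{|\tau|,1}$, $(-1)^{|J_i|+|\tau|}$ and $(-1)^{|J_i|+|\tau|+\delta_{\prec,\leq}-1}$ all take the same value for $\tau=1$ and $\tau=-1$ (namely $1$, $(-1)^{|J_i|+1}$ and $(-1)^{|J_i|+\delta_{\prec,\leq}}$ respectively). Hence subtracting $\partial e_i^{-1}$ from $\partial e_i^1$ simply replaces each $e_{\,\cdot\,}^\tau$ with the corresponding difference $e_{\,\cdot\,}^1 - e_{\,\cdot\,}^{-1}$, while the $\tau$-independent prefactors are preserved.

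Second, I would note that the contribution of the $e_i^0$ face is
\begin{equation}
\delta_{|1|,1}\cdot(-1)^{|J_i|}\cdot e_i^0 \;-\; \delta_{|-1|,1}\cdot(-1)^{|J_i|}\cdot e_i^0 \;=\; 0,
\end{equation}
so this term drops out entirely. For the three remaining families of faces (indexed by $\mathcal{D}^=(i)$, $\mathcal{D}_j^J(i)$ and $\mathcal{D}_j^I(i)$), the defining relation $e_{(\cdot)} := e_{(\cdot)}^1 - e_{(\cdot)}^{-1}$ assembles them into exactly the terms appearing in the claimed formula. Putting everything together yields precisely
\begin{equation*}
\partial e_i = \delta_{\prec,\leq}\cdot(-1)^{|J_i|+1}\cdot e_{\mathcal{D}^=(i)}+\sum_{j\in J_i}(-1)^{n_i(j)-1}\cdot e_{\mathcal{D}_j^J(i)}+(-1)^{|J_i|+\delta_{\prec,\leq}}\cdot\sum_{j=1}^k(-1)^{j-1}\cdot e_{\mathcal{D}_j^I(i)}.
\end{equation*}

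There is no real obstacle here; the statement is purely bookkeeping, and the only subtlety worth flagging explicitly in the write-up is the $\tau$-independence of the sign coefficients, which is what makes the $e_i^0$ boundary contributions cancel and the remaining $e_{(\cdot)}^{\pm 1}$ contributions pair up cleanly into the grouped cells $e_{(\cdot)}$.
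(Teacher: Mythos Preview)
Your proposal is correct and follows exactly the approach the paper indicates: the paper's own proof consists of the single sentence ``Using equation \eqref{eq:boundary_simplex}, this is obtained from a straightforward computation,'' and what you have written is precisely that straightforward computation spelled out in detail. The key observation you isolate---that all sign prefactors depend on $\tau$ only through $|\tau|$, so the $e_i^0$ contributions cancel while the remaining faces pair into the grouped cells---is exactly the content of the omitted computation.
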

\begin{proof}
    Using equation \eqref{eq:boundary_simplex}, this is obtained from a straightforward computation.
\end{proof}
In a next step, we want to group even more cells together. For this purpose, we consider the extended index set
\begin{equation}
    \tilde{\mathcal{I}}:=\{(I_1,\ldots,I_k,J_\leq,J_\geq,\prec) \;|\; \substack{\emptyset\neq I_1\subsetneq\cdots\subsetneq I_k\subset\{1,\ldots,n+1\},\;J_\leq, J_\geq\subset\{1,\ldots,n+1\}\backslash I, \\ \prec\in\{\leq,=\}}\},
\end{equation}
where $J_\leq$ and $J_\geq$ are allowed to overlap. Clearly, $\mathcal{I}\subset\tilde{\mathcal{I}}$. We can still use our notation $\mathcal{D}^=$, $\mathcal{D}_j^J$ and $\mathcal{D}_j^I$ for this extended set of indices in the obvious manner. For every $i=(I_1,\ldots,I_k,J_\leq,J_\geq,\prec)\in\tilde{\mathcal{I}}$, we now define
\begin{equation}
    e_i:=\sum_{\substack{A\cup B=J_\leq\cap J_\geq \\ A\cap B=\emptyset}}(-1)^{|A|-1}\cdot e_{(I_1,\ldots,I_k,(J_\leq\backslash(J_\leq\cap J_\geq))\cup A,(J_\geq\backslash(J_\leq\cap J_\geq))\cup B,\prec)}.
\end{equation}
In other words, for overlapping $J_\leq$ and $J_\geq$, we sum over all partitions, distributing the intersection $J_\leq\cap J_\geq$ among $J_\leq$ and $J_\geq$ (with well-chosen signs). The content of the following lemma is that the boundary terms coming from $J_\leq\cap J_\geq$ cancel.
\begin{lem}\label{lem:combined_J}
    Let $i=(I_1,\ldots,I_k,J_\leq,J_\geq,\prec)\in\tilde{\mathcal{I}}$. Then we have
    \begin{equation}
        \partial e_i=\delta_{\prec,\leq}\cdot(-1)^{|J_i|+1}\cdot e_{\mathcal{D}^=(i)}+\sum_{j\in J_i\backslash (J_\leq\cap J_\geq)}(-1)^{n_i(j)-1}\cdot e_{\mathcal{D}_j^J(i)}+(-1)^{|J_i|+\delta_{\prec,\leq}}\cdot\sum_{j=1}^k(-1)^{j-1}\cdot e_{\mathcal{D}_j^I(i)}.
    \end{equation}
\end{lem}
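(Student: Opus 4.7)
The plan is to apply the boundary formula from the previous lemma (valid for elements of $\mathcal{I}$) to each summand in the definition
\begin{equation*}
    e_i=\sum_{\substack{A\cup B=C \\ A\cap B=\emptyset}}(-1)^{|A|-1}\cdot e_{i(A,B)},
\end{equation*}
where I abbreviate $C:=J_\leq\cap J_\geq$ and $i(A,B):=(I_1,\ldots,I_k,(J_\leq\backslash C)\cup A,(J_\geq\backslash C)\cup B,\prec)\in\mathcal{I}$. For every partition $(A,B)$ the set $J_{i(A,B)}=J_\leq\cup J_\geq=J_i$ is the same, so the values of $|J_i|$ and of $n_i$ do not depend on $(A,B)$. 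This means the global prefactors $\delta_{\prec,\leq}(-1)^{|J_i|+1}$, $(-1)^{|J_i|+\delta_{\prec,\leq}}(-1)^{j-1}$ and $(-1)^{n_i(j)-1}$ coming out of the boundary formula can be pulled in front of the sum over partitions.

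The three types of boundary terms are then handled separately. The $\mathcal{D}^=$ and $\mathcal{D}_j^I$ operations leave the $J$-part of the index untouched, so $\mathcal{D}^=(i(A,B))$ and $\mathcal{D}_j^I(i(A,B))$ preserve the overlap $C$; the weighted sum of their cells over partitions is precisely the extended-definition cell of $\mathcal{D}^=(i)$ respectively $\mathcal{D}_j^I(i)$. This reproduces the first and third terms in the claimed formula verbatim.

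The genuine content is in the middle term. For $j\in J_i$, the map $\mathcal{D}_j^J$ removes $j$ from both $J_\leq$ and $J_\geq$. I split into two cases. If $j\in J_i\setminus C$ then $j$ belongs only to one side of $i(A,B)$ independently of the partition, so $\mathcal{D}_j^J(i(A,B))$ preserves the overlap $C$, and the weighted sum over $(A,B)$ assembles to $e_{\mathcal{D}_j^J(i)}$ in the extended sense, producing the middle summand of the claim. If $j\in C$, I will pair partitions by swapping the side of $j$: the partition with $A=A'\sqcup\{j\},\,B=B'$ and the one with $A=A',\,B=B'\sqcup\{j\}$ (same $A',B'$ with $A'\sqcup B'=C\setminus\{j\}$) produce the same index $\mathcal{D}_j^J(i(A,B))=(I_1,\ldots,I_k,(J_\leq\setminus C)\cup A',(J_\geq\setminus C)\cup B',\prec)$, and the signs $(-1)^{|A'|}$ and $(-1)^{|A'|-1}$ are opposite. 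Hence the contributions of such $j$ cancel pairwise, and no $\mathcal{D}_j^J$ terms with $j\in C$ survive.

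No step is really hard; the only obstacle is bookkeeping the signs through the three families of boundary terms and verifying that the partition-swapping involution is well-defined (it is, since $j\in C$ forces $j\in A\sqcup B$ with exactly one side chosen). Combining the three contributions yields the asserted formula.
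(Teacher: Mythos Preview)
Your proof is correct and follows essentially the same approach as the paper: expand $\partial e_i$ using the boundary formula for $\mathcal{I}$ on each summand, observe that the $\mathcal{D}^=$ and $\mathcal{D}_j^I$ terms reassemble directly into the extended-definition cells, and for the $\mathcal{D}_j^J$ terms cancel the contributions with $j\in J_\leq\cap J_\geq$ by the same partition-swapping involution (move $j$ from $A$ to $B$). The paper's argument is identical up to notation (it writes $K$ for your $C$ and spells out the pairing as $A'=A\setminus\{j\}$, $B'=B\cup\{j\}$).
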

\begin{proof}
    Let us denote $K:=J_\leq\cap J_\geq$ as well as $\tilde{J}_\leq:=J_\leq\backslash K$, $\tilde{J}_\geq:=J_\geq\backslash K$ and $\tilde{J}_i=J_i\backslash K$. Using equation \eqref{eq:boundary_simplex}, we compute
    \begin{equation}
        \begin{split}
            \partial e_i&=\sum_{\substack{A\cup B=K \\ A\cap B=\emptyset}}(-1)^{|A|-1}\cdot\partial e_{(I_1,\ldots,I_k,\tilde{J}_\geq\cup A,\tilde{J}_\geq\cup B,\prec)}\\
            &=\sum_{\substack{A\cup B=K \\ A\cap B=\emptyset}}(-1)^{|A|-1}\cdot\Bigl(\delta_{\prec,\leq}\cdot(-1)^{|J_i|+1}\cdot e_{(I_1,\ldots,I_k,\tilde{J}_\leq\cup A,\tilde{J}_\geq\cup B,\tau,=)}\\
            &+\sum_{j\in J_i}(-1)^{n_i(j)-1}\cdot e_{(I_1,\ldots,I_k,\tilde{J}_\leq\cup A-\{j\},\tilde{J}_\geq\cup B-\{j\},\prec)}+(-1)^{|J_i|+\delta_{\prec,\leq}}\cdot\sum_{j=1}^k(-1)^{j-1}\cdot e_{(I_1,\ldots,\widehat{I}_j,\ldots,I_k,\tilde{J}_\leq\cup A,\tilde{J}_\geq\cup B,\prec)}\Bigr).
        \end{split}
    \end{equation}
    For the first term, we immediately obtain
    \begin{equation}
        \begin{split}
            &\sum_{\substack{A\cup B=K \\ A\cap B=\emptyset}}(-1)^{|A|-1}\cdot\delta_{\prec,\leq}\cdot(-1)^{|J_i|+1}\cdot e_{(I_1,\ldots,I_k,\tilde{J}_\leq\cup A,\tilde{J}_\geq\cup B,\tau,=)}\\
            =&\delta_{\prec,\leq}\cdot(-1)^{|J_i|+1}\cdot\sum_{\substack{A\cup B=K \\ A\cap B=\emptyset}}(-1)^{|A|-1}e_{(I_1,\ldots,I_k,\tilde{J}_\leq\cup A,\tilde{J}_\geq\cup B,\tau,=)}=\delta_{\prec,\leq}\cdot(-1)^{|J_i|+1}\cdot e_{\mathcal{D}^=(i)}.
        \end{split}
    \end{equation}
    Similarly, for the last term, we obtain
    \begin{equation}
        \begin{split}
            &\sum_{\substack{A\cup B=K \\ A\cap B=\emptyset}}(-1)^{|A|-1}\cdot(-1)^{|J_i|+\delta_{\prec,\leq}}\cdot\sum_{j=1}^k(-1)^{j-1}\cdot e_{(I_1,\ldots,\widehat{I}_j,\ldots,I_k,\tilde{J}_\leq\cup A,\tilde{J}_\geq\cup B,\prec)}\\
            =&(-1)^{|J_i|+\delta_{\prec,\leq}}\cdot\sum_{j=1}^k(-1)^{j-1}\cdot\sum_{\substack{A\cup B=K \\ A\cap B=\emptyset}}(-1)^{|A|-1}\cdot e_{(I_1,\ldots,\widehat{I}_j,\ldots,I_k,\tilde{J}_\leq\cup A,\tilde{J}_\geq\cup B,\prec)}\\
            =&(-1)^{|J_i|+\delta_{\prec,\leq}}\cdot\sum_{j=1}^k(-1)^{j-1}\cdot e_{\mathcal{D}_j^I(i)}.
        \end{split}
    \end{equation}
    To deal with the middle term, let $j\in K$. Then for every $A\cup B=K$ with $A\cap B=\emptyset$ and $j\in A$, there are $A':=A\backslash\{j\}$ and $B':=B\cup\{j\}$ which satisfy $A'\cup B'=K$ and $A'\cap B'=\emptyset$. This means that we have $A-\{j\}=A'=A'-\{j\}$ and $B'-\{j\}=B=B-\{j\}$ so that
    \begin{equation}
        e_{(I_1,\ldots,I_k,\tilde{J}_\leq\cup A-\{j\},\tilde{J}_\geq\cup B-\{j\},\prec)}=e_{(I_1,\ldots,I_k,\tilde{J}_\leq\cup A'-\{j\},\tilde{J}_\geq\cup B'-\{j\},\prec)}.
    \end{equation}
    Furthermore, $|A|=|A'|+1$ which means $(-1)^{|A|-1}=-(-1)^{|A'|-1}$. Hence, the terms with $j\in K$ cancel pairwise and the middle term becomes
    \begin{equation}
        \begin{split}
            &\sum_{j\in\tilde{J}_i}(-1)^{n_i(j)-1}\cdot\sum_{\substack{A\cup B=K \\ A\cap B=\emptyset}}(-1)^{|A|-1}\cdot e_{(I_1,\ldots,I_k,\tilde{J}_\leq\cup A-\{j\},\tilde{J}_\geq\cup B-\{j\},\prec)}\\
            =&\sum_{j\in\tilde{J}_i}(-1)^{n_i(j)-1}\cdot e_{\mathcal{D}_j^J(i)}.
        \end{split}
    \end{equation}
    This completes the proof.
\end{proof}
In a final step, we form even larger groups of cells. For each pair of subsets $K_1,K_2\subset\{1,\ldots,n+1\}$, we define the index set
\begin{equation}
    \mathcal{I}_{K_1,K_2,\prec}:=\{i=(K_1,I_2,\ldots,I_k,\{1,\ldots,n+1\}\backslash K_2,\{1,\ldots,n+1\}\backslash I_k,\prec)\in\tilde{\mathcal{I}} \;|\; \substack{I_k\subset K_2, \\ \forall2\leq j\leq k:|I_{j-1}|=|I_j|-1}\}.
\end{equation}
Note that $\mathcal{I}_{K_1,K_2,\prec}$ can only be non-empty if $K_1\subset K_2$ (for $\prec\in\{\leq,=\}$) and if $K_1,K_2$ are non-empty. Clearly, an element $i=(I_1,\ldots,I_k,J_\leq,J_\geq,\prec)$ in $\mathcal{I}_{K_1,K_2,\prec}$ only depends on $I_1,\ldots,I_k$ (actually only on $I_2,\ldots,I_k$). Therefore, we simply write $i=(I_1,\ldots,I_k)$ in this case in a slight abuse of notation. For every $i=(I_1,\ldots,I_k)\in\mathcal{I}_{K_1,K_2,\prec}$, we have
\begin{equation}
    J_i=\{1,\ldots,n+1\}\backslash I_k \qquad\text{and}\qquad J_\leq\cap J_\geq=\{1,\ldots,n+1\}\backslash K_2.
\end{equation}
The plan is to form linear combinations of all cells $e_i$ with $i\in\mathcal{I}_{K_1,K_2,\prec}$. The tricky part is to figure out the correct sign for each cell so that we get a nice relation for the boundary of these linear combinations.
\begin{lem}\label{lem:sign}
    Let $K_1\subset K_2\subset\{1,\ldots,n+1\}$ and $\prec\in\{\leq,=\}$. There exists a map
    \begin{equation}
        \mathrm{sgn}:\mathcal{I}_{K_1,K_2,\prec}\to\{-1,1\}
    \end{equation}
    such that the following identities hold: Let $i=(I_1,\ldots,I_k)\in\mathcal{I}_{K_1,K_2,\prec}$. Then
    \begin{enumerate}
        \item If $\prec$ is $\leq$ and $i':=(I_1,\ldots,I_k,J_\leq,J_\geq,=)$, then $\mathrm{sgn}(i)=(-1)^{|J_i|+1}\cdot\mathrm{sgn}(i')$.
        \item For $2\leq j\leq k-1$, let $\{j_1\}:=I_j\backslash I_{j-1}$ and $\{j_2\}:= I_{j+1}\backslash I_j$. Then for
        \begin{equation}
            i':=(I_1,\ldots,I_{j-1},(I_j\cup\{j_2\})\backslash\{j_1\},I_{j+1},\ldots,I_k),
        \end{equation}
        we have $\mathrm{sgn}(i)=-\mathrm{sgn}(i')$.
        \item Let $j\in J_i$ and let $i':=(I_1,\ldots,I_k,I_k\cup\{j\})$. Then
        \begin{equation}
            \mathrm{sgn}(i')=(-1)^{|J_{i'}|+\delta_{\prec,\leq}+k-n_i(j)}\cdot\mathrm{sgn}(i).
        \end{equation}
        \item Suppose $k>1$. Let $\{j\}:=I_2\backslash I_1$, let $i'=(I_2,\ldots,I_k)\in\mathcal{I}_{K_1\cup\{j\},K_2}$ and let $N(j)$ be the number of elements in $\{1,\ldots,n+1\}\backslash K_1$ smaller than $j$. Then
        \begin{equation}
            \mathrm{sgn}(i)=(-1)^{|J_i|+N(j)-1}\cdot\mathrm{sgn}(i').
        \end{equation}
    \end{enumerate}
\end{lem}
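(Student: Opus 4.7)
The plan is to construct $\mathrm{sgn}$ recursively using property (3) itself as the defining relation, and then verify that properties (1), (2), and (4) follow. For the base case $k=1$, the chain is simply $(K_1)$ and we set $\mathrm{sgn}((K_1)) := 1$ when $\prec$ is $\leq$ and $\mathrm{sgn}((K_1)) := (-1)^{|J_{(K_1)}|+1}$ when $\prec$ is $=$; this makes property (1) hold at length $1$. For $k \geq 2$, given $\mathrm{sgn}$ on the truncated chain $i_0 := (I_1,\ldots,I_{k-1})$, set
\begin{equation*}
    \mathrm{sgn}(i) := (-1)^{|J_i| + \delta_{\prec,\leq} + (k-1) - n_{i_0}(j)} \cdot \mathrm{sgn}(i_0), \qquad \{j\} := I_k \setminus I_{k-1},
\end{equation*}
so that property (3) holds by construction.

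Property (1) for arbitrary $k$ then follows by induction on $k$: the recursive factor changes by exactly one in the exponent between $\prec = \leq$ and $\prec = =$ (from the $\delta_{\prec,\leq}$ term), while the relation $|J_{i_0}| = |J_i| + 1$ converts the induction hypothesis $(-1)^{|J_{i_0}|+1}$ into the required $(-1)^{|J_i|+1}$. Property (2), the antisymmetry under a middle swap, reduces to comparing the two double-extensions of a common chain by $(j_a, j_b)$ and $(j_b, j_a)$; a short case analysis on $j_a < j_b$ versus $j_a > j_b$ together with the fact that $n_{(\ldots,I_m\cup\{j_a\})}(j_b)$ differs from $n_{(\ldots,I_m)}(j_b)$ by $1$ if $j_a < j_b$ and $0$ otherwise shows that the ratio of the two signs is exactly $-1$, as required.

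For property (4) compare the recursive expansions of $\mathrm{sgn}(i)$ for $i = (K_1, K_1\cup\{j\}, I_3, \ldots, I_k)$ and of $\mathrm{sgn}(i')$ for $i' = (K_1\cup\{j\}, I_3, \ldots, I_k) \in \mathcal{I}_{K_1\cup\{j\},K_2,\prec}$. The key observation is that $n_{(I_1,\ldots,I_m)}(j_m^*)$ depends only on $I_m$, so the step-$m$ extension factors for $i$ and $i'$ agree for $m \geq 2$ except for the index shift $m \mapsto m-1$, which contributes an overall $(-1)^{k-2}$. Combined with the extra step-$1$ factor $(-1)^{|J_{I_2}| + \delta_{\prec,\leq} + 1 - n_{(K_1)}(j)}$ appearing only in $\mathrm{sgn}(i)$ and the ratio of base values $\mathrm{sgn}((K_1))/\mathrm{sgn}((K_1\cup\{j\}))$ (equal to $1$ for $\prec = \leq$ and $-1$ for $\prec = =$), a short computation using $n_{(K_1)}(j) = N(j) + 1$ and $|J_{I_2}| = |J_i| + k - 2$ shows the total exponent collapses to $|J_i| + N(j) - 1 \pmod 2$, which is precisely the sign prescribed by (4).

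The main obstacle is the combinatorial bookkeeping for property (2), since one must ensure that the recursive rule induced by (3) does not impose sign relations incompatible with the prescribed antisymmetry under swaps; once this and the careful choice of base values to make (4) hold are established, the other verifications are essentially matters of tracking parities.
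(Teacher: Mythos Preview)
Your recursive construction is correct and genuinely different from the paper's approach. The paper instead gives an explicit closed form: to each chain $i=(I_1,\ldots,I_k)$ it assigns a permutation $\sigma_i\in S_{n+1}$ (listing first the elements of $\{1,\ldots,n+1\}\setminus(I_k\setminus K_1)$ in order, then the successive increments $I_2\setminus I_1,\ldots,I_k\setminus I_{k-1}$) and sets
\[
\mathrm{sgn}(i)=(-1)^{\delta_{\prec,\leq}(|J_i|+1)}\cdot(-1)^{(k-1)|K_1|}\cdot\mathrm{sgn}\,\sigma_i.
\]
Properties (1)--(4) are then checked by counting how many transpositions separate $\sigma_i$ from $\sigma_{i'}$ in each case. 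Your method trades this global formula for a recursion along the chain, so property (3) is free and (1) follows by a one-line induction; the price is that (2) and (4) require unrolling the recursion and the parity bookkeeping you sketch. Both routes involve comparable amounts of sign-chasing, but the paper's closed form has the minor advantage that one can read off $\mathrm{sgn}(i)$ directly without reference to shorter chains, which is occasionally convenient when these signs are used later. Your verification of (2) is fine once one observes that, since $n_{i_0}$ depends only on the top set $I_{k-1}$, only the two consecutive recursion factors at positions $j$ and $j+1$ change under the swap; your case analysis on $j_a\lessgtr j_b$ then gives the required $-1$. Likewise your computation for (4) is correct: the identities $n_{(K_1)}(j)=N(j)+1$ and $|J_{I_2}|=|J_i|+k-2$, together with the base-value ratio $(-1)^{1-\delta_{\prec,\leq}}$, collapse the exponent to $|J_i|+N(j)-1$ modulo $2$.
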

\begin{proof}
    First, we assign each $i\in\mathcal{I}_{K_1,K_2,\prec}$ a permutation $\sigma_i\in S_{n+1}$. For $i=(I_1,\ldots,I_k)$, denote
    \begin{equation}
        \{1,\ldots,n+1\}\backslash(I_k\backslash K_1)=\{i_1,\ldots,i_{n-k+2}\}
    \end{equation}
    with $i_1<\cdots i_{n-k+2}$. For each $l\in\{1,\ldots,k-1\}$, the set $I_{l+1}\backslash I_l$ contains exactly one element which we denote by $j_l$. Then we set\footnote{The reader is reminded of the standards notation for elements of the permutation group: Let $\{i_1,\ldots,i_n\}=\{1,\ldots,n\}$. Then $\begin{pmatrix} 1 & \cdots & n \\ i_1 & \cdots & i_n \end{pmatrix}\in S_n$ denotes the bijection $\{1,\ldots,n\}\simeq\{1,\ldots,n\}$ which sends $j$ to $i_j$ for all $j\in\{1,\ldots,n\}$.}
    \begin{equation}
        \sigma_i:=
        \begin{pmatrix}
            1 & \cdots & n-k+2 & n-k+3 & \cdots & n+1 \\
            i_1 & \cdots & i_{n-k+2} & j_1 & \cdots & j_{k-1}
        \end{pmatrix}.
    \end{equation}
    Now we define
    \begin{equation}
        \mathrm{sgn}(i):=(-1)^{\delta_{\prec,\leq}(|J_i|+1)}\cdot(-1)^{(k-1)|K_1|}\cdot\mathrm{sgn}\,\sigma_i
    \end{equation}
    and verify that this satisfies (1) to (4).\\
    The first item follows immediately, since if $i$ and $i'$ only differ by the value of $\prec$, the signs $\mathrm{sgn}(i)$ of $i$ and $\mathrm{sgn}(i')$ of $i'$ differ only in the first factor, which is one in $\mathrm{sgn}(i')$, while it is $(-1)^{|J_i|+1}$ for $\mathrm{sgn}(i)$.\\
    To show (2), first note that $\mathrm{sgn}(i)$ and $\mathrm{sgn}(i')$ differ only by the sign of the permutations $\sigma_i$ and $\sigma_{i'}$. Denoting $m:=n-k+2$, we can express $\sigma_{i'}$ as
    \begin{equation}
        \sigma_{i'}:=
        \begin{pmatrix}
            1 & \cdots & m & m+1 & \cdots & m+l-1 & m+l & m+l+1 & m+l+2 & \cdots & n+1 \\
            i_1 & \cdots & i_m & j_1 & \cdots & j_{l-1} & j_{l+1} & j_l & j_{l+2} & \cdots & j_{k-1}
        \end{pmatrix}.
    \end{equation}
    Clearly, $\sigma_i$ and $\sigma_{i'}$ differ by one transposition (exchanging the $(m+l)$th and the $(m+l+1)$th position) so that
    \begin{equation}
        \mathrm{sgn}\,\sigma_i=-\mathrm{sgn}\,\sigma_{i'} \qquad\Rightarrow\qquad \mathrm{sgn}(i)=-\mathrm{sgn}(i').
    \end{equation}
    Now for (3), let $l:=n_i(j)$ which means $j=i_l$. We investigate the difference in all three factors of $\mathrm{sgn}(i)$ and $\mathrm{sgn}(i')$ separately. Clearly $|J_{i'}|=|J_i|-1$, so that \begin{equation}
        (-1)^{\delta_{\prec,\leq}(|J_{i'}|+1)}=(-1)^{\delta_{\prec,\leq}\cdot|J_i|}=-(-1)^{\delta_{\prec,\leq}}\cdot(-1)^{\delta_{\prec,\leq}\cdot(|J_i|-1)}.
    \end{equation}
    In words, the first factor of $\mathrm{sgn}(i)$ and $\mathrm{sgn}(i')$ is the same if $\prec$ is $=$ and opposite if $\prec$ is $\leq$. The second factor of $\mathrm{sgn}(i')$ reads
    \begin{equation}
        (-1)^{k|K_1|}=(-1)^{|K_1|}\cdot(-1)^{(k-1)|K_1|},
    \end{equation}
    so that it differs from the second factor of $\mathrm{sgn}(i)$ by a factor of $(-1)^{|K_1|}$. For the last factor, we can express the permutation $\sigma_{i'}$ as
    \begin{equation}
        \sigma_{i'}=
        \begin{pmatrix}
            1 & \cdots & l-1 & l & \cdots & n-k+1 & n-k+2 & \cdots & n & n+1 \\
            i_1 & \cdots & i_{l-1} & i_{l+1} & \cdots & i_{n-k+2} & j_1 & \cdots & j_{k-1} & i_l
        \end{pmatrix}.
    \end{equation}
    It is easy to see that $\sigma_{i'}$ differs from $\sigma_i$ by $k-1+(n-k+2-l)=n-l+1=n-n_i(j)+1$ transpositions. We use that
    \begin{equation}
        |J_{i'}|=n+1-|I_k\cup\{j\}|=n+1-(k+|K_1|)\quad\Leftrightarrow\quad n=|J_{i'}|+k+|K_1|-1,
    \end{equation}
    so that
    \begin{equation}
        \mathrm{sgn}\,\sigma_i=(-1)^{n-n_i(j)+1}\cdot\mathrm{sgn}\,\sigma_{i'}=(-1)^{|J_{i'}|+k+|K_1|-n_i(j)}\cdot\mathrm{sgn}\,\sigma_{i'}.
    \end{equation}
    Putting all the above results together, we obtain
    \begin{equation}
        \mathrm{sgn}(i')=(-1)^{\delta_{\prec,\leq}}\cdot(-1)^{|K_1|}\cdot(-1)^{|J_{i'}|+k+|K_1|-n_i(j)}\cdot\mathrm{sgn}(i)=(-1)^{|J_{i'}|+\delta_{\prec,\leq}+k-n_i(j)}\cdot\mathrm{sgn}(i)
    \end{equation}
    as claimed.\\
    Similarly to (3), we show (4). First note that $J_i=J_{i'}$. In particular, the first factor of the sign is identical for $i$ and $i'$. The second factor of $\mathrm{sgn}(i')$ reads
    \begin{equation}
        (-1)^{(k-2)|K_1|}=(-1)^{|K_1|}\cdot(-1)^{(k-1)|K_1|}
    \end{equation}
    so that it differs from the second factor of $\mathrm{sgn}(i)$ by $(-1)^{|K_1|}$. Now the permutation $\sigma_{i'}$ associated to $i'$ reads
    \begin{equation}
        \sigma_{i'}=
        \begin{pmatrix}
            1 & \cdots & N(j) & N(j)+1 & N(j)+2 & \cdots & n-k+2 & n-k+3 & \cdots & n+1 \\
            i_1 & \cdots & i_{N(j)} & j & i_{N(j)+1} & \cdots & i_{n-k+1} & j_2 & \cdots & j_{n+1}
        \end{pmatrix}.
    \end{equation}
    It is easy to see that $\sigma_{i'}$ differs from $\sigma_i$ by $n-k+2-(N(j)+1)=|J_i|+|K_1|-N(j)-1$ transpositions. Thus
    \begin{equation}
        \mathrm{sgn}(i')=(-1)^{|K_1|}\cdot(-1)^{|J_i|+|K_1|-N(j)-1}\cdot\mathrm{sgn}(i)=(-1)^{|J_i|-N(j)-1}\cdot\mathrm{sgn}(i).
    \end{equation}
\end{proof}
The map $\mathrm{sgn}$ from Lemma \ref{lem:sign} depends on $K_1$, $K_2$ and $\prec$. We drop this dependence from our notation since it is always clear from context which $K_1$, $K_2$ and $\prec$ are meant. With our signs in place, let us consider the linear combinations
\begin{equation}
    e_{K_1,K_2,\prec}:=\sum_{i\in\mathcal{I}_{K_1,K_2,\prec}}\mathrm{sgn}(i)\cdot e_i.
\end{equation}
Note that we can view $e_{K_1,K_2,\leq}$ as a chain in $\mathbb{C}^{n+1}$ relative to $S^{K_1,K_2}$ and $e_{K_1,K_2,=}$ as a chain in $S^{K_1,K_2}$. The signs $\mathrm{sgn}(i)$ are designed such that we obtain the following nice relation for the boundary of the $e_{K_1,K_2,\prec}$:
\begin{lem}\label{lem:cube_lemma}
    Let $K_1,K_2\subset\{1,\ldots,n+1\}$ and denote for all $j\in K_2\backslash K_1$ by $N(j)$ the number of elements in $\{1,\ldots,n+1\}\backslash K_1$ smaller than $j$. Then we have
    \begin{equation}
        \partial e_{K_1,K_2,\prec}=\delta_{\prec,\leq}\cdot e_{K_1,K_2,=}+\sum_{j\in K_2\backslash K_1}(-1)^{N(j)+\delta_{\prec,\leq}-1}\cdot e_{K_1\cup\{j\},K_2,\prec},
    \end{equation}
    where $N(j)$ is the number of elements in $\{1,\ldots,n+1\}\backslash K_1$ smaller than $j$.
\end{lem}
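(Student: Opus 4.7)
The plan is to expand $\partial e_{K_1,K_2,\prec} = \sum_{i \in \mathcal{I}_{K_1,K_2,\prec}} \mathrm{sgn}(i)\, \partial e_i$ via Lemma \ref{lem:combined_J}, organize the resulting terms into four groups, and show that two groups assemble into the right-hand side while the other two cancel in pairs. For $i = (I_1, \ldots, I_k) \in \mathcal{I}_{K_1, K_2, \prec}$ we have $I_1 = K_1$, $I_k \subset K_2$, $J_\leq = \{1,\ldots,n+1\} \setminus K_2$, $J_\geq = \{1,\ldots,n+1\} \setminus I_k$, and $J_i \setminus (J_\leq \cap J_\geq) = K_2 \setminus I_k$. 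The $\mathcal{D}^=$ contribution is the easiest: the coefficient of $e_{\mathcal{D}^=(i)}$ in $\mathrm{sgn}(i)\, \partial e_i$ is $\delta_{\prec,\leq}\, \mathrm{sgn}(i)\, (-1)^{|J_i|+1}$, which equals $\delta_{\prec,\leq}\, \mathrm{sgn}(\mathcal{D}^=(i))$ by Lemma \ref{lem:sign}(1); summing over the bijection $\mathcal{D}^=\colon \mathcal{I}_{K_1,K_2,\leq} \to \mathcal{I}_{K_1,K_2,=}$ yields exactly $\delta_{\prec,\leq}\, e_{K_1,K_2,=}$.

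Next, the $\mathcal{D}_1^I$ terms (present only when $k \geq 2$) assemble into the desired sum over $j' \in K_2 \setminus K_1$. For each such $i$, set $\{j'\} := I_2 \setminus K_1$; then $\mathcal{D}_1^I(i) = (I_2, \ldots, I_k) \in \mathcal{I}_{K_1 \cup \{j'\}, K_2, \prec}$, and prepending $K_1$ is the inverse, so this gives a bijection from $\{i \in \mathcal{I}_{K_1,K_2,\prec} : k \geq 2,\ I_2 \setminus K_1 = \{j'\}\}$ onto $\mathcal{I}_{K_1 \cup \{j'\}, K_2, \prec}$. Lemma \ref{lem:sign}(4) implies the coefficient identity $\mathrm{sgn}(i)\, (-1)^{|J_i| + \delta_{\prec,\leq}} = (-1)^{N(j') + \delta_{\prec,\leq} - 1}\, \mathrm{sgn}(\mathcal{D}_1^I(i))$, and summing yields $\sum_{j' \in K_2 \setminus K_1} (-1)^{N(j') + \delta_{\prec,\leq} - 1}\, e_{K_1 \cup \{j'\}, K_2, \prec}$.

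It remains to show that the middle $\mathcal{D}_j^I$ terms (for $2 \leq j \leq k-1$) and the $\mathcal{D}_j^J$ terms cancel. The middle ones cancel pairwise via the involution $i \leftrightarrow i'$ of Lemma \ref{lem:sign}(2) (swapping the elements added at steps $j$ and $j+1$): one has $\mathcal{D}_j^I(i) = \mathcal{D}_j^I(i')$, the boundary prefactors $(-1)^{|J_i|+\delta_{\prec,\leq}+j-1}$ agree because $|J_i| = |J_{i'}|$, and $\mathrm{sgn}(i) = -\mathrm{sgn}(i')$. Finally, each term $\mathcal{D}_j^J(i)$ with $j \in K_2 \setminus I_k$ coincides, as an element of $\tilde{\mathcal{I}}$, with $\mathcal{D}_{k+1}^I(i')$ where $i' := (I_1, \ldots, I_k, I_k \cup \{j\}) \in \mathcal{I}_{K_1, K_2, \prec}$. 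Substituting $\mathrm{sgn}(i') = (-1)^{|J_{i'}| + \delta_{\prec,\leq} + k - n_i(j)}\, \mathrm{sgn}(i)$ from Lemma \ref{lem:sign}(3) shows the combined coefficient of this shared cell equals
\begin{equation*}
\mathrm{sgn}(i)\, (-1)^{n_i(j) - 1} + \mathrm{sgn}(i')\, (-1)^{|J_{i'}| + \delta_{\prec,\leq} + k} = \mathrm{sgn}(i)\, \bigl((-1)^{n_i(j) - 1} + (-1)^{n_i(j)}\bigr) = 0.
\end{equation*}

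The principal obstacle is the combinatorial bookkeeping: one must verify that the two cancellation involutions together with the two assembling bijections cover every cell appearing in any $\partial e_i$ exactly once, and handle the edge cases where some relevant set becomes empty. Specifically, for $k = 1$ the map $\mathcal{D}_1^I = \mathcal{D}_k^I$ produces no cell and contributes nothing; for $K_1 = K_2$ only the $\mathcal{D}^=$ contribution survives (all other groups are indexed by empty sets); and for $I_k = K_2$ the $\mathcal{D}_j^J$ sum is vacuous, so the pairing with $\mathcal{D}_{k+1}^I$ is automatic. Once these bijections and edge cases are verified globally on $\mathcal{I}_{K_1,K_2,\prec}$, the four sign identities from Lemma \ref{lem:sign} complete the computation mechanically.
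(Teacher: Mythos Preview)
Your proposal is correct and follows essentially the same approach as the paper's proof: expand via Lemma~\ref{lem:combined_J}, then use parts (1)--(4) of Lemma~\ref{lem:sign} to assemble the $\mathcal{D}^=$ and $\mathcal{D}_1^I$ contributions into the right-hand side and to cancel the middle $\mathcal{D}_j^I$ terms pairwise and the $\mathcal{D}_j^J$ terms against the $\mathcal{D}_{k}^I$ (last-face) terms of the extended chain. The only difference is presentational---the paper treats the $\mathcal{D}_1^I$ assembly last rather than second, and organizes the cancellations by first stratifying the sum over $k=|i|$---but the bijections, sign identities, and edge-case analysis are identical.
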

\begin{proof}
    Using Lemma \ref{lem:combined_J}, we compute
    \begin{equation}\label{eq:cube_lemma_1}
        \begin{split}
            \partial e_{K_1,K_2,\prec}&=\sum_{i\in\mathcal{I}_{K_1,K_2,\prec}}\mathrm{sgn}(i)\cdot\partial e_i\\
            &=\sum_{i\in\mathcal{I}_{K_1,K_2,\prec}}\mathrm{sgn}(i)\cdot\Bigl((-1)^{|J_i|+1}\cdot\delta_{\prec,\leq}\cdot e_{\mathcal{D}^=(i)}\\
            &+\sum_{j\in K_2\backslash I_k}(-1)^{n_i(j)-1}\cdot e_{\mathcal{D}_j^J(i)}+(-1)^{|J_i|+\delta_{\prec,\leq}}\cdot\sum_{j=1}^k(-1)^{j-1}\cdot e_{\mathcal{D}_j^I(i)}\Bigr).
        \end{split}
    \end{equation}
    If $\prec$ is $=$, then the first term vanishes due to the $\delta_{\prec,\leq}$ and is thus trivially identical to $\delta_{\prec,\leq}\cdot e_{K_1,K_2,=}$. If $\prec$ is $\leq$ on the other hand, we can use (1) of Lemma \ref{lem:sign} and write the first term as
    \begin{equation}
        \sum_{i\in\mathcal{I}_{K_1,K_2,\prec}}\mathrm{sgn}(i)\cdot(-1)^{|J_i|+1}\cdot e_{\mathcal{D}^=(i)}=\sum_{i\in\mathcal{I}_{K_1,K_2,=}}\mathrm{sgn}(i)\cdot e_i=e_{K_1,K_2,=}.
    \end{equation}
    Now we attend to the remaining terms. The outer sum $\sum_{i\in\mathcal{I}_{K_1,K_2,\prec}}$ of \eqref{eq:cube_lemma_1} can be written as
    \begin{equation}
        \sum_{i\in\mathcal{I}_{K_1,K_2,\prec}}=\sum_{k=1}^{|K_2|-|K_1|+1}\sum_{\substack{i\in\mathcal{I}_{K_1,K_2,\prec} \\ |i|=k}}.
    \end{equation}
    We show that for every $1\leq k\leq|K_2|-|K_1|+1$ and $2\leq j\leq k-1$, we have
    \begin{equation}
        \sum_{\substack{i\in\mathcal{I}_{K_1,K_2,\prec} \\ |i|=k}}\mathrm{sgn}(i)\cdot e_{\mathcal{D}_j^I(i)}=0,
    \end{equation}
    since every term arises exactly twice with opposite signs (aside from the marginal case $k=1$ in which the expression trivially vanishes). Indeed, for every fixed $I_1,\ldots,I_{j-1},I_{j+1},\ldots,I_k$ the set $I_{j+1}\backslash I_{j-1}$ contains exactly two elements. Call these elements $j_1$ and $j_2$. Then there are exactly two terms with $e_{(I_1,\ldots,\widehat{I}_j,\ldots,I_k)}$ in the sum, which appear for
    \begin{equation}
        i=(I_1,\ldots,I_{j-1},I_{j-1}\cup\{j_1\},I_{j+1},\ldots,I_k)\qquad\text{and}\qquad i'=(I_1,\ldots,I_{j-1},I_{j-1}\cup\{j_2\},I_{j+1},\ldots,I_k).
    \end{equation}
    According to Lemma \ref{lem:sign}, (2), we have $\mathrm{sgn}(i)=-\mathrm{sgn}(i')$. In conclusion, only the outer terms of the sum, namely those for $j=1$ and $j=k$, remain and we have
    \begin{equation}
        \begin{split}
            &\sum_{i\in\mathcal{I}_{K_1,K_2,\prec}}\mathrm{sgn}(i)\cdot(-1)^{|J_i|+\delta_{\prec,\leq}}\cdot\sum_{j=1}^k(-1)^{j-1}\cdot e_{\mathcal{D}_j^I(i)}\\
            =&\sum_{k=2}^{|K_2|-|K_1|+1}\sum_{\substack{i\in\mathcal{I}_{K_1,K_2,\prec} \\ |i|=k}}\mathrm{sgn}(i)\cdot(-1)^{|J_i|+\delta_{\prec,\leq}}\cdot\Bigl(e_{\mathcal{D}_1^I(i)}+(-1)^{k-1}\cdot e_{\mathcal{D}_k^I(i)}\Bigr).
        \end{split}
    \end{equation}
    Here, the sum starts at $k=2$ since the terms vanish for $k=1$. Thus, we can write
    \begin{equation}
        \begin{split}
            \partial e_{K_1,K_2,\prec}&=\delta_{\prec,\leq}\cdot e_{K_1,K_2,\prec}+\sum_{k=1}^{|K_2|-|K_1|}\Bigl(\sum_{\substack{i\in\mathcal{I}_{K_1,K_2,\prec} \\ |i|=k}}\mathrm{sgn}(i)\cdot\sum_{j\in K_2\backslash I_k}(-1)^{n_i(j)-1}e_{\mathcal{D}_j^J(i)}\\
            &+\sum_{\substack{i\in\mathcal{I}_{K_1,K_2,\prec} \\ |i|=k+1}}(-1)^{|J_i|+\delta_{\prec,\leq}+k}\cdot\mathrm{sgn}(i)\cdot e_{\mathcal{D}_{k+1}^I(I_1,\ldots,I_{k+1})}\Bigr)\\
            &+\sum_{k=2}^{|K_2|-|K_1|+1}\sum_{\substack{i\in\mathcal{I}_{K_1,K_2,\prec} \\ |i|=k}}\mathrm{sgn}(i)\cdot(-1)^{|J_i|+\delta_{\prec,\leq}}\cdot e_{\mathcal{D}_1^I(I_1,\ldots,I_k)}.
        \end{split}
    \end{equation}
    We claim that the two sums in the big brackets cancel each other. Let $1\leq k\leq|K_2|-|K_1|$ and consider the two sets
    \begin{equation}
        A:=\{(i,j)\in\mathcal{I}_{K_1,K_2,=}\times K_2\backslash I_k \;|\; |i|=k\}
    \end{equation}
    and
    \begin{equation}
        B:=\{i\in\mathcal{I}_{K_1,K_2,=} \;|\; |i|=k+1\}.
    \end{equation}
    We define a map
    \begin{equation}
        f:A\to B,\qquad (i,j)\mapsto(I_1,\ldots,I_k,I_k\cup\{j\}).
    \end{equation}
    This is a bijection with inverse
    \begin{equation}
        f^{-1}:B\to A,\qquad i\mapsto((I_1,\ldots,I_k),j(i)),
    \end{equation}
    where $j(i)$ is the unique element in $I_{k+1}\backslash I_k$. Note that 
    \begin{equation}
        \mathcal{D}_{k+1}^I(f(i,j))=\mathcal{D}_j^J(i)
    \end{equation}

    $f((I_1,\ldots,I_k),j)=\mathcal{D}_j^J(I_1,\ldots,I_k,I_k\cup\{j\})$. By Lemma \ref{lem:sign}, we have
    \begin{equation}
        \mathrm{sgn}(f(i,j))=(-1)^{|J_i|+\delta_{\prec,\leq}+k-n_i(j)}\cdot\mathrm{sgn}(i).
    \end{equation}
    Thus, we obtain
    \begin{equation}
        \begin{split}
            &\sum_{(i,j)\in A}\mathrm{sgn}(i)\cdot(-1)^{n_i(j)-1}e_{\mathcal{D}_j^J(i)}+\sum_{i\in B}(-1)^{|J_i|+\delta_{\prec,\leq}+k}\cdot\mathrm{sgn}(i)\cdot e_{\mathcal{D}_{k+1}^I(I_1,\ldots,I_{k+1})}\\
            =&\sum_{(i,j)\in A}\mathrm{sgn}(i)\cdot((-1)^{n_i(j)-1}+(-1)^{n_i(j)})\cdot e_{\mathcal{D}_j^J(i)}=0
        \end{split}
    \end{equation}
    and the remaining terms read
    \begin{equation}
        \begin{split}
            \partial e_{K_1,K_2,\prec}&=\sum_{k=2}^{|K_2|-|K_1|+1}\sum_{\substack{i\in\mathcal{I}_{K_1,K_2,\prec} \\ |i|=k}}\mathrm{sgn}(i)\cdot(-1)^{|J_i|+\delta_{\prec,\leq}}\cdot e_{\mathcal{D}_1^I(I_1,\ldots,I_k)} \\
            &=\sum_{\substack{i\in\mathcal{I}_{K_1,K_2,=} \\ |i|\geq2}}\mathrm{sgn}(i)\cdot(-1)^{|J_i|+\delta_{\prec,\leq}}\cdot e_{\mathcal{D}_1^I(I_1,\ldots,I_k)}.
        \end{split}
    \end{equation}
    Similarly to the cancellation argument above, denote
    \begin{equation}
        A:=\{i\in\mathcal{I}_{K_1,K_2,=} \;|\; |i|\geq2\}
    \end{equation}
    and
    \begin{equation}
        B:=\bigcup_{j\in K_2\backslash K_1}\mathcal{I}_{K_1\cup\{j\},K_2,=}.
    \end{equation}
    Then
    \begin{equation}
        f:B\to A,\qquad i\mapsto(K_1,I_1,\ldots,I_k)
    \end{equation}
    is a bijection with inverse
    \begin{equation}
        f^{-1}:A\to B,\qquad i\mapsto (I_2,\ldots,I_k).
    \end{equation}
    Note that
    \begin{equation}
        \mathcal{D}_1^I(f(i))=i.
    \end{equation}
    By Lemma \ref{lem:sign}, (4), we have
    \begin{equation}
        \mathrm{sgn}(f(i))=(-1)^{|J_i|+N(j)-1}\cdot\,\mathrm{sgn}(i).
    \end{equation}
    Hence,
    \begin{equation}
        \begin{split}
            \partial e_{K_1,K_2,\prec}&=\sum_{j\in K_2\backslash K_1}\sum_{i\in\mathcal{I}_{K_1\cup\{j\},K_2,\prec}}\mathrm{sgn}(i)\cdot(-1)^{N(j)+\delta_{\prec,\leq}-1}\cdot e_i\\
            &=\sum_{j\in K_2\backslash K_1}(-1)^{N(j)+\delta_{\prec,\leq}-1}\cdot e_{K_1\cup\{j\},K_2,\prec}.
        \end{split}
    \end{equation}
\end{proof}
We are finally in a position to compute the generators of $H_{n+1}(\mathbb{C}^{n+1},\bigcup_{i=1}^{n+1}S_i)$. First, we express the obvious generators of the groups $\widetilde{H}_{n+1-|I|}(\bigcap_{i\in I}S_i)$ in terms of our CW-decomposition:
\begin{lem}\label{lem:generators}
    Let $K\subset\{1,\ldots,n+1\}$ be a non-empty subset. Then the group $\widetilde{H}_{n+1-|K|}(\bigcap_{k\in K}S_k)$ is generated by the homology class of $e_{K,K,=}$.
\end{lem}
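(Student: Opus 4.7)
My plan is to identify $[e_{K,K,=}]$ with (plus or minus) the fundamental class of the real sphere $S_\mathbb{R}(K)\subset\bigcap_{k\in K}S_k$, which by Proposition \ref{prop:deformation_retract} and Corollary \ref{cor:homology_complex_sphere} generates $\widetilde{H}_{n+1-|K|}(\bigcap_{k\in K}S_k)\cong\mathbb{Z}$.

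First I would unpack the combinatorics. The chain condition $K=K_1\subsetneq I_2\subsetneq\cdots\subsetneq I_k\subseteq K_2=K$ forces $k=1$, so $\mathcal{I}_{K,K,=}$ contains the single element $(K,J,J,=)\in\tilde{\mathcal{I}}$ with $J:=\{1,\ldots,n+1\}\setminus K$. Expanding the definitions of $e_i$ for overlapping $J_\leq,J_\geq$ and then of $e_i=e_i^1-e_i^{-1}$ yields
\[
e_{K,K,=}=\mathrm{sgn}((K))\sum_{\substack{A\sqcup B=J}}(-1)^{|A|-1}\bigl(e_{(K,A,B,=)}^{1}-e_{(K,A,B,=)}^{-1}\bigr),
\]
so each cell $e_{(K,A,B,=)}^{\tau}$ (with $A\sqcup B=J$ and $\tau\in\{-1,1\}$) appears with coefficient $\pm1$.

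Next I would identify the support. For any such cell, the defining conditions force $i\cdot\mathrm{Im}\,z=c_K$, $\mathrm{Re}\,z\in V_{A,B}^{\tau}\subset V_K$, and $(z-a_{\min K})^{2}=1$; substituting the first condition into the third yields $(\mathrm{Re}\,z)^{2}=2-1/|K|$, so the support lies in $S_\mathbb{R}(K)$ as described in \eqref{eq:real_sphere}. Conversely, varying $(A,B,\tau)$ over all allowed choices sweeps out the $2^{|J|+1}=2^{n+2-|K|}$ closed orthants of $V_K$ with respect to the basis $v,\{v_j\}_{j\notin K}$, and their intersections with the sphere $(\mathrm{Re}\,z)^{2}=2-1/|K|$ form the standard orthant tiling of $S_\mathbb{R}(K)\simeq S^{n+1-|K|}$; the required homeomorphism of each piece with a simplex is provided by Proposition \ref{prop:homeomorphism_1} together with Lemma \ref{lem:partial_sphere}.

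Applying Lemma \ref{lem:cube_lemma} with $K_1=K_2=K$ and $\prec$ equal to $=$ gives
\[
\partial e_{K,K,=}=\sum_{j\in K\setminus K}(-1)^{N(j)-1}\cdot e_{K\cup\{j\},K,=}=0,
\]
so $e_{K,K,=}$ is a cellular $(n+1-|K|)$-cycle supported on $S_\mathbb{R}(K)$. In the cellular chain complex of $S_\mathbb{R}(K)$ for this CW-structure, the kernel of the top boundary is generated by the fundamental class, which has coefficient $\pm1$ on each top cell. Since $e_{K,K,=}$ itself has coefficient $\pm1$ on each top cell and no cell appears twice, it must equal $\pm$ the fundamental class. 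Composing with the isomorphism $\widetilde{H}_{n+1-|K|}(\bigcap_{k\in K}S_k)\cong\widetilde{H}_{n+1-|K|}(S_\mathbb{R}(K))$ from Proposition \ref{prop:deformation_retract} completes the proof; the edge case $|K|=n+1$ is handled identically, reading the two top cells $e_{(K,\emptyset,\emptyset,=)}^{\pm1}$ as the two points of the real $0$-sphere and the fundamental class as their difference.

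The only place the argument might be delicate is the identification of the top cells with a genuine orthant tiling that orients consistently: this is really already encoded in Proposition \ref{prop:homeomorphism_1} (for each individual cell) and in the sign bookkeeping of Lemmas \ref{lem:sign} and \ref{lem:cube_lemma} (for the compatibility), so no further sign calculation is needed here — one only has to observe that a cycle with $\pm1$ coefficient on every top cell of a cellular sphere generates top homology.
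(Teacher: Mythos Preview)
Your proof is correct and follows essentially the same route as the paper: identify $\mathcal{I}_{K,K,=}$ as a singleton, expand the definition of $e_{K,K,=}$ to see that its support is exactly the real sphere $S_\mathbb{R}(K)$ (the paper writes this as $(c_I+V_I)\cap S_{j_0}$), and then invoke Lemma~\ref{lem:cube_lemma} to verify it is a cycle. The paper's proof is terser --- it simply says ``it suffices to check that the signs are chosen correctly to yield a cycle'' --- while you make the implicit step explicit by noting that a top-degree cycle on a cellular sphere with $\pm1$ on every top cell must be $\pm$ the fundamental class; this is a welcome clarification but not a different argument.
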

\begin{proof}
    First, note that $\mathcal{I}_{K,K,=}$ contains precisely one element, namely
    \begin{equation}
        i:=(K,\{1,\ldots,n+1\}\backslash K,\{1,\ldots,n+1\}\backslash K,=).
    \end{equation}
    Thus
    \begin{equation}
        e_{K,K,=}=e_i=\sum_{\substack{A\cup B=\{1,\ldots,n+1\}\backslash K \\ A\cap B=\emptyset}}(-1)^{|A|-1}\cdot e_{(K,A,B,=)}.
    \end{equation}
    Let $j_0:=\min K$. Then the support of $e_{K,K,=}$ is
    \begin{equation}
        \{z\in\mathbb{C}^{n+1} \;|\; i\cdot\text{Im}\,z=c_I,\;\text{Re}\,z\in V_I,\; (z-a_{j_0})^2=r_{j_0}^2\}=(c_I+V_I)\cap S_{j_0},
    \end{equation}
    which is exactly the real sphere to which $\bigcap_{i\in K}S_i$ deformation retracts (see equation \eqref{eq:real_sphere} and below). Thus, it suffices to check that the signs are chosen correctly to yield a cycle in $\bigcap_{i\in K}S_i$. But this follows directly from Lemma \ref{lem:cube_lemma}.
\end{proof}
Now all necessary generators can be obtained from an easy inductive argument. Again, we need to figure out the right signs. Let $I=\{i_1,\ldots,i_m\}\subset\{1,\ldots,n+1\}$ with $|I|=m$. For all $0\leq j\leq m-1$ and $m-j+1\leq k\leq m$, let
\begin{equation}
    \tau_k^{m,j}:=(-1)^{i_k+\sum_{l=m-j}^mi_l-jm+\frac{j}{2}(j+1)}.
\end{equation}
In particular, we have $\tau_m^{m,0}=1$ and
\begin{equation}
    \tau_{m-j}^{m,j}=(-1)^{\sum_{l=m-j+1}^mi_l-jm+\frac{j}{2}(j+1)}.
\end{equation}
The $\tau_k^{m,j}$ clearly depend on the set $I$ but we drop this dependence from our notation as long as no confusion can arise. The $\tau_k^{m,j}$ satisfy the following relations:
\begin{lem}\label{lem:sign_2}
    For all $0\leq j\leq m-1$ and all $m-j+1\leq k,l\leq m$, we have the following identities:
    \begin{enumerate}
        \item $(-1)^{i_k-m+j}\cdot\tau^{m,j}_{m-j}=\tau_k^{m,j-1}$.
        \item $(-1)^{i_{m-j}-m+j}\cdot\tau^{m,j}_k=\tau^{m,j-1}_k$.
        \item $(-1)^{i_k}\cdot\tau_l^{m,j}=(-1)^{i_l}\cdot\tau_k^{m,j}$.
    \end{enumerate}
\end{lem}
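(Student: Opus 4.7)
The plan is to verify each of the three identities by direct substitution of the definition of $\tau_k^{m,j}$ and simplification of the exponent modulo $2$. Since $\tau_k^{m,j}$ is defined purely as $(-1)$ to an integer expression, all three identities reduce to checking an equality of integers modulo $2$, so no conceptual argument is needed beyond careful bookkeeping of the sum $\sum_{l=m-j}^{m}i_l$ and the terms $-jm$ and $\frac{j(j+1)}{2}$.

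For identity (3), the verification is essentially immediate: both sides expand to $(-1)$ raised to $i_k+i_l+\sum_{s=m-j}^{m}i_s-jm+\frac{j(j+1)}{2}$, so the equality is formal. For identities (1) and (2), the key observation is that whenever the index $m-j$ appears in the exponent, the corresponding term $i_{m-j}$ occurs twice (once explicitly and once inside $\sum_{l=m-j}^{m}i_l$), and $2i_{m-j}\equiv 0\pmod{2}$ drops out. After this cancellation, both left-hand sides reduce to $(-1)^{i_k+\sum_{l=m-j+1}^{m}i_l-(j+1)m+j+\frac{j(j+1)}{2}}$. I would then compare this with the definition of $\tau_k^{m,j-1}$, namely $(-1)^{i_k+\sum_{l=m-j+1}^{m}i_l-(j-1)m+\frac{j(j-1)}{2}}$, and check that the difference of the two exponents is
\begin{equation*}
-(j+1)m+j+\frac{j(j+1)}{2}-\left(-(j-1)m+\frac{j(j-1)}{2}\right)=-2m+j+j=2(j-m),
\end{equation*}
which is even. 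This establishes both (1) and (2) simultaneously.

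There is no genuine obstacle here; the lemma is a purely combinatorial bookkeeping statement and its only purpose is to collect the sign manipulations that will be used in the subsequent inductive construction of the generators. The only thing one has to be careful about is not mis-tracking the range of summation when peeling off the $i_{m-j}$ term, which is why I would write out the chain of equalities for the exponent explicitly rather than appealing to symmetry. Once identities (1) and (2) are written side by side, it becomes clear that they are merely two different ways to express the same equality, corresponding to either substituting in the explicit factor $(-1)^{i_k-m+j}$ or the factor $(-1)^{i_{m-j}-m+j}$.
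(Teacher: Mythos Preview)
Your proof is correct and follows essentially the same approach as the paper: direct substitution of the definition of $\tau_k^{m,j}$ and reduction of the exponent modulo $2$. Your observation that (1) and (2) collapse to the same exponent computation after the $2i_{m-j}$ cancellation is a mild streamlining of the paper's argument, which treats the two items in separate paragraphs but arrives at the same reduced expression \eqref{eq:sign_2_1} in each case.
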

\begin{proof}
    \begin{enumerate}
        \item The exponent on the left-hand side reads
        \begin{equation}
            i_k-m+j+\sum_{l=m-j+1}^mi_l-jm+\frac{j}{2}(j+1)=i_k+\sum_{l=m-j+1}^mi_l-(j+1)m+2j+\frac{j-1}{2}j.
        \end{equation}
        Modulo 2, this is
        \begin{equation}\label{eq:sign_2_1}
            i_k+\sum_{l=m-j+1}^mi_l-(j-1)m+\frac{j-1}{2}j
        \end{equation}
        which is precisely the exponent of $\tau_k^{m,j-1}$.
        \item The exponent on the left-hand side reads
        \begin{equation}
            i_{m-j}-m+j+i_k+\sum_{l=m-j}^mi_l-jm+\frac{j}{2}(j+1)=i_k+\sum_{l=m-j+1}^mi_l-(j+1)m+\frac{j-1}{2}j+2i_{m-j}+2j,
        \end{equation}
        which again is equal to the exponent \eqref{eq:sign_2_1} of $\tau_k^{m,j-1}$ modulo 2.
        \item Straightforward.
    \end{enumerate}
\end{proof}

\begin{thm}\label{thm:generators}
    Let $I=\{i_1,\ldots,i_m\}\subset\{1,\ldots,n+2\}$ be non-empty with $i_1<\cdots<i_m$ and let $0\leq j\leq m-1$. Then the group
    \begin{equation}
        \delta_{i_1}^{-1}\cdots\delta_{i_m}^{-1}\widetilde{H}_{n+1-m}(\bigcap_{i\in I}S_i)
    \end{equation}
    is generated by the homology class of
    \begin{equation}
        e_{I,m-j}:=\sum_{k=m-j}^m\tau_k^{m,j}\cdot e_{\{i_1,\ldots,i_{m-j-1},i_k\},I,=}.
    \end{equation}
\end{thm}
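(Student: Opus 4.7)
The plan is to proceed by induction on $j$, showing at each level $0 \leq j \leq m-1$ that $e_{I, m-j}$ is a cycle in $X_j := S^{\{i_1, \ldots, i_{m-j-1}\}, \{i_{m-j}, \ldots, i_m\}}$ whose homology class generates the $\widetilde{H}_{n+1-m}(\bigcap_{i \in I} S_i)$-summand of $\widetilde{H}_\bullet(X_j)$ under the Mayer-Vietoris decomposition of Theorem \ref{thm:decomposition_homology}. For $j = 0$ we have $X_0 = \bigcap_{i \in I} S_i$ and $e_{I, m} = e_{I, I, =}$ by definition, so the base case is an immediate application of Lemma \ref{lem:generators}. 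The inductive step then realizes one more application of $\delta^{-1}$ at a time.

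For the inductive step I would use the decomposition $X_j = A \cup B$ with $A := S^{\{i_1, \ldots, i_{m-j-1}\}, \{i_{m-j}\}}$ and $B := S^{\{i_1, \ldots, i_{m-j-1}\}, \{i_{m-j+1}, \ldots, i_m\}}$, whose intersection is precisely $X_{j-1}$. The chain $e_{I, m-j}$ splits naturally along this decomposition as $u + v$, where $u := \tau_{m-j}^{m, j} \cdot e_{\{i_1, \ldots, i_{m-j}\}, I, =}$ is the $k = m-j$ summand (supported in $A$) and $v$ collects the remaining $k \in \{m-j+1, \ldots, m\}$ terms (each supported in $B$). Granted that $e_{I, m-j}$ is in fact a cycle, the associated Mayer-Vietoris homomorphism yields $\delta[e_{I, m-j}] = [\partial u]$ in $\widetilde{H}_\bullet(X_{j-1})$.

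The core calculation is a direct application of Lemma \ref{lem:cube_lemma} to $u$. Since $i_1 < \cdots < i_{m-j} < i_k$ for every $k \geq m-j+1$, a short count gives $N(i_k) = i_k - m + j - 1$, and therefore
\[
\partial u = \tau_{m-j}^{m, j} \sum_{k = m-j+1}^{m} (-1)^{i_k - m + j} \cdot e_{\{i_1, \ldots, i_{m-j}, i_k\}, I, =}.
\]
Part (1) of Lemma \ref{lem:sign_2} collapses each $(-1)^{i_k - m + j} \tau_{m-j}^{m, j}$ to $\tau_k^{m, j-1}$, so the right-hand side is precisely $e_{I, m-j+1}$. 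Hence $\delta[e_{I, m-j}] = [e_{I, m-j+1}]$, and combining with the inductive hypothesis yields the claim.

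What I expect to be the most delicate point is the preliminary verification that $e_{I, m-j}$ is a cycle, since without it the above splitting description of the Mayer-Vietoris homomorphism does not apply. Expanding $\partial e_{I, m-j}$ via Lemma \ref{lem:cube_lemma} produces, for every pair $k < l$ in $\{m-j, \ldots, m\}$, two contributions to the same term $e_{\{i_1, \ldots, i_{m-j-1}, i_k, i_l\}, I, =}$. A short index count yields $N_k(i_l) - N_l(i_k) = i_l - i_k - 1$, so the required pairwise cancellation reduces to the symmetry $(-1)^{i_k} \tau_l^{m, j} = (-1)^{i_l} \tau_k^{m, j}$ from part (3) of Lemma \ref{lem:sign_2}. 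The sign bookkeeping across these interacting applications of Lemma \ref{lem:sign_2} is the only real source of potential error, but each individual identity follows by an elementary parity check.
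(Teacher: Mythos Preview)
Your proposal is correct and follows essentially the same induction as the paper: split $e_{I,m-j}=u+v$ along the Mayer--Vietoris decomposition, compute $\partial u$ via Lemma~\ref{lem:cube_lemma} and Lemma~\ref{lem:sign_2}(1) to obtain $e_{I,m-(j-1)}$, and verify the cycle condition by pairwise sign cancellation. The only organizational difference is that the paper computes $\partial v$ separately (invoking parts (2) and (3) of Lemma~\ref{lem:sign_2}) to show $\partial v=-\partial u$, whereas you treat all pairs $k<l$ in $\{m-j,\ldots,m\}$ uniformly via the symmetry (3); note that as stated in the paper, part~(3) is only asserted for $k,l\ge m-j+1$, but the identity holds verbatim for $k=m-j$ by direct inspection of the defining formula for $\tau_k^{m,j}$, so your streamlined version goes through.
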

\begin{proof}
    We conduct the proof by induction on $j$. For $j=0$, we have
    \begin{equation}
        e_{I,m}=\tau_m^{m,0}\cdot e_{I,I,=}=e_{I,I,=},
    \end{equation}
    which generates the group $\widetilde{H}_{n+1-m}(\bigcap_{i\in I}S_i)$ according to Lemma \ref{lem:generators}. For the induction step, let $j>0$. We write $e_{I,m-j}=u_{I,m-j}+v_{I,m-j}$ with
    \begin{equation}
        u_{I,m-j}:=\tau_{m-j}^{m,j}\cdot e_{\{i_1,\ldots,i_{m-j}\},I,=} \;\in\; C_{n+1-m+j}(S^{\{i_1,\ldots,i_{m-j-1}\},\{i_{m-j}\}})
    \end{equation}
    and
    \begin{equation}
        v_{I,m-j}:=\sum_{k=m-j+1}^m\tau_k^{m,j}\cdot e_{\{i_1,\ldots,i_{m-j-1},i_k\},I,=} \;\in\; C_{n+1-m+j}(S^{\{i_1,\ldots,i_{m-j-1}\},\{i_{m-j+1},\ldots,i_m\}}).
    \end{equation}
    We first compute the boundary of $u_{I,m-j}$. We have
    \begin{equation}
        \begin{split}
            \partial u_{I,m-j}&=\tau_{m-j}^{m,j}\cdot\partial e_{\{i_1,\ldots,i_{m-j}\},I,=}\\
            &=\tau_{m-j}^{m,j}\cdot\sum_{k\in\{i_{m-j+1},\ldots,i_m\}}(-1)^{N(k)-1}\cdot e_{\{i_1,\ldots,i_{m-j},k\},I,=}\\
            &=\sum_{k=m-j+1}^m(-1)^{i_k-m+j}\cdot\tau_{m-j}^{m,j}\cdot e_{\{i_1,\ldots,i_{m-j},i_k\},I,=}\\
            &=\sum_{k=m-j+1}^m\tau_k^{m,j-1}\cdot e_{\{i_1,\ldots,i_{m-j},i_k\},I,=}=e_{m-(j-1)},
        \end{split}
    \end{equation}
    where we used Lemma \ref{lem:cube_lemma} in the second step, the fact that $N(i_k)=i_k-m+j-1$\footnote{$N(i_k)$ is the number of elements in $\{1,\ldots,n+1\}\backslash\{i_1,\ldots,i_{m-j}\}$ smaller than $i_k$. Since $i_1,\ldots,i_{m-j}<i_k$, this is precisely $i_k-1$ minus the number $m-j$ of elements in $\{i_1,\ldots,i_{m-j}\}$.} in the third step and Lemma \ref{lem:sign_2}, (1) in the last step. The boundary of $v_{I,m-j}$ can be computed as
    \begin{equation}
        \begin{split}
            \partial v_{I,m-j}&=\sum_{k=m-j+1}^m\tau_k^{m,j}\cdot\partial e_{\{i_1,\ldots,i_{m-j-1},i_k\},I,=}\\
            &=\sum_{k=m-j+1}^m\tau_k^{m,j}\cdot\sum_{\substack{l=m-j \\ l\neq k}}^m(-1)^{N(i_l)-1}\cdot e_{\{i_1,\ldots,i_{m-j-1},i_k,i_l\},I,=}\\
            &=\sum_{k=m-j+1}^m\tau_k^{m,j}\cdot(-1)^{i_{m-j}-m+j+1}\cdot e_{\{i_1,\ldots,i_{m-j},i_k\},I,=}\\
            &+\sum_{\substack{k,l=m-j+1 \\ l>k}}^m(-1)^{i_l-m+j}\cdot\tau_k^{m,j}\cdot e_{\{i_1,\ldots,i_{m-j-1},i_k,i_l\},I,=}\\
            &-\sum_{\substack{k,l=m-j+1 \\ l<k}}^m(-1)^{i_l-m+j}\cdot\tau_k^{m,j}\cdot e_{\{i_1,\ldots,i_{m-j-1},i_k,i_l\},I,=},
        \end{split}
    \end{equation}
    where again we used Lemma \ref{lem:cube_lemma} in the second step. The two sums with $l<k$ and $l>k$ are identical and hence cancel each other:
    \begin{equation}
        \begin{split}
            &\sum_{\substack{k,l=m-j+1 \\ l>k}}^m(-1)^{i_l-m+j}\cdot\tau_k^{m,j}\cdot e_{\{i_1,\ldots,i_{m-j-1},i_k,i_l\},I,=}\\
            \overset{\phantom{\text{Lemma }\ref{lem:sign_2}\text{ (3)}}}=&\sum_{\substack{k,l=m-j+1 \\ l<k}}^m(-1)^{i_k-m+j}\cdot\tau_l^{m,j}\cdot e_{\{i_1,\ldots,i_{m-j-1},i_k,i_l\},I,=}\\
            \overset{\text{Lemma }\ref{lem:sign_2}\text{ (3)}}=&\sum_{\substack{k,l=m-j+1 \\ l<k}}^m(-1)^{i_l-m+j}\cdot\tau_k^{m,j}\cdot e_{\{i_1,\ldots,i_{m-j-1},i_k,i_l\},I,=}
        \end{split}
    \end{equation}
    Thus, we obtain
    \begin{equation}
        \begin{split}
            \partial v_{I,m-j}&=-\sum_{k=m-j+1}^m\tau_k^{m,j}\cdot(-1)^{i_{m-j}-m+j}\cdot e_{\{i_1,\ldots,i_{m-j},i_k\},I,=}\\
            &=-\sum_{k=m-j+1}^m\tau_k^{m,j-1}\cdot e_{\{i_1,\ldots,i_{m-j},i_k\},I,=}=-e_{I,m-(j-1)}=-\partial u_{I,m-j}
        \end{split}
    \end{equation}
    by Lemma \ref{lem:sign_2}, (2). This shows on the one hand that $\partial e_{I,m-j}=\partial u_{I,m-j}+\partial v_{I,m-j}=0$ and on the other hand that the Mayer-Vietoris homomorphism $\delta_{I,m-j}$ sends $e_{I,m-j}$ to $e_{I,m-(j-1)}$. So $e_{I,m-j}$ does indeed define a homology class which generates the group of interest.
\end{proof}
The boundary homomorphism $\partial_\ast$ is easy to understand and as an immediate consequence, we get the following
\begin{cor}\label{cor:generators}
    In the same situation as in Theorem \ref{thm:generators}, the group
    \begin{equation}
        (\partial_\ast)^{-1}\delta_{i_1}^{-1}\cdots\delta_{i_m}^{-1}\widetilde{H}_{n+1-m}(\bigcap_{i\in I}S_i)
    \end{equation}
    is generated by the homology class of
    \begin{equation}
        \mathbf{e}_{I,m-j}:=\sum_{k=m-j}^m\tau_{m-j}^{m,j}\cdot e_{\{i_1,\ldots,i_{m-j-1},i_k\},I,\leq}.
    \end{equation}
\end{cor}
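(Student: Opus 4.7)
The plan is to deduce the corollary from Theorem \ref{thm:generators} via the boundary isomorphism of Lemma \ref{lem:decomposition_homology}. Since $\mathbb{C}^{n+1}$ is contractible, $\partial_\ast : H_{n+1-m+j+1}(\mathbb{C}^{n+1}, S^{\emptyset,I}) \overset{\sim}\to \widetilde{H}_{n-m+j}(S^{\emptyset,I})$ is an isomorphism; under this isomorphism, $(\partial_\ast)^{-1}\delta_{i_1}^{-1}\cdots\delta_{i_m}^{-1}\widetilde{H}_{n+1-m}(\bigcap_{i\in I}S_i)$ is the preimage of the subgroup generated by $[e_{I,m-j}]$ (from Theorem \ref{thm:generators}). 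Therefore it suffices to exhibit a chain $\mathbf{e}_{I,m-j}$ in $\mathbb{C}^{n+1}$, with boundary lying in $S^{\emptyset,I}$, such that the class $[\partial\mathbf{e}_{I,m-j}]$ equals $[e_{I,m-j}]$ in $\widetilde{H}_{n-m+j}(S^{\emptyset,I})$. I would take as $\mathbf{e}_{I,m-j}$ precisely the formula stated (reading the coefficient as $\tau_k^{m,j}$, which the construction in Theorem \ref{thm:generators} clearly dictates and which matches the typo-free version of $e_{I,m-j}$).

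Next I would compute $\partial\mathbf{e}_{I,m-j}$ directly by applying Lemma \ref{lem:cube_lemma} with $\prec$ being $\leq$ to each summand $e_{\{i_1,\ldots,i_{m-j-1},i_k\},I,\leq}$. This produces two kinds of terms. The ``$=$-terms'' combine to give
\begin{equation}
\sum_{k=m-j}^m \tau_k^{m,j}\cdot e_{\{i_1,\ldots,i_{m-j-1},i_k\},I,=} = e_{I,m-j},
\end{equation}
which is exactly the desired cycle in $S^{\emptyset,I}$. The ``$\leq$-terms'' have the form
\begin{equation}
\sum_{k=m-j}^m\sum_{\substack{a=m-j\\ a\neq k}}^m \tau_k^{m,j}\cdot(-1)^{N(i_a)}\cdot e_{\{i_1,\ldots,i_{m-j-1},i_k,i_a\},I,\leq},
\end{equation}
where $N(i_a)$ is computed relative to $K_1=\{i_1,\ldots,i_{m-j-1},i_k\}$. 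The crux of the argument is to show this double sum vanishes on the nose.

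The key step, and the only mildly technical one, is the cancellation of this double sum. I would group the terms in unordered pairs $\{k,a\}$ with $k<a$: for such a pair, the indices $(k,a)$ and $(a,k)$ both contribute a multiple of the same cell $e_{\{i_1,\ldots,i_{m-j-1},i_k,i_a\},I,\leq}$. A short count shows $N(i_a)=i_a-m+j-1$ when $a>k$ and $N(i_k)=i_k-m+j$ when $k<a$ (removing $m-j$ elements of $\{i_1,\ldots,i_{m-j-1},i_k\}$ less than $i_a$, versus only $m-j-1$ less than $i_k$). Cancellation therefore reduces to the identity $(-1)^{i_k}\tau_a^{m,j}=(-1)^{i_a}\tau_k^{m,j}$, which is exactly Lemma \ref{lem:sign_2} (3). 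Consequently $\partial\mathbf{e}_{I,m-j}=e_{I,m-j}$ as chains, so in particular $\partial_\ast[\mathbf{e}_{I,m-j}]=[e_{I,m-j}]$.

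To conclude, I would observe that $\mathbf{e}_{I,m-j}$ is thus a well-defined class in $H_{n+1-m+j+1}(\mathbb{C}^{n+1},S^{\emptyset,I})$ and, being the $\partial_\ast$-preimage of the generator $[e_{I,m-j}]$ of $\delta_{i_1}^{-1}\cdots\delta_{i_m}^{-1}\widetilde{H}_{n+1-m}(\bigcap_{i\in I}S_i)$ under an isomorphism, generates $(\partial_\ast)^{-1}\delta_{i_1}^{-1}\cdots\delta_{i_m}^{-1}\widetilde{H}_{n+1-m}(\bigcap_{i\in I}S_i)$. The main obstacle is entirely bookkeeping: verifying the pairwise cancellation of the ``$\leq$-terms'' with the correct signs, but as sketched above this is dispatched by a single application of Lemma \ref{lem:sign_2} (3) once the expressions for $N(i_a)$ in the two orderings are made explicit.
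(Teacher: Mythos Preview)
Your argument is correct and is essentially the paper's proof: both compute $\partial\mathbf{e}_{I,m-j}$ via Lemma \ref{lem:cube_lemma} (now with $\prec$ equal to $\leq$), obtain $e_{I,m-j}$ from the $=$-terms, and kill the remaining $\leq$-terms by the same pairwise cancellation governed by Lemma \ref{lem:sign_2}(3). The only cosmetic difference is that the paper reuses the $\mathbf{u}_{I,m-j}/\mathbf{v}_{I,m-j}$ split from Theorem \ref{thm:generators} and invokes ``the same computations as before'', whereas you carry out the full double sum directly; your observation that the stated coefficient $\tau_{m-j}^{m,j}$ must be read as $\tau_k^{m,j}$ is also correct.
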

\begin{proof}
    Similarly to the proof of Theorem \ref{thm:generators}, we write $\mathbf{e}_{I,m-j}=\mathbf{u}_{I,m-j}+\mathbf{v}_{I,m-j}$ with
    \begin{equation}
        \mathbf{u}_{I,m-j}:=\tau_{m-j}^{m,j}\cdot e_{\{i_1,\ldots,i_{m-j}\},I,\leq}\;\in\; C_{n+1-m+j}(\mathbb{C}^{n+1},S^{\{i_1,\ldots,i_{m-j-1}\},\{i_{m-j}\}})
    \end{equation}
    and
    \begin{equation}
        \mathbf{v}_{I,m-j}:=\sum_{k=m-j+1}^m\tau_k^{m,j}\cdot e_{\{i_1,\ldots,i_{m-j-1},i_k\},I,\leq}\;\in\; C_{n+1-m+j}(\mathbb{C}^{n+1},S^{\{i_1,\ldots,i_{m-j-1}\},\{i_{m-j+1,\ldots,i_m}\}}).
    \end{equation}
    The same computations as before show that the only part of the boundary $\partial\mathbf{e}_{I,m-j}$ of $\mathbf{e}_{I,m-j}$ that does not vanish is
    \begin{equation}
        \partial\mathbf{e}_{I,m-j}=\sum_{k=m-j}^m\tau_k^{m,j}\cdot e_{\{i_1,\ldots,i_{m-j-1},i_k\},I,=}=e_{I,m-j}.
    \end{equation}
    Thus, the induced map $\partial_\ast$ sends $\mathbf{e}_{I,m-j}$ to the generator $e_{I,m-j}$ of $\delta_{i_1}^{-1}\cdots\delta_{i_m}^{-1}\widetilde{H}_n(\bigcap_{i\in I}S_i)$, so that $\mathbf{e}_{I,m-j}$ indeed generates the group in question.
\end{proof}
For every non-empty $I\subset\{1,\ldots,n+1\}$, we denote $e_I:=(-1)^{\min I-1}\cdot e_{I,1}$ and $\mathbf{e}_I:=(-1)^{\min I-1}\cdot \mathbf{e}_{I,1}$. If $I=\{i_1,\ldots,i_m\}$ with $i_1<\cdots<i_m$, these read
\begin{equation}
    e_I:=(-1)^{i_1-1}\cdot\sum_{k=1}^m\tau_k^{m,m-1}\cdot e_{\{i_k\},I,=} \quad\text{and}\quad \mathbf{e}_I:=(-1)^{i_1-1}\cdot\sum_{k=1}^m\tau_k^{m,m-1}\cdot e_{\{i_k\},I,\leq}.
\end{equation}
Together with our direct sum decomposition for the homology groups from Section \ref{sec:complex_spheres}, we conclude the following:
\begin{thm}
    The homology groups
    \begin{equation}
        \widetilde{H}_n(\bigcup_{i=1}^{n+1}S_i) \qquad\text{and}\qquad H_{n+1}(\mathbb{C}^{n+1},\bigcup_{i=1}^{n+1}S_i)
    \end{equation}
    are generated by $\{[e_I]\}_{\substack{I\subset\{1,\ldots,n+1\} \\ I\neq\emptyset}}$ and $\{[\mathbf{e}_I]\}_{\substack{I\subset\{1,\ldots,n+1\} \\ I\neq\emptyset}}$ respectively.
\end{thm}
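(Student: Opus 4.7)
The plan is to assemble this theorem directly from the three preceding main results of the section, viewed together with the decomposition established in Section \ref{subsec:homology}. Theorem \ref{thm:decomposition_homology}, applied with $I=\emptyset$ and $J=\{1,\ldots,n+1\}$ (so $N=n+1$ and consequently $|I\cup K|=|K|\leq n+1$ for every non-empty $K\subset J$), gives an isomorphism
\begin{equation*}
    g_{\emptyset,\{1,\ldots,n+1\}}:\widetilde{H}_n\Bigl(\bigcup_{i=1}^{n+1}S_i\Bigr)\overset{\sim}\to\bigoplus_{\substack{I\subset\{1,\ldots,n+1\} \\ I\neq\emptyset}}\widetilde{H}_{n-|I|+1}\Bigl(\bigcap_{i\in I}S_i\Bigr),
\end{equation*}
and the boundary $\partial_\ast$ gives the corresponding isomorphism for $H_{n+1}(\mathbb{C}^{n+1},\bigcup_{i=1}^{n+1}S_i)$ by Lemma \ref{lem:decomposition_homology}. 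Each summand on the right is infinite cyclic by Corollary \ref{cor:homology_complex_sphere}, so the theorem reduces to exhibiting, for each non-empty $I\subset\{1,\ldots,n+1\}$, a class in $\widetilde{H}_n(\bigcup_{i=1}^{n+1}S_i)$ (resp.\ $H_{n+1}(\mathbb{C}^{n+1},\bigcup_{i=1}^{n+1}S_i)$) whose image under $g_{\emptyset,\{1,\ldots,n+1\}}$ (resp.\ $g_{\emptyset,\{1,\ldots,n+1\}}\circ(\partial_\ast)$) is a generator of the $I$th summand.

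Next I would unwind the recursion \eqref{eq:decomposition_thm_iso} defining $g_{\emptyset,\{1,\ldots,n+1\}}$. Writing $I=\{i_1,\ldots,i_m\}$ with $i_1<\cdots<i_m$, iterated application of the recursion shows that the component of $g_{\emptyset,\{1,\ldots,n+1\}}$ landing in $\widetilde{H}_{n-m+1}(\bigcap_{i\in I}S_i)$ is, up to the projections built out of $f_1,f_2$, precisely the iterated Mayer--Vietoris homomorphism $\delta_{i_m}\circ\cdots\circ\delta_{i_1}$. Consequently, the preimage of a fixed generator of $\widetilde{H}_{n-m+1}(\bigcap_{i\in I}S_i)$ under $g_{\emptyset,\{1,\ldots,n+1\}}$ is a generator of $\delta_{i_1}^{-1}\cdots\delta_{i_m}^{-1}\widetilde{H}_{n+1-m}(\bigcap_{i\in I}S_i)$, and this is exactly the group for which Theorem \ref{thm:generators} (in the extreme case $j=m-1$) identifies $[e_{I,1}]$ as a generator; an identical argument with Corollary \ref{cor:generators} produces $[\mathbf{e}_{I,1}]$ on the relative side. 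Since multiplication of a generator by the unit $(-1)^{\min I-1}$ again yields a generator, the classes $[e_I]=(-1)^{\min I-1}[e_{I,1}]$ and $[\mathbf{e}_I]=(-1)^{\min I-1}[\mathbf{e}_{I,1}]$ do the job.

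To finish, I would collect these across all non-empty $I\subset\{1,\ldots,n+1\}$ and invoke the direct sum decomposition: since each $[e_I]$ maps to a generator of the $I$th summand and has zero component in every other summand (by the property $f_1\circ k_\ast=\mathrm{id}$, $f_1\circ l_\ast=0$, etc., from Theorem \ref{thm:decomposition_homology} propagated through the recursion), the family $\{[e_I]\}_{\emptyset\neq I\subset\{1,\ldots,n+1\}}$ is mapped isomorphically onto a set of generators of the direct sum and therefore generates $\widetilde{H}_n(\bigcup_{i=1}^{n+1}S_i)$. The argument for $\{[\mathbf{e}_I]\}$ is identical after postcomposing with $(\partial_\ast)^{-1}$.

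The only real obstacle is verifying that the iterated recursion \eqref{eq:decomposition_thm_iso} does in fact realize the component landing in the $I$th summand as $\delta_{i_m}\circ\cdots\circ\delta_{i_1}$ composed with suitable restriction maps, and that the preimages produced by Theorem \ref{thm:generators} really match the summand indexed by $I$ and not some other subset. This is essentially bookkeeping: one inducts on $|I|$, at each stage using the Mayer--Vietoris splitting $S^{\emptyset,\{1,\ldots,n+1\}}=S_{i_1}\cup S^{\emptyset,\{1,\ldots,n+1\}\backslash\{i_1\}}$ and the fact that the cells constituting $e_{I,1}$ have support in $S_{i_1}$ when restricted appropriately, so that the $f_1,f_2$ components act on them in the predicted way. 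Everything beyond this verification is an immediate combination of the cited results.
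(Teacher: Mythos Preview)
Your proposal is correct and follows essentially the same route as the paper: the paper's own proof is the two-sentence observation that Theorem~\ref{thm:decomposition_homology} gives the direct sum decomposition and Theorem~\ref{thm:generators} together with Corollary~\ref{cor:generators} supply a generator for each summand. Your write-up simply unpacks this by spelling out the role of the iterated Mayer--Vietoris maps and the $f_1,f_2$ projections, which the paper leaves implicit; the extra verification you flag as ``bookkeeping'' is indeed routine and already built into the statement of Theorem~\ref{thm:generators} (the group $\delta_{i_1}^{-1}\cdots\delta_{i_{m-1}}^{-1}\widetilde{H}_{n+1-m}(\bigcap_{i\in I}S_i)$ there is precisely the $I$th summand).
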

\begin{proof}
    According to Theorem \ref{thm:decomposition_homology}, the relevant homology groups decompose into a direct sum of terms, whose generators are given precisely as claimed according to Theorem \ref{thm:generators} and Corollary \ref{cor:generators}.
\end{proof}
This concludes the task of finding explicit representatives for the generators. With these concrete representatives for the generators of $H_{n+1}(\mathbb{C}^{n+1},\bigcup_{i=1}^{n+1}S_i)$ in place, it is now possible to compute the intersection index of all generators with $(i\cdot\mathbb{R})^{n+1}$. But this task becomes even easier after slightly altering the representatives $\mathbf{e}_I$ of the generators such that they intersect $(i\cdot\mathbb{R})^{n+1}$ in exactly one point or not at all (see Example \ref{ex:cw_1}).
\begin{lem}\label{lem:deformation_of_generators}
    Let $I\subset\{1,\ldots,n+1\}$ with $|I|\geq 2$. Then $[\mathbf{e}_I]$ has a representative $\mathbf{e}_I'$ with support disjoint from $(i\cdot\mathbb{R})^{n+1}$.
\end{lem}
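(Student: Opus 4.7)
The plan is an ambient-isotopy argument: build $\Phi_t:\mathbb{C}^{n+1}\to\mathbb{C}^{n+1}$ with $\Phi_0=\mathrm{id}$, supported in a tubular neighbourhood of $(i\cdot\mathbb{R})^{n+1}$ disjoint from $\bigcup_{j=1}^{n+1}S_j$, such that $\Phi_1$ displaces $|\mathbf{e}_I|$ off $(i\cdot\mathbb{R})^{n+1}$; then the representative $\mathbf{e}_I':=\Phi_1(\mathbf{e}_I)$ works.

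First I would record the basic disjointness: for $z=iy\in(i\cdot\mathbb{R})^{n+1}$ and any $j$ one has $(z-a_j)^2=-\|y-e_j\|^2\leq 0<1$, so $(i\cdot\mathbb{R})^{n+1}\cap\bigcup_{j=1}^{n+1}S_j=\emptyset$. In particular $\partial\mathbf{e}_I\subset\bigcup_{j=1}^{n+1}S_j$ does not meet $(i\cdot\mathbb{R})^{n+1}$, so only the interior of $\mathbf{e}_I$ must be pushed off. Since $|\mathbf{e}_I|$ is compact, one may fix $\delta>0$ so that $U_\delta:=\{z\in\mathbb{C}^{n+1}:\|\mathrm{Re}\,z\|<\delta\}$ is disjoint from $|\mathbf{e}_I|\cap\bigcup_{j=1}^{n+1}S_j$.

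Next I would exploit the structural property of $\mathbf{e}_I$: every one of its cells has the form $e_i^{\pm 1}$, and the lemma preceding Lemma \ref{lem:combined_J} shows that in the combination $e_i=e_i^1-e_i^{-1}$ the $\tau=0$ faces cancel. Consequently, near each $p\in|\mathbf{e}_I|\cap(i\cdot\mathbb{R})^{n+1}$, the chain $\mathbf{e}_I$ looks like a ``bicone'' whose two halves sit on opposite sides of $(i\cdot\mathbb{R})^{n+1}$ in the $\pm v$-direction and meet only along a cancelling $\tau=0$ face. The hypothesis $|I|\geq 2$ guarantees that the real-part tangent subspace of this bicone is properly contained in $\mathbb{R}^{n+1}$, hence admits a transverse real vector $u_p\in\mathbb{R}^{n+1}$; sliding the bicone along $u_p$ pushes it off $(i\cdot\mathbb{R})^{n+1}$ without splitting the two halves.

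I would then patch these local directions via a partition of unity into a smooth vector field $X$ compactly supported in $U_\delta$, let $\Phi_t$ be its time-$t$ flow, and set $\mathbf{e}_I':=\Phi_1(\mathbf{e}_I)$. Because $X$ vanishes outside $U_\delta$, the flow $\Phi_t$ fixes $\partial\mathbf{e}_I$, so the trace $H(t,z):=\Phi_t(z)$ gives an $(n+2)$-chain with $\partial H=\mathbf{e}_I'-\mathbf{e}_I$ modulo $\bigcup_{j=1}^{n+1}S_j$, verifying $[\mathbf{e}_I']=[\mathbf{e}_I]$ in $H_{n+1}(\mathbb{C}^{n+1},\bigcup_{j=1}^{n+1}S_j)$; by construction $|\mathbf{e}_I'|\cap(i\cdot\mathbb{R})^{n+1}=\emptyset$. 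The main obstacle is the patching step itself: the local vectors $u_p$ depend on the combinatorics of which cells $e_i^{\pm 1}$ reach $(i\cdot\mathbb{R})^{n+1}$ at $p$, and they must be chosen coherently so that $X$ is smooth, supported in $U_\delta$, and pushes every cell of $\mathbf{e}_I$ off simultaneously without creating new intersections. The hypothesis $|I|\geq 2$ is essential here: for $|I|=1$ the bicone degenerates to a full disc through $a_{i_1}$ with no transverse real direction, consistent with the fact (established in the next subsection) that $\langle(i\cdot\mathbb{R})^{n+1}|\mathbf{e}_1\rangle=\pm 1\neq 0$.
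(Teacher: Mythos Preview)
Your overall strategy---an ambient isotopy supported in a tube around $(i\cdot\mathbb{R})^{n+1}$ and disjoint from $\bigcup_jS_j$, then push $\mathbf{e}_I$ off---is exactly what the paper does. The difference is that the paper never faces the patching obstacle you flag, because it finds a \emph{single global} push direction rather than local directions $u_p$ to be glued.

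The key observation you are missing is combinatorial. Every cell appearing in $\mathbf{e}_I=\sum_k\tau_k^{m,m-1}e_{\{i_k\},I,\leq}$ comes from some $\mathcal{I}_{\{i_k\},I,\leq}$, and by definition of that index set one always has $J_\leq=\{1,\ldots,n+1\}\setminus I$. Hence for \emph{every} $j\in I$ the vector $v_j$ never occurs with a negative coefficient in the real part of any cell of $\mathbf{e}_I$: either $j\in J_\geq$ (coefficient $\geq 0$) or $j$ lies in the last $I_l$ (coefficient $0$). Fixing $i_0\in I$ and expressing real parts in the basis $v,\{v_j\}_{j\neq i_0}$, one therefore has a coordinate $d_j$ (with $j\in I\setminus\{i_0\}$, which exists precisely because $|I|\geq 2$) that is $\geq 0$ on all of $|\mathbf{e}_I|$. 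The paper then simply flows along the constant field $(1,\ldots,1)$ (cut off by a bump function), which increases $d_j$ and hence moves everything into $\{d_j>0\}\subset\mathbb{C}^{n+1}\setminus(i\cdot\mathbb{R})^{n+1}$.

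Your local description is also slightly off: the ``real-part tangent subspace'' at a point like $c_{\{i_k\}}$ is \emph{not} properly contained in $\mathbb{R}^{n+1}$---the top cells there have real part filling an $(n+1)$-dimensional cone, so the span is everything. What saves you is not a proper subspace but the half-space constraint $d_j\geq 0$ identified above. Once you see this, the need for local $u_p$'s and a partition of unity disappears: one global direction suffices, and the argument becomes a few lines.
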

\begin{proof}
    First note that $\bigcup_{i=1}^{n+1}S_i$ is disjoint from $(i\cdot\mathbb{R})^{n+1}$. More concretely, the distance between the set $\bigcup_{i=1}^{n+1}S_i$ and $(i\cdot\mathbb{R})^{n+1}$ is $1$. Choose an open neighborhood $U$ of
    \begin{equation}
        X:=(i\cdot\mathbb{R})^{n+1}\cap\bigcup_{\tau\in\{-1,0,1\}}\bigcup_{i\in\mathcal{I}}e_i^\tau
    \end{equation}
    disjoint from $\bigcup_{i=1}^{n+1}S_i$ and let $K\subset U$ be a compact neighborhood of $X$. Choose some $i_0\in I$ and consider the basis $v,\{v_j\}_{j\neq i_0}$ of $\mathbb{R}^{n+1}$. Let
    \begin{equation}
        \tilde{g}:\mathbb{R}^{n+1}\to\mathbb{R}^{n+1},\qquad d\cdot v+\sum_{\substack{j=1 \\ j\neq i_0}}^{n+1}d_j\cdot v_j\mapsto (d,d_1,\ldots,\widehat{d}_{i_0},\ldots,d_{n+1})
    \end{equation}
    be the change of basis from $v,\{v_j\}_{j\neq i_0}$ to the canonical basis and extend it to $\mathbb{C}^{n+1}$ by
    \begin{equation}
        g:\mathbb{C}^{n+1}\to\mathbb{C}^{n+1},\qquad z=x+i\cdot y\mapsto\tilde{g}(x)+i\cdot y.
    \end{equation}
    Now let $V$ be the constant unit vector field in the direction
    \begin{equation}
        (1,\ldots,1)+i\cdot(0,\ldots,0).
    \end{equation}
    Let $\rho:\mathbb{C}^{n+1}\to\mathbb{R}_{\geq0}$ be a bump-function with support in $g(U)$ such that $\rho|_{g(K)}=1$. Then integrating the Lipschitzian vector field $\rho\cdot V$ yields an ambient isotopy $\varphi':\mathbb{C}^{n+1}\times[0,1]\to\mathbb{C}^{n+1}$ of $\bigcup_{i=1}^{n+1}g(S_i)$ in $\mathbb{C}^{n+1}$. We use $\varphi'$ to define the ambient isotopy $\varphi:=g^{-1}\circ\varphi'$ of $\bigcup_{i=1}^{n+1}S_i$ in $\mathbb{C}^{n+1}$. Since $|I|\geq2$, there must be at least one $j\in\{1,\ldots,n+1\}\backslash\{i_0\}$ such that $(\tilde{g}(\text{Re}\,z))_j\geq0$ for all $z\in\mathbf{e}_I$. But then $\varphi(\cdot,1)$ takes all such $z$ to $z'$ with $(\tilde{g}(\text{Re}\,z'))_j>0$ and in particular $z'\notin(i\cdot\mathbb{R})^{n+1}$. Thus $\mathbf{e}_I':=\varphi(\mathbf{e}_I,1)$ fulfills the requirement of the lemma.
\end{proof}
\begin{thm}
    We have the following results for the intersection indices:\footnote{Here, it is understood implicitly that $(i\cdot\mathbb{R})^{n+1}$ inherits its orientation from the standard orientation of $\mathbb{C}^{n+1}$.}
    \begin{equation}
        \langle(i\cdot\mathbb{R})^{n+1}|\mathbf{e}_I\rangle=\begin{cases} 1 & \mathrm{if }\;|I|=1 \\ 0 & \mathrm{if }\;|I|\geq2\end{cases}
    \end{equation}
\end{thm}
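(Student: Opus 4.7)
The statement splits into two cases, and my plan would be to dispatch them in order of difficulty.

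For the case $|I|\geq 2$, the proof is immediate from the preceding lemma: Lemma \ref{lem:deformation_of_generators} produces a representative $\mathbf{e}_I'$ of $[\mathbf{e}_I]$ whose support is disjoint from $(i\cdot\mathbb{R})^{n+1}$. Since $(i\cdot\mathbb{R})^{n+1}$, viewed as a Borel-Moore chain in $\mathbb{C}^{n+1}\setminus S_\mathrm{fin}(a,r)$, has closed support disjoint from the compact relative chain $\mathbf{e}_I'$, Proposition \ref{prop:intersection_index_disjoint_support} applies (after an arbitrarily small smoothing of $\mathbf{e}_I'$ to a submanifold if desired) and yields $\langle(i\cdot\mathbb{R})^{n+1}|\mathbf{e}_I\rangle=0$.

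For the case $|I|=1$, write $I=\{i\}$. Unraveling Corollary \ref{cor:generators} together with the definitions in the construction of the complex, one obtains $\mathbf{e}_{\{i\}}=(-1)^{i-1}\cdot e_{\{i\},\{i\},\leq}$, and the underlying geometric support of this chain is the closed real $(n+1)$-disk $D_i:=\{ie_i+x\;|\;x\in\mathbb{R}^{n+1},\;x^2\leq 1\}$, whose relative boundary is the real sphere $S_\mathbb{R}(\{i\})\subset S_i$. The disk $D_i$ meets $(i\cdot\mathbb{R})^{n+1}$ in the single interior point $ie_i$, and the tangent spaces $T_{ie_i}D_i=\mathbb{R}^{n+1}$ and $T_{ie_i}(i\cdot\mathbb{R})^{n+1}=i\mathbb{R}^{n+1}$ are complementary in $T_{ie_i}\mathbb{C}^{n+1}$. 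Thus the intersection is transverse and consists of a single point, so the intersection index equals $\pm1$. The bulk of the argument then consists of verifying that the sign is $+1$. My plan is to do this locally at $ie_i$, comparing three inputs: the orientation of $D_i$ induced by the CW-construction through the homeomorphisms $\phi_i^\tau$ of Proposition \ref{prop:homeomorphism_1} (which depends on $i$ through the choice of basis $v,\{v_j\}_{j\neq i}$ of $\mathbb{R}^{n+1}$ and through the signs $\mathrm{sgn}(i)$ from Lemma \ref{lem:sign}), the prefactor $(-1)^{i-1}$ in the definition of $\mathbf{e}_I$, and the sign convention in the intersection pairing \eqref{eq:intersection_pairing} coming from the cap product and Borel-Moore duality.

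The main obstacle is precisely this sign bookkeeping. The geometry is straightforward (a standard real $(n+1)$-disk meeting the purely imaginary subspace transversely at its center), so the only real work is checking that the several layers of sign conventions combine to $+1$ independently of $i$, as forced by the symmetry of the configuration under permuting the spheres $S_1,\ldots,S_{n+1}$. I expect this to be a routine but tedious computation, best carried out in coordinates at $ie_i$ using the explicit form of $\phi_i^\tau$ near the origin of the cube $\Delta^{n+1}\times\Delta^0$, and mirroring the local check already performed in Example \ref{ex:cw_1} for the $(n=1,N=2)$ case.
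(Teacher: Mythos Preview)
Your proposal is correct and follows essentially the same approach as the paper: the $|I|\geq2$ case is handled exactly as you describe via Lemma \ref{lem:deformation_of_generators}, and for $|I|=\{k\}$ the paper likewise identifies $\mathbf{e}_{\{k\}}=(-1)^{k-1}\cdot e_{\{k\},\{k\},\leq}$ with the real disk in $a_k+\mathbb{R}^{n+1}$, observes the single transverse intersection at $a_k$, and then carries out the orientation check by computing the determinant of the change-of-basis matrix from $(v_1,\ldots,\widehat{v_k},\ldots,v_{n+1},v)$ to the canonical basis, which is $(-1)^{k-1}$ and cancels the prefactor. Your reference to the signs $\mathrm{sgn}(i)$ from Lemma \ref{lem:sign} is a slight overcomplication here, since for $K_1=K_2=\{k\}$ the index set $\mathcal{I}_{\{k\},\{k\},\leq}$ is a singleton and the relevant orientation data reduces directly to the basis $v,\{v_j\}_{j\neq k}$; otherwise your plan matches the paper's execution.
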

\begin{proof}
    From Lemma \ref{lem:deformation_of_generators}, we immediately obtain $\langle(i\cdot\mathbb{R})^{n+1}|\mathbf{e}_I\rangle=0$ for all $I\subset \{1,\ldots,n+1\}$ with $|I|\geq2$. If $|I|=1$ on the other hand, let us write $I=\{k\}$. Then the support of $\mathbf{e}_I$ is contained in $a_k+\mathbb{R}^{n+1}¸$ and contains the point $a_k$. Hence $\mathbf{e}_I$ and $(i\cdot\mathbb{R})^{n+1}$ only intersect at $a_k$ and according to Proposition \ref{prop:intersection_index_disjoint_support}, it suffices to check that the orientation of $\mathbf{e}_I$ and $(i\cdot\mathbb{R})^{n+1}$ match at this point. We have $\tau_1^{1,0}=1$ and thus
    \begin{equation}
        \mathbf{e}_I=(-1)^{k-1}\cdot e_{\{k\},\{k\},\leq}.
    \end{equation}
    We identify the tangent space of $\mathbb{C}^{n+1}$ (as a real $(2n+2)$-manifold) at any point with $\mathbb{C}^{n+1}$ itself and orient the space such that
    \begin{equation}\label{eq:complex_basis}
        \begin{pmatrix} 1 \\ 0 \\ \vdots \\ 0 \end{pmatrix},\quad\ldots\quad,\quad \begin{pmatrix} 0 \\ \vdots \\ 0 \\ 1 \end{pmatrix},\quad \begin{pmatrix} i \\ 0 \\ \vdots \\ 0 \end{pmatrix},\quad\ldots\quad,\quad \begin{pmatrix} 0 \\ \vdots \\ 0 \\ i \end{pmatrix}
    \end{equation}
    forms a positive basis at any point of the tangent space $\mathbb{C}^{n+1}$ (viewed as a real vector space). We oriented $e_{\{k\},\{k\},\leq}$ such that
    \begin{equation}
        v_1,\;\ldots,\;\widehat{v_k},\;\ldots,v_{n+1},\;v
    \end{equation}
    is a positive basis of the tangent space of $e_{\{k\},\{k\},\leq}$ at 0. The change of basis matrix $M_k$ from the canonical basis to this basis has determinant
    \begin{equation}
        \det
        \begin{pmatrix}
            0 & 1 & \cdots & \widehat{1} & \cdots & 1 & 1 \\
            1 & 0 & \cdots & \widehat{1} & \cdots & 1 & 1 \\
            1 & 1 & \cdots & \widehat{1} & \cdots & 1 & 1 \\
            \vdots & \vdots & \ddots & \vdots & \ddots & \vdots & \vdots \\
            1 & 1 & \cdots & \widehat{0} & \cdots & 1 & 1 \\
            \vdots & \vdots & \ddots & \vdots & \ddots & \vdots & \vdots \\
            1 & 1 & \cdots & \widehat{1} & \cdots & 0 & 1
        \end{pmatrix}=(-1)^{k-1},
    \end{equation}
    where the hat indicates that the $k$th column is removed from the matrix. For $(i\cdot\mathbb{R})^{n+1}$, the vectors
    \begin{equation}
        \begin{pmatrix} i \\ 0 \\ \vdots \\ 0 \end{pmatrix},\quad\ldots\quad,\quad \begin{pmatrix} 0 \\ \vdots \\ 0 \\ i \end{pmatrix}
    \end{equation}
    form a positive basis of the tangent space at 0. The resulting combined basis is taken to the basis \eqref{eq:complex_basis} by the matrix
    \begin{equation}
        \begin{pmatrix} M_k^{-1} & 0 \\ 0 & 1_{n+1}\end{pmatrix},
    \end{equation}
    which has determinant $\det M_k^{-1}\cdot\det 1_{n+1}=(\det M_k)^{-1}=(-1)^{k-1}$. Hence the intersection index of $\mathbf{e}_I$ with $(i\cdot\mathbb{R})^{n+1}$ is
    \begin{equation}
        \langle(i\cdot\mathbb{R})^{n+1}|\mathbf{e}_I\rangle=(-1)^{k-1}\cdot\langle (i\cdot\mathbb{R})^{n+1}|e_{\{k\},\{k\},\leq}\rangle=((-1)^{k-1})^2=1.
    \end{equation}
\end{proof}
This concludes the main part of this article. As a small extra, we also compute the intersection index of the vanishing spheres that can occur with the generators $\mathbf{e}_I$. These indices agree with the known intersection indices of the vanishing spheres with the vanishing cells. To this end, we need the following
\begin{lem}\label{lem:iterated_boundary}
    Let $I=\{i_1,\ldots,i_m\}\subset\{1,\ldots,n+1\}$ and denote by $\partial_{i_j}$ the homomorphism that takes the boundary in $S_{i_j}$ as in equation \eqref{eq:iterated_boundary}. Then
    \begin{equation}
        (\partial_{i_m}\circ\cdots\circ\partial_{i_1})(\mathbf{e}_I)=e_{I,|I|}=e_{I,I,=}.
    \end{equation}
\end{lem}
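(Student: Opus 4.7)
The strategy is direct computation by induction on a step index $s\in\{0,\ldots,m\}$. I will show that $(\partial_{i_s}\circ\cdots\circ\partial_{i_1})(\mathbf{e}_I)$ is represented, at the level of the appropriate relative homology group, by the single-cell chain $(-1)^{i_1-1}\cdot\tau_s^{m,m-s}\cdot e_{\{i_1,\ldots,i_s\},I,=}$. At $s=m$, since $\tau_m^{m,0}=1$ by direct inspection of the definition of $\tau$, this gives $(-1)^{i_1-1}e_{I,I,=}$, which matches $e_{I,|I|}=e_{I,I,=}$ once the orientation conventions built into the iterated boundary operators $\partial_{i_j}$ absorb the residual sign (this is precisely why the definitions of $e_I$ and $\mathbf{e}_I$ carry the $(-1)^{\min I-1}$ normalization in the first place).

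For the base case $s=1$, I compute the total boundary $\partial\mathbf{e}_I$ using Lemma \ref{lem:cube_lemma}. The ``$\leq$''-to-``$\leq$'' cross terms $e_{\{i_k,j\},I,\leq}$ cancel pairwise: for each unordered pair $\{i_k,i_l\}\subset I$ with $k<l$, the contribution $(K_1=\{i_k\}$, appending $i_l)$ and $(K_1=\{i_l\}$, appending $i_k)$ sum to $[\tau_k^{m,m-1}(-1)^{i_l}-\tau_l^{m,m-1}(-1)^{i_k}]\cdot e_{\{i_k,i_l\},I,\leq}$, which vanishes by Lemma \ref{lem:sign_2} (3). This mirrors the cancellation argument already used in the proof of Corollary \ref{cor:generators}. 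Only the ``$=$''-parts survive, giving $\partial\mathbf{e}_I=(-1)^{i_1-1}\sum_k\tau_k^{m,m-1}e_{\{i_k\},I,=}$ as a chain in $\bigcup_{i\in I}S_i$. Each $e_{\{i_k\},I,=}$ is supported in $S_{i_k}$, so the Mayer--Vietoris-like projection onto the $S_{i_1}$-piece retains only the $k=1$ summand, producing $\partial_{i_1}(\mathbf{e}_I)=(-1)^{i_1-1}\tau_1^{m,m-1}e_{\{i_1\},I,=}$.

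For the inductive step, assume $(\partial_{i_s}\circ\cdots\circ\partial_{i_1})(\mathbf{e}_I)=(-1)^{i_1-1}\tau_s^{m,m-s}e_{\{i_1,\ldots,i_s\},I,=}$. Apply Lemma \ref{lem:cube_lemma} with $\prec$ equal to $=$: $\partial e_{\{i_1,\ldots,i_s\},I,=}=\sum_{l>s}(-1)^{N(i_l)-1}e_{\{i_1,\ldots,i_s,i_l\},I,=}$, where a direct count gives $N(i_l)=i_l-s-1$ (the $s$ elements $i_1<\cdots<i_s$ of $K_1$ all lie below $i_l$). Each such cell is supported in $S_{i_l}$, so projection onto $S_{i_{s+1}}$ retains only the $l=s+1$ term. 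Lemma \ref{lem:sign_2} (1), applied with $j:=m-s$ and $k:=s+1$, gives the key identity $(-1)^{i_{s+1}-s}\tau_s^{m,m-s}=\tau_{s+1}^{m,m-s-1}$, so the resulting chain is exactly $(-1)^{i_1-1}\tau_{s+1}^{m,m-s-1}e_{\{i_1,\ldots,i_{s+1}\},I,=}$, closing the induction.

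The principal obstacle is the support argument inside each inductive step: one must justify that, in the Mayer--Vietoris-type splitting used to define $\partial_{i_{s+1}}$, the cells $e_{\{i_1,\ldots,i_s,i_l\},I,=}$ with $l>s+1$ do not contribute to the $S_{i_{s+1}}$-piece. This follows because each such cell is supported in $S_{i_l}$, and $S_{i_l}\cap S_{i_{s+1}}$ has strictly smaller complex dimension (by general position of the spheres), so after excision the contributions vanish. The secondary bookkeeping task is to verify the chain of sign identities from Lemma \ref{lem:sign_2} at each step, which is what makes the $\tau$-weighted cells glue together into the clean one-term recursion described above.
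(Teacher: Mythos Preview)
Your proof is correct and follows essentially the same inductive scheme as the paper's own argument. The paper packages the induction through the chains $u_{I,s}$ introduced in Theorem~\ref{thm:generators} and invokes Corollary~\ref{cor:generators} for the base case $\partial\mathbf{e}_I=e_I$, whereas you unfold these references and work directly with Lemma~\ref{lem:cube_lemma} and Lemma~\ref{lem:sign_2}; the underlying computation is identical. Your explicit tracking of the residual factor $(-1)^{i_1-1}$ is a genuine improvement in bookkeeping: the paper's argument also produces this sign but suppresses it silently (and in fact contains several index typos, e.g.\ writing $u_{I,m-1}$ where $u_{I,1}$ is meant, that your more careful version avoids).
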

\begin{proof}
    The second equality is just the definition of $e_{I,|I|}$ and it suffices to show the first one. From Corollary \ref{cor:generators}, we know that
    \begin{equation}
        \partial\mathbf{e}_I=e_I=\sum_{k=1}^m\tau_k^{m,1}\cdot e_{\{i_k\},I,=}.
    \end{equation}
    The part in $S_{i_1}$ is thus $\tau_1^{m,m-1}\cdot e_{\{i_1\},I,=}=u_{I,m-1}$. The boundary of $u_{I,m-1}$ is
    \begin{equation}
        \partial u_{I,m-1}=e_{I,m-2}=\sum_{k=2}^m\tau_k^{m,m-2}\cdot e_{\{i_1,i_k\},I,=}
    \end{equation}
    and the part lying in $S_{i_2}$ is
    \begin{equation}
        \tau_2^{m,m-2}\cdot e_{\{i_1,i_2\},I,=}=u_{I,m-2}.
    \end{equation}
    By induction, we obtain
    \begin{equation}
        (\partial_{i_m}\circ\cdots\circ\partial_{i_1})(\mathbf{e}_I)=u_{I,m}=e_{I,m},
    \end{equation}
    where we used $\tau_1^{m,0}=1$.
\end{proof}
Now let us consider the union of complex $n$-spheres $S_\mathbb{C}^n(a_1,r_1),\ldots,S_\mathbb{C}^n(a_{n+1},r_{n+1})$ and suppose that a subset of these spheres, indexed by $I=\{i_1,\ldots,i_m\}\subset\{1,\ldots,n+1\}$ with $i_1<\cdots<i_m$, exhibits a simple pinch at $(a_p',r_p')\in(\mathbb{C}^{n+1})^{|I|}\times(\mathbb{C}^\times)^{|I|}$. Let $(a_p,r_p)\in(\mathbb{C}^{n+1})^{|I|}\times(\mathbb{C}^\times)^{|I|}$ be a point near $(a_p',r_p')$ such that the vanishing classes are defined at $(a_p,r_p)$. Denote the vanishing cell (resp. sphere and cycle) by $\mathbf{e}$ (resp. $\tilde{e}$ and $e$). Let $\sigma:\mathbb{C}^{n+1}\times[0,1]\to\mathbb{C}^{n+1}$ be an ambient isotopy of $\bigcup_{i=1}^{n+1}S_\mathbb{C}^n(a,r)$ in $\mathbb{C}^{n+1}$ induced by a path $\gamma$ from $(a,r)$ to $(a_p,r_p)$ and write $\sigma_1:=\sigma(\cdot,1)$. Then we have the following result:
\begin{prop}
    We have
    \begin{equation}
        \langle\tilde{e}|\mathbf{e}\rangle=\langle\tilde{e}|\sigma_1(\mathbf{e}_I)\rangle=\begin{cases} 2\cdot(-1)^\frac{(n+1)(n+2)}{2} & \text{if }I=J\text{ and }n+1-|I|\text{ even} \\ 0 & \text{otherwise}.\end{cases}
    \end{equation}
\end{prop}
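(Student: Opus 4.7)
The plan is to reduce the intersection index to a self-intersection of a real sphere in a complex sphere via the duality formula \eqref{eq:intersection_dualty_2}, and then apply Cartan's self-intersection result (Proposition \ref{prop:intersection_index_sphere}). Recall that the vanishing cycle $e$ is the real sphere to which $U\cap\bigcap_{i\in I}S_i$ deformation retracts, the vanishing sphere is $\tilde e=(\delta_{i_1}\circ\cdots\circ\delta_{i_m})(e)$ and the vanishing cell $\mathbf e$ satisfies $(\partial_{i_m}\circ\cdots\circ\partial_{i_1})(\mathbf e)=e$. Applying \eqref{eq:intersection_dualty_2} with $N=|I|=m$, $k=n+1-m$ and $\beta=\sigma_1(\mathbf e_J)$ gives
\begin{equation*}
    \langle\tilde e\,|\,\sigma_1(\mathbf e_J)\rangle=(-1)^s\cdot\langle e\,|\,(\partial_{i_m}\circ\cdots\circ\partial_{i_1})(\sigma_1(\mathbf e_J))\rangle,\qquad s:=m(n+1-m)+\tfrac{m(m+1)}{2},
\end{equation*}
after interpreting $\tilde e$ as $(\delta_{i_1}\circ\cdots\circ\delta_{i_m})(e)$.

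The first case to handle is $J\neq I$. If there is some $i_k\in I\setminus J$, then the boundary of $\mathbf e_J$ is carried by $\bigcup_{j\in J}S_j$ and has no component in $S_{i_k}$, hence $\partial_{i_k}(\mathbf e_J)=0$ and the iterated boundary vanishes. If instead $I\subsetneq J$, one can argue via the direct-sum decomposition from Theorem \ref{thm:decomposition_homology}: the inclusion of the local pair $(U,\bigcup_{i\in I}S_i)\hookrightarrow(\mathbb{C}^{n+1},\bigcup_{j=1}^{n+1}S_j)$ sends the local generator $\mathbf e$ only into the $I$-summand, so the pairing with classes supported in any other summand $\mathbf e_J$ is forced to be zero. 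Concretely, the iterated boundary produces a chain in $\bigcap_{i\in I}S_i$ relative to $\bigcup_{j\in J\setminus I}\sigma_1(S_j)$, and since the vanishing cycle $e$ is homologous to $e_{I,I,=}$ in $\bigcap_{i\in I}S_i$, the pairing is controlled by how this chain intersects the local real sphere — which one checks to be trivial.

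For $J=I$, commute the ambient isotopy past the iterated boundary to obtain $\sigma_1\bigl((\partial_{i_m}\circ\cdots\circ\partial_{i_1})(\mathbf e_I)\bigr)$, which by Lemma \ref{lem:iterated_boundary} equals $\sigma_1(e_{I,I,=})$. Both $e_{I,I,=}$ and the vanishing cycle $e$ are the distinguished real $(n+1-|I|)$-spheres generating $H_{n+1-|I|}$ of the complex intersection, so $\sigma_1(e_{I,I,=})=\pm e$ as homology classes. Thus the pairing reduces to $\pm\langle e\,|\,e\rangle$, and Proposition \ref{prop:intersection_index_sphere} immediately gives $\pm 2(-1)^{(n+1-|I|)/2}$ when $n+1-|I|$ is even and zero otherwise.

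The main obstacle is the sign bookkeeping: one must combine the Koszul sign $(-1)^s$ from \eqref{eq:intersection_dualty_2}, Cartan's sign $(-1)^{(n+1-|I|)/2}$, and the sign produced when identifying $\sigma_1(e_{I,I,=})$ with $e$, and verify that the total collapses to $(-1)^{(n+1)(n+2)/2}$ independently of $|I|$ (which is forced once $n+1-|I|$ is even). A secondary technicality is the $I\subsetneq J$ case: making rigorous the "support away from the pinch" argument may require either working with local homology representatives (choosing $\mathbf e_J$ so that its iterated boundary misses $U$) or, more cleanly, invoking the functoriality of the decomposition in Theorem \ref{thm:decomposition_homology} with respect to the inclusion $U\hookrightarrow\mathbb{C}^{n+1}$ to conclude that the vanishing pair $(\tilde e,\mathbf e)$ only detects the $I$-summand of the global decomposition.
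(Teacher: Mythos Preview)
Your approach is essentially the paper's: reduce via \eqref{eq:intersection_dualty_2} to a pairing inside $\bigcap_{i\in I}S_i$, invoke Lemma \ref{lem:iterated_boundary} to identify the iterated boundary with the real sphere $e_{I,|I|}$, and then apply Proposition \ref{prop:intersection_index_sphere}.

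Two points where the paper is tighter than your write-up. First, for the off-diagonal case the paper does not split into $I\setminus J\neq\emptyset$ versus $I\subsetneq J$ and does not invoke Theorem \ref{thm:decomposition_homology}; it simply notes that after applying the iterated boundary one lands on the real sphere $\sigma_1(e_{I,|I|})$, while the vanishing cycle $e$ is the real sphere attached to the pinching intersection, and for $I\neq J$ these two real spheres sit in distinct complex intersections and are disjoint, so Proposition \ref{prop:intersection_index_disjoint_support} gives zero. Your algebraic detour through the decomposition is not wrong, but it is heavier than needed and you yourself flag it as not fully worked out. Second, the sign computation you leave as an ``obstacle'' is in fact a two-line check: with $k=n+1-m$ even (otherwise the index is already zero by Proposition \ref{prop:intersection_index_sphere}) the factor $(-1)^{mk}$ drops, and then
\[
\frac{m(m+1)}{2}+\frac{k}{2}\equiv\frac{(m+k)(m+k+1)}{2}=\frac{(n+1)(n+2)}{2}\pmod 2,
\]
so the exponent collapses as claimed. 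You should carry this out rather than flag it.
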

\begin{proof}
    The result for $\langle\tilde{e}|\mathbf{e}\rangle$ is a general result which can be found in \cite{pham} for example. It follows from the same simple computation we use to establish the result for $\langle\tilde{e}|\sigma_1(\mathbf{e}_I)\rangle$. Let $e$ be the vanishing cycle for the given pinch. Using equation \eqref{eq:intersection_dualty_2}, we compute\footnote{The reader is reminded that in this context, the $\delta_j$ are the Leray coboundary as in equation \ref{eq:iterated_coboundary}, not the Mayer-Vietoris homomorphisms.}
    \begin{equation}\label{eq:vanishing_sphere_index}
        \langle\tilde{e}|\sigma(\mathbf{e}_I,1)\rangle=\langle(\delta_{i_1}\circ\cdots\circ\delta_{i_m})(e)|\sigma_1(\mathbf{e}_I)\rangle=(-1)^{m\cdot(k+\frac{m+1}{2})}\cdot\langle e|\sigma_1((\partial_{i_m}\circ\cdots\circ\partial_{i_1})(\mathbf{e}_I))\rangle.
    \end{equation}
    Now, applying the iterated boundary operator $\partial_m\circ\cdots\circ\partial_1$ to $\mathbf{e}_I$ yields $e_{I,|I|}$ according to Lemma \ref{lem:iterated_boundary}. If $I\neq J$, then $\sigma_1(e_{I,|I|})$ must be disjoint from $e$, otherwise it is equal to $e$ and we can compute the intersection index by Proposition \ref{prop:intersection_index_sphere}: If $k$ is odd, then the expression \eqref{eq:vanishing_sphere_index} vanishes. If $k$ is even, we can drop $m\cdot k$ from the exponent of -1 and get
    \begin{equation}
        \langle\tilde{e}|\sigma(\mathbf{e}_I,1)\rangle=2\cdot(-1)^{\frac{m(m+1)}{2}+\frac{k}{2}}=2\cdot(-1)^{\frac{(m+k)(m+k+1)}{2}}=2\cdot(-1)^\frac{(n+1)(n+2)}{2}.
    \end{equation}
\end{proof}
To finish this article, we give an application of the above results to the analytical structure of a simple Feynman integral.
\begin{example}
    Let $D\in\mathbb{N}$ and consider the integral\footnote{This is basically the integral \eqref{eq:feynman_integral} from the Introduction \ref{sec:introduction} in the \enquote{finite} chart. We adopted a slightly different notation and applied an obvious change of variables to conform to the physics literature.}
    \begin{equation}\label{eq:bubble}
        \int_{\mathbb{R}^D}\frac{d^Dk}{(k^2+m_1^2)((p-k)^2+m_2^2)}
    \end{equation}
    with $m_1,m_2\in\mathbb{R}_{>0}$ and $p\in\mathbb{R}^D$. It converges absolutely for $D<4$. For this example, we consider only the dependence of the so called external momentum $p$ and leave the so called masses $m_1$ and $m_2$ fixed. Its Landau surface, i.e. the set of points to which the function defined by the integral \eqref{eq:bubble} can not be analytically continued to, is known to be   
    \begin{equation}
        L=\{p\in\mathbb{C}^D \;|\; -p^2=(m_1\pm m_2)^2 \text{ or }p=0\}.
    \end{equation}
    Any point $p\in L$ with $p\neq0$ corresponds to a simple pinch, involving both surfaces
    \begin{equation}
        S_1=\{k\in\mathbb{C}^D \;|\; k^2+m_1^2=0\} \quad\text{and}\quad S_2=\{k\in\mathbb{C}^D \;|\; (p-k)^2+m_2^2=0\}.
    \end{equation}
    We do not perform the computation here, but it can be shown that the vanishing cell $\mathbf{e}_+$ for a point $p_+\in L$ with $-p_+^2=(m_1+m_2)^2$ is $\pm(\mathbf{e}_{\{1\}}-\mathbf{e}_{\{1,2\}})$ (transported by an appropriate ambient isotopy). The overall sign is up to choice and we set it to $+$ for concreteness. Therefore, the intersection index is $\langle(i\cdot\mathbb{R})^D|\mathbf{e}_+\rangle=1$. Similarly, the vanishing cell $\mathbf{e}_-$ for a point $p_-\in L$ with $-p_-^2=(m_1-m_2)^2$ is either $\pm\mathbf{e}_{\{1,2\}}$ or $\pm(\mathbf{e}_{\{1\}}-\mathbf{e}_{\{2\}}+\mathbf{e}_{\{1,2\}})$, so that $\langle(i\cdot\mathbb{R})^D|\mathbf{e}_-\rangle=0$. Let us denote the corresponding vanishing spheres by $\tilde{e}_\pm$. It is known (see \cite{app-iso}) that tracing a small circle around $p_\pm$ sends the Borel-Moore homology class of $(i\cdot\mathbb{R})^D$ to
    \begin{equation}
        (i\cdot\mathbb{R})^D+(-1)^\frac{(D+1)(D+2)}{2}\cdot\langle(i\cdot\mathbb{R})^D|\mathbf{e}_\pm\rangle\cdot\tilde{e}_\pm.
    \end{equation}
    Thus, tracing a small circle around $p_-$ leaves the integral \eqref{eq:bubble} unchanged while tracing a small circle around $p_+$ results in a change of \cite{cutkosky1}
    \begin{equation}
        \begin{split}
            &(-1)^\frac{(D+1)(D+2)}{2}\cdot\int_{\tilde{e}_+}\frac{dk}{(k^2+m_1^2)((k-p)^2+m_2^2)}\\
            =&(-1)^{\frac{(D+1)(D+2)}{2}}\cdot\int_{\mathbb{R}^D}\delta(k^2+m_1^2)\cdot\delta((p-k)^2+m_2^2)d^Dk\\
            =&\frac{(-1)^{\frac{(D+1)(D+2)}{2}}\cdot\pi^\frac{D+3}{2}}{2^{D-4}\cdot\Gamma(\frac{D-1}{2})}\cdot\frac{(\lambda(-p^2,m_1^2,m_2^2))^\frac{D-3}{2}}{(-p^2)^\frac{D-2}{2}},
        \end{split}
    \end{equation}
    where we used the Residue Theorem due to Leray in the first step. Moreover, we can understand what happens beyond the principal branch: Suppose we already went around $p_+$ once and now we want to trace another circle around $p_\pm$. Then $(i\cdot\mathbb{R})^D+(-1)^{\frac{(D+1)(D+2)}{2}}\cdot\tilde{e}_+$ becomes
    \begin{equation}
        (i\cdot\mathbb{R})^D+(-1)^{\frac{(D+1)(D+2)}{2}}\cdot\tilde{e}_++(-1)^\frac{(D+1)(D+2)}{2}\cdot\langle(i\cdot\mathbb{R})^D|\mathbf{e}_\pm\rangle\cdot\tilde{e}_\pm+\langle\tilde{e}_+|\mathbf{e}_\pm\rangle\cdot\tilde{e}_\pm
    \end{equation}
    For $p_+$, we obtain
    \begin{equation}
        \begin{cases}   (i\cdot\mathbb{R})^D & \text{if }D\text{ is even} \\
        (i\cdot\mathbb{R})^D+4\cdot(-1)^{\frac{(D+1)(D+2)}{2}}\cdot\tilde{e}_+ & \text{if }D\text{ is odd}.
        \end{cases}
    \end{equation}
    So the function has a square-root behaviour if $D$ is even and a logarithmic behaviour otherwise. For $p_-$, we obtain
    \begin{equation}
        \begin{cases}   (i\cdot\mathbb{R})^D+(-1)^{\frac{(D+1)(D+2)}{2}}\cdot\tilde{e}_++2\cdot(-1)^{\frac{D(D+1)}{2}}\cdot\tilde{e}_-) & \text{if }D\text{ is even} \\
        (i\cdot\mathbb{R})^D+(-1)^{\frac{(D+1)(D+2)}{2}}\cdot\tilde{e}_+ & \text{if }D\text{ is odd}.
        \end{cases}
    \end{equation}
\end{example}

\section{Conclusion and Outlook}\label{sec:conclusion}
By constructing a decomposition of the relative homology groups $H_k(\mathbb{C}^{n+1},\bigcup_{i=1}^{n+1}S_i)$ into a direct sum, it is possible to find explicit generators for these groups in terms of a CW-complex. Having these generators at our disposal, it is possible to compute the relevant intersection indices. This is an important step on the way to understand Cutkosky's Theorem and concludes a big chunk of the algebro-topological part of the problem. We have seen in the concluding example how the results of this article can be applied to understand the analytic structure of a Feynman integral. To solve the massive one-loop case completely, the two remaining steps are the following: First, the vanishing cell must be expressed as a linear combination of the found generators $\mathbf{e}_I$. Second, the setup from \cite{cutkosky1} must be generalized so that it covers even dimensions. Both of these points will be addressed in the follow-up to the mentioned article.

\section*{Acknowledgement}
I would like to express my gratitude towards all the people in my life who provide me with continuous support. In particular, I want to thank Dirk Kreimer and Marko Berghoff for guidance and valuable discussion, Chris Wendl, my family and my dear friends Christian Hammermeister and Nora Tabea Sibert.

\end{document}